\crefname{algocf}{alg.}{algs.}
\Crefname{algocf}{Algorithm}{Algorithms}
\renewcommand{\eqref}[1]{\hyperref[#1]{(\ref*{#1})}}
\newtheorem{theorem}{Theorem}[section]
\newtheorem{corollary}[theorem]{Corollary}
\newtheorem{lemma}[theorem]{Lemma}
\newtheorem{definition}[theorem]{Definition}
\newtheorem{fact}[theorem]{Fact}
\newtheorem{remark}[theorem]{Remark}
\newcounter{example}[section]
\newenvironment{proof}{\textbf{Proof:}}{\hfill$\square$}
\newcommand{\I}{\mathbb{I}}
\newcommand{\brak}[1]{\lbrace #1 \rbrace}
\newcommand{\trho}{\tilde{\rho}}
\newcommand{\tomega}{\tilde{\omega}}
\newcommand{\sub}{\textsubscript}
\newcommand{\gol}[1]{\accentset{\circ}{#1}}
\newcommand{\norm}[1]{\left\lVert#1\right\rVert}
\DeclareMathOperator{\Tr}{Tr}
\DeclareMathOperator{\E}{\mathbb{E}}
\DeclarePairedDelimiter{\abs}{\lvert}{\rvert}
\newcommand{\one}{\leavevmode\hbox{\small1\kern-3.8pt\normalsize1}}
\title{{\bf
One-shot multi-sender decoupling and simultaneous 
decoding for the quantum MAC
}}
\author{
Sayantan Chakraborty${}^*$
\and
Aditya Nema${}^*$
\and
 Pranab Sen\thanks{
School of Technology and System Science, 
Tata Institute of Fundamental Research, Mumbai, India.
Email: {\sf
\{kingsbandz, aditya.nema30, pranab.sen.73\}@gmail.com
}
}
}
\date{}
\begin{document}
\maketitle

\begin{abstract}
In this work, we prove a novel one-shot ‘multi-sender’ 
decoupling theorem generalising Dupuis’ result. We start off with a 
multipartite quantum state, say on $A_1 A_2 R$, where $A_1$, 
$A_2$ are treated as 
the two ‘sender’ systems and $R$ is the reference system. We apply 
independent Haar random unitaries in tensor product on $A_1$ and $A_2$ and 
then send the resulting systems through a quantum channel. We want the 
channel output $B$ to be almost in tensor with the untouched reference $R$. 
Our main result shows that this is indeed the case if suitable entropic 
conditions are met. An immediate application of our main result is to 
obtain a one-shot simultaneous decoder for sending quantum information 
over a $k$-sender entanglement unassisted quantum multiple access 
channel (QMAC). The rate region achieved by this decoder is the natural 
one-shot quantum analogue of the pentagonal classical rate region. 
Assuming a simultaneous smoothing conjecture, this one-shot rate 
region approaches the optimal rate region of Yard et al. \cite{Yard_MAC} 
in the asymptotic iid limit. 
Our work is the first one to obtain a non-trivial 
simultaneous decoder for the QMAC with limited entanglement assistance
in both one-shot and asymptotic
iid settings; previous works used unlimited entanglement assistance. 
\end{abstract}

\section{Introduction}
The paradigm of {\em decoupling}, that is the process of removing 
correlations between systems, has turned out to be a powerful and
general technique for obtaining inner bounds for transmission of quantum
information in quantum Shannon theory.
Its importance can be seen by its role in obtaining coding strategies
for sending quantum information over a quantum channel, one of the
most basic tasks in quantum Shannon theory. Let 
$\ket{\psi}^{RA}$ be a pure state, where $R$ is the so-called
{\em reference} system that will be untouched by all operations of our
protocol. We want to isometrically encode the {\em message} 
system $A$ into a system $A'G$ and send $A'$ through
a noisy quantum channel $\mathcal{N}^{A'\rightarrow B}$ so that  the
receiver can decode the output $B$ to obtain a state close to 
$\ket{\psi}^{RA}$.
Consider the Stinespring dilation of $\mathcal{N}$, namely 
$\mathcal{U}^{A'\rightarrow BE}_{\mathcal{N}}$, where the system $E$ 
is treated as the purifying {\em environment}. Consider the global 
pure state 
$\ket{\psi}^{RBEG}$. Suppose the following {\em decoupling} condition
holds: $\psi^{REG}\approx 
\psi^R \otimes \sigma^{EG}$ for some state $\sigma$ on $EG$. Let $\I$
denote the identity superoperator. Then by
Uhlmann's theorem one can immediately conclude that 
there exists a decoding {\em isometry} $D^{B\rightarrow A F}$ such that 
\begin{align*}
(D^{B\rightarrow A F} \otimes \I^{REG} \ket{\psi}^{RBEG} \approx 
\ket{\psi}^{RA}\ket{\sigma}^{FEG}, 
\end{align*}
where $\ket{\sigma}^{FEG}$ is a purification of $\sigma^{EG}$.
Thus if a suitable isometric encoder of $A$ into $A'G$ can be found
which satisfies the above decoupling condition, we can achieve quantum
information transmission over a quantum channel without even constructing
an explicit decoder. In other words, decoupling has allowed us to solve
a quantum coding problem doing only half the work as compared 
to the classical setting!

The decoupling paradigm was used to obtain a protocol for the
Fully Quantum Slepian Wolf (FQSW) problem \cite{mother_protocol} 
aka the mother protocol of
quantum Shannon theory as it  can in turn be
used in a black-box fashion to obtain many other protocols for useful
quantum information theoretic tasks in the asymptotic iid setting.
An useful and powerful one-shot decoupling theorem, generalising many
earlier decoupling constructions including that of \cite{mother_protocol},
was obtained by Dupuis \cite{Dupuis_thesis}. We shall refer to this result
henceforth as the {\em single sender decoupling theorem}. 
A high level description follows.
Suppose Alice holds the $A$ 
register of a bipartite mixed state $\rho^{AR}$, where $R$ is the
reference system. Alice applies a Haar random unitary $U$ on $A$ 
followed by a 
completely positive (CP) map $\mathcal{T}^{A\rightarrow E}$. Then
if certain entropic conditions are met, the resulting state typically is
close to the decoupled state $\omega^E \otimes \rho^R$ where $\omega^E$
is a state depending only on the channel $\mathcal{T}$ and not on the
state $\rho^{AR}$ nor on the unitary $U^A$.

The intuition described in the first paragraph of the introduction
above can be made precise and indeed Dupuis' used his single sender 
decoupling theorem to obtain nearly optimal one shot inner bounds for 
sending quantum information over a point to point quantum channel with 
limited entanglement assistance \cite{Dupuis_thesis}, generalising an 
earlier result by
Buscemi and Datta \cite{Buscemi_Datta} for the same problem without
entanglement assistance. In the asymptotic iid setting,
Dupuis' result recovers  the well known regularised
coherent information inner bound when there is no entanglement
assistance \cite{Lloyd, Shor_Notes, Devetak}, and the well known 
mutual information inner bound when there is unlimited entanglment 
assistance \cite{EntanglementAssistedBennett}.

The main contribution of this work is the generalisation of the 
single sender decoupling theorem to the case of independent 
{\em multiple senders}. To 
be precise, we prove a theorem of the following kind. Consider the 
multipartite 
state $\rho^{A_1A_2\ldots A_k R}$ where the users Alice\sub{1}, 
Alice\sub{2} and so on only have access to their respective registers 
$A_1, A_2, \ldots A_k$. Let Alice\sub{i} apply a Haar random
unitary $U_i$ to her register $A_i$ independently of the other Alices.
After the individual unitaries are applied, a CP map 
$\mathcal{T}^{A_1A_2\ldots A_k\rightarrow E}$ is also applied. 
We show, if certain entropic conditions are met, that the resulting
state typically is close to the decoupled state $\omega^E \otimes \rho^R$ 
where $\omega^E$
is a state depending only on the channel $\mathcal{T}$ and not on the
state $\rho^{AR}$ nor on the unitaries $U_i^{A_i}$.

We prove our multi-sender decoupling theorem by suitably extending 
Dupuis' proof of his single sender decoupling theorem.
As will become clear during the course of its proof, one of the main 
bottlenecks in proving such a theorem turns out to be defining and
using the correct one-shot 
entropic quantities in order to bound the error in the protocol.
We show that a modification of the conditional 
R\'{e}nyi $2$-entropy defined
recently in \cite{Nema_Sen} turns out to be the right 
quantity for our purpose. Our simultaneous decoding inner bound is
thus stated in terms of a {\em modified R\'{e}nyi $2$-coherent 
information} and its smoothed version derived from the above quantity.
A similar multi sender decoupling theorem
was earlier proved by Dutil \cite{dutil:phd}. However his formulation
did not use the correct entropic
quantity required to get strong bounds in applications e.g. inner bounds
for quantum multiple access channels.

As an important application of our multi sender decoupling theorem,
we consider the problem of proving inner bounds for the 
Quantum Multiple Access Channel (QMAC) with limited entanglement
assistance in the one-shot setting i.e. when the channel is used only once.
Previous works only considered the QMAC in the asymptotic iid
setting, either with no entanglement assistance \cite{Yard_MAC} or
with unlimited entanglement assistance \cite{hsieh:qmac}. In a
very recent companion paper \cite{chakraborty:ratesplitting}, one shot 
inner bounds were shown for the QMAC with limited entanglement 
assistance which approach the previously known bounds in the 
asymptotic iid setting both without entanglement assistance as well as
with unlimited entanglement assistance. However all these works use
{\em successive cancellation decoding} to obtain their inner bounds. 
Successive cancellation tends to have faster decoding strategies than
{\em simultaneous decoding}. However it also has some drawbacks like
the difficulty of clock synchronisation that arises when used together
with {\em time sharing} in the asymptotic iid setting. This drawback
can be eliminated by a technique called {\em rate splitting} first
developed for the classical asymptotic iid setting by 
\cite{Rate_Splitting_Urbanke} and later extended to the one-shot quantum
setting by \cite{chakraborty:ratesplitting}. However using rate splitting
in the one shot setting brings a new feature which is aesthetically
unappealing viz. the obtained inner bound is a subset of the familiar
`pentagonal' inner bound for the MAC. Note that the pentagonal inner bound
holds both in the classical
asymptotic iid setting \cite{Ahlswede:mac, Liao:mac}, as well as for
transmitting classical information over a quantum MAC both with and
without entanglement assistance \cite{Sen18_2}. Proving a (super) pentagonal
inner bound in the one shot setting requires simultaneous decoding.

Our multi sender decoupling theorem allows us, for the first time,
to get a simultaneous
decoder for sending quantum information over a QMAC with limited
entanglement assistance. This allows us to obtain the 
(super) pentagonal rate region as shown in Figure \ref{fig:pentagon}. 
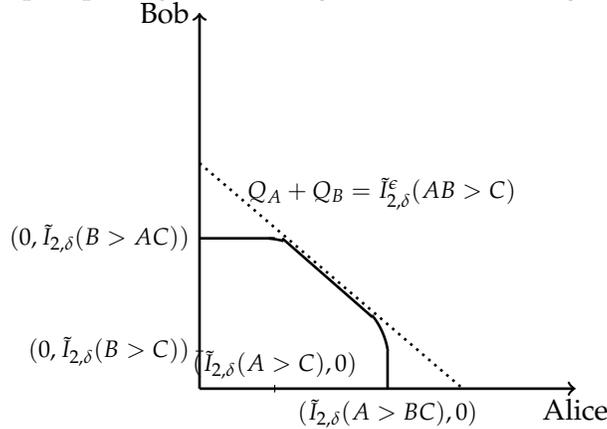
\begin{figure}[hhh]
\vspace*{-5mm}
\begin{center}
\begin{tikzpicture}
\draw [line width=1 pt,->] (1,-2) -- (1,3); 
\draw [line width=1 pt,->] (1,-2) -- (6,-2); 
\draw [line width=1 pt] (3.5,-2) -- (3.5,-1.5); 
\draw [line width=1 pt] (1,0)-- (2,0); 
\draw [line width=1 pt,dotted] (1,1) -- (4.5,-2); 
\draw [shift={(1.85,-1.0)},line width=1 pt]  
plot[domain=1.5:1.3,variable=\t]
({1*cos(\t r)},{1*sin(\t r)});
\draw [shift={(2.515,-1.65)},line width=1 pt]  
plot[domain=0.15:0.68,variable=\t]
({1*cos(\t r)},{1*sin(\t r)});
\draw [line width=1 pt] (2.1,0) -- (3.3,-1.05); 
\draw (1,3) node[anchor=east] {Bob}; 
\draw (6,-2) node[anchor=north] {Alice}; 
\draw (0.95,-1.5) -- (1.05, -1.5);
\draw (1,-1.5) node[anchor=east] 
{{\footnotesize $(0,\tilde{I}_{2,\delta}(B > C))$}}; 
\draw (1,0) node[anchor=east] 
{{\footnotesize $(0,\tilde{I}_{2,\delta}(B > AC))$}}; 
\draw (2,-2.05) -- (2, -1.95);
\draw (2,-2) node[anchor=south] 
{{\footnotesize $(\tilde{I}_{2,\delta}(A > C),0)$}}; 
\draw (3.5,-2) node[anchor=north] 
{{\footnotesize $(\tilde{I}_{2,\delta}(A > BC),0)$}}; 
\draw (1.5,0.6) node[anchor=west] 
{{\footnotesize $Q_A+Q_B= \tilde{I}_{2,\delta}^\epsilon(AB > C)$}}; 
\end{tikzpicture} 
\end{center}
\vspace*{-5mm}
\caption{Inner bound for the unassisted QMAC obtained by simultaneous
decoding. The quantity $\tilde{I}_{2,\delta}(\cdot > \cdot)$ is the 
modified R\'{e}nyi $2$-coherent information and 
$\tilde{I}_{2,\delta}^\epsilon(\cdot > \cdot)$ its smoothed version
defined in Section~\ref{prelims}.
$O(\log \epsilon)$ additive factors have been ignored in the figure.
}
\label{fig:pentagon}
\end{figure}

Simultaneous decoders are an
essential building block for obtaining the best inner bounds for several
multiterminal channels in classical network information theory e.g.
Marton's inner bound with common message for the broadcast channel.
It is expected that our simultaneous decoder for the QMAC will pave the way
for similar results in quantum network information theory too.

A shortcoming of our results is that we are unable to show that our
one shot inner bound for the unassisted QMAC recovers the optimal
asymptotic iid result of \cite{Yard_MAC}. To do so, one would require 
the existence of a single state which simultaneously smooths and 
(nearly) maximises all the 
three entropic quantities that arise in our simultaneous decoding
inner bound picturised in Figure~\ref{fig:pentagon}. The existence of
such a state is a major open problem in quantum information theory and 
is known 
as the {\em simultaneous smoothing conjecture}. The interested reader 
is referred to \cite{Sen18_1} for more details.

In this paper we will be using a variant of the $2$ entropy, defined by 
Nema and Sen in \cite{Nema_Sen}. The interested reader is referred to 
\cite{Tomamichel_thesis} for an extremely comprehensive survey of these 
quantities and their properties.
A n\"aive choice of the smoothed conditional R\'enyi $2$-entropy does 
not work, for technical reasons that we will describe in 
\cref{multidecoupthm}.

\subsection{Organisation of the Paper}
The rest of the paper is organised as follows:
\begin{itemize}
\item In \cref{prelims} we define the one-shot entropic quantities that 
we require to prove our theorems, along with the statements of useful 
facts about these quantities and other identities in general. 
\item In \cref{multidecoupthm} we state and prove the multi sender 
decoupling theorem. 
\item In \cref{QMAC} we use the $2$ sender version of the multi 
sender decoupling theorem to derive inner bounds for sending quantum 
information via the QMAC.
\item We conclude with \cref{Conclusion and open problem} by mentioning 
an immediate open problem which might be useful for other problems in 
quantum Shannon theory.
\end{itemize}
 
\section{Preliminaries}
\label{prelims}
\subsection{Notation}
All vector spaces considered the paper are finite dimensional inner product
spaces, also called finite dimensional Hilbert spaces, over the complex 
field and denoted by $\mathcal{H}$. 
We use $|\mathcal{H}|$ to denote the dimension of a Hilbert space 
$\mathcal{H}$. 
Logarithms are all taken in base two.
We tacitly assume that the ceiling is taken of any formula that provides 
dimension or value of $t$ in unitary $t$-design.
The symbols $\mathbb{E}$, $\mathbb{P}$ denote expectation and probability 
respectively. The abbreviation "iid" is used to mean identically and 
independently distributed, which just means taking the tensor power of 
the identical copies of the underlying state. 
The notation ":=" is used to denote the definitions 
of the underlying mathematical quantities. 

The notation
$\mathcal{L}(A_1,A_2)$ denotes the Hilbert space of 
all linear operators from Hilbert space $A_1$ to Hilbert space
$A_2$ with the inner product being the Hilbert-Schmidt inner product
$\langle X, Y \rangle := \Tr [X^\dag Y]$.
For the special case when $A_1 = A_2$ we use
the phrase operator on $A_1$ and the symbol 
$\mathcal{L}(A_1)$.  
The symbol $\mathbb{I}^{A}$ denotes the 
identity operator on vector space $A$.
The matrix $\pi^A$ denotes the so-called completely mixed 
state on system $A$, i.e., $\pi^A := \frac{\mathbb{I}^A}{|A|}$.
We use the notation $U \circ A$ 
as a short hand to denote the conjugation of the operator $U$ on the 
operator $M$, that is, $U \cdot M := U M U^\dagger$.

The symbol $\rho$ usually denotes a quantum state, also called as 
a density matrix 
which is a Hermitian positive semidefinite matrix with
unit trace, and
$\mathcal{D}(\mathbb{C}^d)$ denotes the set of all $d \times d$ density 
matrices. The symbol $\mathrm{Pos}(\mathbb{C}^d)$ denotes the set of all
$d \times d$ positive semidefinite matrices, and the symbol
$\mathbb{U}(d)$ denotes the set of all $d \times d$ unitary matrices
with complex entries.
For a positive semidefinite matrix $\sigma$, we use $\sigma^{-1}$ to
denote the operator which is the orthogonal direct sum of
the inverse of $\sigma$ on its support and the zero operator on the
orthogonal complement of the support. This definition of 
$\sigma^{-1}$ is also known as the {\em Moore-Penrose pseudoinverse}.
The symbol $\ket{\psi}$ denotes a vector $\psi$ of unit Schatten $2$-norm 
or the Frobenius norm, 
and $\bra{\psi}$ denotes the corresponding linear functional.
A pure quantum state is rank one density matrix.
For brevity, a pure quantum state 
$\ket{\psi}\bra{\psi}$ is denoted by $\psi$ to emphasise that it is 
a density matrix. For 
two Hermitian matrices $A$, $B$ of the same dimension, we use 
$A \geq B$ as a shorthand to imply that the matrix $A - B$ is positive
semidefinite.

Let $X \in \mathcal{L}(A)$.
The symbol $\Tr X$ denotes the 
trace of operator $X$. Trace is a linear map from $\mathcal{L}(A)$
to $\mathbb{C}$.
Let $A$, $B$ be two vector spaces.
The partial trace $\Tr_A [\cdot]$ obtained by tracing out $A$ 
is defined to be the unique linear map from 
$\mathcal{L}(A \otimes R)$ to $\mathcal{L}(R)$ satisfying 
$\Tr_A [X \otimes Y] = (\Tr X) Y$ for all operators 
$X \in \mathcal{L}(A)$, $Y \in \mathcal{L}(R)$.

A linear map $\mathcal{T}: \mathcal{M}_m \to \mathcal{M}_d $, that 
maps a linear operator to another linear operator is called a 
superoperator. A superoperator $\mathcal{T}$ is said to be 
{\em positive} if 
it maps 
positive semidefinite operators to positive semidefinite operators,
and {\em completely positive} if $\mathcal{T} \otimes \mathbb{I}$ is a 
positive superoperator for all identity superoperators $\mathbb{I}$.
A superoperator $\mathcal{T}$ is said to be {\em trace preserving} if 
$\Tr [\mathcal{T}(X)] = \Tr [X]$ for all $X \in \mathcal{M}_m$.
Completely positive and Trace Preserving (abbreviated as CPTP) 
superoperators are called 
{\em quantum operations} or {\em quantum channels}.
In this paper all the superoperators considered are completely positive 
and trace 
non-increasing superoperators, unless stated otherwise. The symbol 
$\mathbb{I}$ denotes the identity or the noiseless channel which does 
not alter the input at all or $\mathbb{I}(X)=X$, for all operators $X$. 

The adjoint of a superoperator is defined with respect to the 
Hilbert-Schmidt inner product on matrices. If
$\mathcal{T}: \mathcal{M}_m \to \mathcal{M}_d $ is a superoperator, then 
its adjoint
$\mathcal{T}^\dag: \mathcal{M}_d \to \mathcal{M}_m $ is a 
superoperator uniquely
defined by the property that
$
\langle \mathcal{T}^\dag(A), B \rangle =
\langle A, \mathcal{T}(B) \rangle 
$
for all $A \in \mathcal{M}_d$, $B \in \mathcal{M}_m$.

\subsection{Entropic Quantities}
In this section we define the relevant entropic quantities used in the 
proof of our general multi-user decoupling theorem. We start with the 
conditional min entropy, followed by conditional $2$-entropy, a variant 
of which will be used in most of our proofs. 
\begin{definition}
\label{def:Hmin}
Let $0 \leq \epsilon < 1$. The  $\epsilon$-smooth conditional 
min-entropy of 
$\rho^{AB}$ is defined as:
\[
H_{\mathrm{min}}^\epsilon(A|B)_\rho :=
\min_{
\substack{ 
\sigma^{AB} \in \mathrm{Pos}(B): \\
\lVert \sigma^{AB} - \rho^{AB} \rVert_1 \leq \epsilon
}} \,
\{ \Tr(\gamma^B): \gamma^B \in Pos(B), 
\sigma^{AB} \leq I^A \otimes \gamma^B \}.
\]
When $\epsilon=0$, this is just $H_{min}(A|B)$ with $\sigma^{AB}$ 
replaced by $\rho^{AB}$.
\end{definition}

\begin{definition}
\label{def:Renyi2}
Let $0 \leq \epsilon < 1$. 
The $\epsilon$-smooth conditional R\'{e}nyi 2-entropy for a 
bipartite positive 
semidefinite operator $\rho^{AR}$ on systems $A$ and $R$ is defined as:
$$
H_2^\epsilon(A|R)_{\rho} := 
-2 \log 
\min_{
\substack{ 
{\sigma^{AR} \in \mathrm{Pos}(AR):}\\
{\lVert \rho^{AR}-\sigma^{AR}\rVert_1 \leq \epsilon} \\
{\omega^R \in \mathcal{D}(R): \omega^R > 0^R}
}
}
\{ 
\lVert 
(\omega^R \otimes I^A)^{-1/4} \sigma^{AR} (\omega^R \otimes I^A)^{-1/4} 
\rVert_2
\}.
$$ 
When $\epsilon = 0$, 
we simply refer to the above quantity as conditional 
R\'{e}nyi $2$-entropy and denote it by $H_2(A|R)_{\rho}$ and
define 
$
\trho^{AR} :=
(\omega^R \otimes I^A)^{-1/4} \rho^{AR} (\omega^R \otimes I^A)^{-1/4}.
$
\end{definition}
The advantage of working with smoothed conditional R\'enyi $2$-entropy 
is that in the asymptotic iid limit it is appropriately bounded by 
conditional Shannon entropy, as mentioned in the following 
Fact~\ref{fact:Renyi2upperbound}:
\begin{fact}
\label{fact:Renyi2upperbound}
Let $\epsilon > 0$. Then,
$
H_2^\epsilon(A|B)_\omega  \leq 
H(A|B)_\omega + 8 \epsilon \log |A| + 
2 + 2 \log \epsilon^{-1}.
$ and
$
H_{min}^\epsilon(A|B)_\rho \geq 
H(A|B)-8 \log |A| \times \sqrt{\log \frac{2}{\epsilon^2}}
$
\end{fact}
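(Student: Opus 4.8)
I would prove both inequalities along the same lines: each converts a one-shot (smooth) entropy into the von Neumann conditional entropy $H(A|B)$ up to a correction controlled by $\epsilon$ and $|A|$, and the natural device is to interpolate through a sandwiched R\'enyi conditional entropy $\tilde H_\alpha^{\uparrow}(A|B)$ of order $\alpha$ close to $1$. The two ingredients are (i) a \emph{smoothing} estimate that trades a smooth entropy for an \emph{unsmoothed} $\tilde H_\alpha^{\uparrow}$ at cost $O\!\big(\tfrac{1}{|\alpha-1|}\log\tfrac1\epsilon\big)$, and (ii) a \emph{dimension estimate} $\big|\tilde H_\alpha^{\uparrow}(A|B)-H(A|B)\big| = O\!\big(|\alpha-1|(\log|A|)^2\big)$, obtained by controlling the derivative of $\alpha\mapsto\tilde H_\alpha^{\uparrow}$ (a relative-entropy ``variance'') around $\alpha=1$; optimising over $\alpha$ then produces the stated bounds. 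Both ingredients are essentially available in \cite{Tomamichel_thesis} and \cite{Dupuis_thesis}, so one option is simply to assemble and cite them; I sketch the self-contained argument below.

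\textbf{The $H_2^\epsilon$ bound.} Here step (ii) is not needed. Since in Definition~\ref{def:Renyi2} the minimisation runs over \emph{both} $\sigma^{AB}$ in the $\epsilon$-ball of $\omega$ and $\omega'^B>0$, I would separate the two minimisations and write $H_2^\epsilon(A|B)_\omega = \sup\{H_2(A|B)_\sigma : \sigma\in\mathrm{Pos}(AB),\ \|\sigma-\omega\|_1\le\epsilon\}$, with $H_2$ the unsmoothed quantity of Definition~\ref{def:Renyi2}. Fixing such a $\sigma$, normalising it moves it by a further $O(\epsilon)$ in trace distance and changes $H_2(A|B)_\sigma$ additively by $-2\log\Tr\sigma=O(\epsilon)$. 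Monotonicity of $\alpha\mapsto\tilde H_\alpha^{\uparrow}$ together with $\tilde H_1^{\uparrow}(A|B)=H(A|B)$ gives $H_2(A|B)_\sigma = \tilde H_2^{\uparrow}(A|B)_\sigma \le H(A|B)_\sigma$, and Alicki--Fannes--Winter continuity of the conditional von Neumann entropy bounds $H(A|B)_\sigma \le H(A|B)_\omega + O(\epsilon\log|A|) + O(\log\epsilon^{-1})$. Collecting the (intentionally loose) constants yields $H_2^\epsilon(A|B)_\omega \le H(A|B)_\omega + 8\epsilon\log|A| + 2 + 2\log\epsilon^{-1}$.

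\textbf{The $H_{\min}^\epsilon$ bound.} Fix $\alpha\in(1,2]$. Step (i) is the standard inequality $H_{\min}^\epsilon(A|B)_\rho \ge \tilde H_\alpha^{\uparrow}(A|B)_\rho - \tfrac{1}{\alpha-1}\log\tfrac{1}{1-\sqrt{1-\epsilon^2}}$; using $1-\sqrt{1-\epsilon^2}\ge\epsilon^2/2$ the penalty is at most $\tfrac{1}{\alpha-1}\log\tfrac{2}{\epsilon^2}$. Step (ii) is $\tilde H_\alpha^{\uparrow}(A|B)_\rho \ge H(A|B)_\rho - (\alpha-1)c(\log|A|)^2$, valid uniformly on $(1,2]$, from a Taylor-type estimate of $\tilde D_\alpha(\rho^{AB}\| I^A\otimes\sigma^B)$ about $\alpha=1$ in which the dimension enters through bounding the relevant operator logarithms by $\log|A|$. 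Combining, $H_{\min}^\epsilon(A|B)_\rho \ge H(A|B)_\rho - (\alpha-1)c(\log|A|)^2 - \tfrac{1}{\alpha-1}\log\tfrac{2}{\epsilon^2}$, and choosing $\alpha-1=\min\{1, (\log|A|)^{-1}\sqrt{c^{-1}\log(2/\epsilon^2)}\}$ balances the two error terms, giving $H_{\min}^\epsilon(A|B)_\rho \ge H(A|B)_\rho - O(\log|A|\sqrt{\log(2/\epsilon^2)})$, the numerical constant becoming $8$ once the factors are tracked.

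\textbf{Main obstacle.} The delicate point is step (ii) in the min-entropy bound: $\tilde H_\alpha^{\uparrow}(A|B) \ge H(A|B) - (\alpha-1)c(\log|A|)^2$ is needed \emph{uniformly} over $\alpha\in(1,2]$ rather than only in the limit $\alpha\to1$, which forces a bound on the second derivative of $\alpha\mapsto\tilde H_\alpha^{\uparrow}$ on the whole interval; it is here that the $(\log|A|)^2$ — and hence, after optimisation, the $\log|A|\sqrt{\log\epsilon^{-1}}$ scaling — is unavoidable, and extracting a clean constant takes care. A secondary nuisance, present in both parts, is that the smoothing balls in Definitions~\ref{def:Hmin} and \ref{def:Renyi2} range over arbitrary positive semidefinite operators rather than normalised states, so each comparison must absorb the $O(\epsilon)$ cost of renormalisation.
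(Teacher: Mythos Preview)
Your proposal is correct and aligns with the paper's own treatment: the paper does not give a self-contained proof of this Fact but simply refers to \cite[Theorem~9]{QAEP} for the $H_{\min}^\epsilon$ inequality and to \cite[Equation~8]{tomamichel:Renyi} together with \cite[Theorem~7, Lemma~2, Equation~33]{QAEP} followed by Alicki--Fannes \cite{AlickiFannes} for the $H_2^\epsilon$ inequality. Your sketch is precisely a distillation of those arguments --- the R\'enyi-$\alpha$ interpolation plus optimisation over $\alpha$ is the mechanism behind the QAEP bound, and monotonicity $H_2\le H_1=H$ on the smoothing state followed by Alicki--Fannes continuity is exactly the route the paper's citation chain indicates for $H_2^\epsilon$ --- so there is no substantive difference to report.
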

The proof of the bounds on $H_{min}$ can be found in 
\cite[Theorem~9]{QAEP} and for $H_2$ can be deduced by combining 
\cite[Equation~8]{tomamichel:Renyi} with 
\cite[Theorem~7, Lemma~2, Equation~33]{QAEP} and then
applying the Alicki-Fannes inequality \cite{AlickiFannes}, respectively .

For our proofs we will be using a slightly modified version of the 
$2$-entropy, where we fix the $\sigma^B$ to a special state instead 
of optimising over it. The justification for this definition is as follows:
\begin{enumerate}
\item This quantity is much more tractable than the optimised $2$-entropy.
\item The smoothed version of this new quantity indeed approaches the 
conditional Shannon entropy in the asymptotic iid limit, as proved in 
\cite{Nema_Sen}.  
\end{enumerate}
\begin{definition}{\textbf{$\delta$-Tilde Conditional $2$-Entropy}}
Given a state $\rho^{AB}$ on the registers $AB$ and $\delta\in (0,1)$ the $\delta$-Tilde conditional $2$-entropy of $A$ given $B$ is defined as 
\begin{align*}
\Tilde{H}_{2,\delta}(A|B)_\rho\coloneqq 
-\log \Tr[\big((\mathbb{I}^A\otimes \rho^{B}_{\delta})^{-1/2} 
	  \rho^{AB})\big)^2]
\end{align*}
where $\rho^{B}_{\delta}$ is that positive semidefinite matrix that is 
obtained by zeroing out the smallest eigenvalues of $\rho^B$ that sum 
to less than or equal to $\delta$.
\end{definition}
The smoothed variant of this quantity, as defined by Nema and Sen was 
shown in \cite{Nema_Sen} to approach the Shannon conditional entropy 
in the asymptotic iid limit. We call this the $\epsilon$-smooth 
$\delta$-tilde conditional $2$-entropy. In the interest of brevity, we 
will refer to this quantity simply as the smooth tilde $2$-entropy from 
now on. We will require some additional definitions before introducing 
this quantity:
\begin{definition}{\textbf{$\epsilon$-smooth Max Entropy}}
Given a state $\rho^A$ and positive $\epsilon$, the 
$\epsilon$-smooth Max Entropy max entropy is given by
\begin{align*}
H^{\epsilon}_{\max}(A)_{\rho}\coloneqq 
2\log\min\limits_{\substack{\rho'\geq 0 \\ \norm{\rho'-\rho}_1\leq 
\epsilon}}\Tr[\sqrt{\rho'}]
\end{align*}
\end{definition}
\begin{definition}{\textbf{$\delta$-Tilde Max Entropy }}
Given the state $\rho^A$, consider the state $\rho_{\delta}$ which is 
obtained by zeroing out the smallest eigenvalues of $\rho^A$ which sum 
to less than or equal to $\delta$. Then the $\delta$-Tilde Max Entropy 
is given by 
\begin{align*}
\Tilde{H}^{\delta}_{\max}(A)_{\rho}\coloneqq 
\log\norm{\big(\rho^B_{\delta}\big)^{-1}}_{\infty}
\end{align*}
\end{definition}
We are now ready to define the smooth tilde $2$-entropy:
\begin{definition}{\textbf{$\epsilon$-smooth $\delta$-Tilde 
Conditional $2$-Entropy}}
Given a state $\rho^{AB}$, consider the state 
$\rho'^{B}_{\epsilon, \delta}$ that is obtained by zeroing out those 
eigenvalues of $\rho^B$ which are smaller than 
$2^{-(1+\delta)\Tilde{H}^{\epsilon}_{\max}(B)_{\rho}}$ . Then, we define 
\begin{align*}
\Tilde{H}^{\epsilon}_{2,\delta}(A|B)_{\rho}\coloneqq 
-\log \min\limits_{\substack{0\leq \eta^{AB}\leq \rho^{AB} \\ 
\norm{\eta-\rho}_1\leq \epsilon}} 
\Tr[\big((\mathbb{I}^A\otimes \rho'^{B}_{\epsilon,\delta})^{-1/2} 
	\rho^{AB})\big)^2]
\end{align*}

\end{definition}
\begin{fact}
For $n\in \mathbb{N}$ and $\epsilon,\delta>0$, given a quantum state 
$\rho^{AB}$ and its iid extension ${\rho^{AB}}^{\otimes n}$ the following 
holds
\begin{align*}
\lim\limits_{\substack{\epsilon,\delta\rightarrow 0}}
\lim\limits_{n\rightarrow \infty} 
\frac{\Tilde{H}^{\epsilon}_{2,\delta}(A^n|B^n)_{\rho^n}}{n} 
\geq H(A|B)_{\rho}
\end{align*}
\end{fact}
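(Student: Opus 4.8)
The plan is to prove the inequality directly by a typical-subspace argument: the two truncations hard-wired into $\Tilde{H}^{\epsilon}_{2,\delta}$ — passing from $\rho^{B}$ to the sub-threshold-truncated $\rho'^{B}_{\epsilon,\delta}$, and smoothing $\rho^{AB}$ to a nearby $\eta^{AB}$ — are precisely what is needed to replace the iid state by a flat operator supported on its typical subspace, at which point the R\'enyi-$2$ quantity and the Shannon quantity agree up to $2^{o(n)}$ factors. Throughout, $\sigma := \rho'^{B^n}_{\epsilon,\delta}$ denotes the truncation appearing in the definition evaluated on $\rho^{\otimes n}$, and $\eta^{A^nB^n}$ denotes the state over which the objective is minimised.

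First I would rewrite the objective in a form that decouples the two systems. Using cyclicity of the trace, for any feasible $\eta$,
\[
\Tr\!\big[\big((\mathbb{I}^{A^n}\otimes\sigma)^{-1/2}\eta\big)^2\big]
= \norm{\eta^{1/2}(\mathbb{I}^{A^n}\otimes\sigma)^{-1/2}\eta^{1/2}}_2^2
\le \norm{\eta}_\infty\,\norm{\sigma^{-1}}_\infty\,\Tr[\eta]
\le \norm{\eta}_\infty\,\norm{\sigma^{-1}}_\infty ,
\]
where the first inequality is the Hölder-type bound $\norm{ABA}_2\le\norm{A}_\infty\norm{B}_\infty\norm{A}_2$ together with $\norm{\eta^{1/2}}_2^2=\Tr\eta$, and the last uses $\Tr[\eta]\le\Tr[\rho^{\otimes n}]=1$. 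Consequently
\[
\Tilde{H}^{\epsilon}_{2,\delta}(A^n|B^n)_{\rho^n}\;\ge\;-\log\norm{\eta}_\infty\;-\;\log\norm{\sigma^{-1}}_\infty ,
\]
for every feasible $\eta$, and it remains to choose $\eta$ well and to estimate the two operator norms.

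For the first term I would fix a small $\delta'>0$ and take $\eta := \Pi_n\,\rho^{\otimes n}\,\Pi_n$, where $\Pi_n$ projects onto the span of the eigenvectors of $\rho^{AB,\otimes n}$ whose eigenvalue lies in the typical window $[2^{-n(H(AB)_\rho+\delta')},\,2^{-n(H(AB)_\rho-\delta')}]$. By the weak law of large numbers $\Tr[\eta]\to 1$, so $\eta$ is feasible ($0\le\eta\le\rho^{\otimes n}$ and $\norm{\eta-\rho^{\otimes n}}_1\le\epsilon$ for $n$ large), and by construction $-\log\norm{\eta}_\infty\ge n(H(AB)_\rho-\delta')$. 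For the second term, the defining truncation zeroes out exactly the eigenvalues of $\rho^{B^n}$ below $2^{-(1+\delta)\Tilde{H}^{\epsilon}_{\max}(B^n)_{\rho^n}}$, so every surviving eigenvalue of $\sigma$ is at least that large and $-\log\norm{\sigma^{-1}}_\infty\ge-(1+\delta)\,\Tilde{H}^{\epsilon}_{\max}(B^n)_{\rho^n}$. Combining,
\[
\frac1n\,\Tilde{H}^{\epsilon}_{2,\delta}(A^n|B^n)_{\rho^n}\;\ge\;H(AB)_\rho-\delta'-(1+\delta)\cdot\frac1n\,\Tilde{H}^{\epsilon}_{\max}(B^n)_{\rho^n}.
\]

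Finally I would invoke the AEP for the smooth tilde max-entropy, $\lim_{n\to\infty}\tfrac1n\Tilde{H}^{\epsilon}_{\max}(B^n)_{\rho^n}\le H(B)_\rho+\kappa(\epsilon,\delta)$ with $\kappa\to0$ as $\epsilon,\delta\to0$; this is itself a routine typical-subspace estimate, since forming the intermediate object $\rho^{B}_\delta$ in the definition of $\Tilde{H}^{\delta}_{\max}$ removes only an $O(\delta)$ tail of the smallest eigenvalues of $\rho^{B^n}$, so the least surviving eigenvalue is $2^{-n(H(B)_\rho+o(1))}$. Plugging this in and then letting $n\to\infty$, $\delta'\to0$ and $\epsilon,\delta\to0$ in that order gives $\displaystyle\lim_{\epsilon,\delta\to0}\lim_{n\to\infty}\tfrac1n\Tilde{H}^{\epsilon}_{2,\delta}(A^n|B^n)_{\rho^n}\ge H(AB)_\rho-H(B)_\rho=H(A|B)_\rho$, as claimed. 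The only genuinely delicate point — and it is bookkeeping rather than anything conceptual — is that the threshold defining $\rho'^{B}_{\epsilon,\delta}$ is itself built out of $\Tilde{H}^{\epsilon}_{\max}(B)$, which is in turn defined through yet another truncation and an $\epsilon$-smoothing, so one must check that for iid states all these nested thresholds collapse to $2^{-nH(B)_\rho}$ up to sub-exponential factors and that the associated smoothing balls can be met simultaneously by the single typical projector $\Pi_n$; this is precisely the content of the estimates in \cite{Nema_Sen}, and the conditional statement then follows because the $A$-system enters only through the harmless factor $\norm{\eta}_\infty$.
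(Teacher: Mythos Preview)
The paper does not actually prove this Fact in its own text; it is stated as a fact imported from \cite{Nema_Sen} (the sentence immediately preceding the definitions says the smoothed tilde $2$-entropy ``was shown in \cite{Nema_Sen} to approach the Shannon conditional entropy in the asymptotic iid limit''), and no argument is given in the present paper. So there is no in-paper proof to compare your proposal against.

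That said, your typical-subspace argument is correct and self-contained. The factorisation
\[
\Tr\!\big[\big((\mathbb{I}\otimes\sigma)^{-1/2}\eta\big)^2\big]
=\norm{\eta^{1/2}(\mathbb{I}\otimes\sigma)^{-1/2}\eta^{1/2}}_2^2
\le \norm{\eta}_\infty\,\norm{\sigma^{-1}}_\infty\,\Tr[\eta]
\]
is valid, and the split into (i) bounding $\norm{\eta}_\infty$ via the $AB$-typical projector and (ii) bounding $\norm{\sigma^{-1}}_\infty$ directly from the defining threshold $2^{-(1+\delta)\Tilde H^\epsilon_{\max}(B^n)}$ followed by the marginal AEP is exactly the right decomposition. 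Two small points worth making explicit in a write-up: feasibility of your $\eta=\Pi_n\rho^{\otimes n}\Pi_n$ in the smoothing ball follows because $0\le\eta\le\rho^{\otimes n}$ forces $\norm{\rho^{\otimes n}-\eta}_1=1-\Tr\eta\to 0$, so the constraint $\le\epsilon$ is met for all large $n$ once $\epsilon$ is fixed --- consistent with the order of limits in the statement; and the ``bookkeeping'' you flag about the nested truncations in $\Tilde H^\epsilon_{\max}$ collapsing to $2^{-nH(B)+o(n)}$ is genuinely the only place any care is needed, and your deferral to \cite{Nema_Sen} for that estimate matches what the paper itself does.
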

\subsection{Useful Facts}
\begin{fact}\mbox{\cite{Watrous_notes, Nema_Sen}}
\label{fact:sStinespring}
Any superoperator $\mathcal{T}^{A \to B}$ can be represented as:
$$
\mathcal{T}^{A \to B}(M^A) = 
\Tr_Z \{ 
V_{\mathcal{T}}^{AC \to BZ} 
(M^A \otimes (\ket{0}\bra{0})^C) 
(W_{\mathcal{T}}^{AC \to BZ})^{\dagger} 
\} 
$$
where $V_{\mathcal{T}}$, $W_{\mathcal{T}}$ are operators that map 
vectors from  
$A \otimes C$ to vectors in $B \otimes Z$. Systems 
$C$ and $Z$ are considered as the input and output ancillary 
systems respectively, such that $|A| |C| = |B| |Z|$. Without loss of
generality, $|C| \leq |B|$ and $|Z| \leq |A|$.
Furthermore, in the following special cases $V_{\mathcal{T}}$, 
$W_{\mathcal{T}}$ have additional properties.
\begin{enumerate}

\item 
$\mathcal{T}$ is completely positive if and only if 
$V_{\mathcal{T}} = W_{\mathcal{T}}$.

\item 
$\mathcal{T}$ is trace preserving if and only if 
$V_{\mathcal{T}}^{-1} = W_{\mathcal{T}}^\dag$.
Thus, $\mathcal{T}$ is completely positive and trace preserving if and 
only if 
$V_{\mathcal{T}} = W_{\mathcal{T}}$ and are unitary operators.

\item 
$\mathcal{T}$ is completely positive and trace non-decreasing if and 
only if 
$V_{\mathcal{T}} = W_{\mathcal{T}}$ and 
$\lVert V_{\mathcal{T}} \rVert_\infty \leq 1$. 

\end{enumerate}  
\end{fact}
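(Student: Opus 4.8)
The plan is to reduce the assertion to the familiar operator-sum (Kraus-type) description of a superoperator, repackage that description as a two-sided Stinespring dilation, and then read off the three structural characterisations from the algebraic conditions the Kraus operators must obey. First recall the standard fact that every linear superoperator $\mathcal{T}^{A\to B}$ admits a representation $\mathcal{T}(M^A)=\sum_{k=1}^{d}A_kM^AB_k^{\dagger}$ with operators $A_k,B_k\in\mathcal{L}(A,B)$ and finitely many terms $d$: writing the Choi operator $J(\mathcal{T})=\sum_{i,j}\mathcal{T}(\ket{i}\bra{j})\otimes\ket{i}\bra{j}\in\mathcal{L}(B\otimes A)$ in a rank decomposition $J(\mathcal{T})=\sum_{k=1}^{d}\ket{u_k}\bra{v_k}$ and ``unvectorising'' the vectors $\ket{u_k},\ket{v_k}$ yields such $A_k,B_k$. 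If in addition $\mathcal{T}$ is completely positive then $J(\mathcal{T})\geq 0$, so the decomposition can be taken with $\ket{v_k}=\ket{u_k}$, i.e.\ $B_k=A_k$ for every $k$.

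Next, assemble the dilation. Fix an ancilla $Z$ with orthonormal basis $\{\ket{k}^Z\}$, fix a second ancilla $C$ with a distinguished unit vector $\ket{0}^C$, and choose $|C|,|Z|$ compatibly so that $|A||C|=|B||Z|$ and $|Z|$ is at least the number $d$ of terms (padding with inert dimensions as needed). Set $\widetilde V:=\sum_kA_k\otimes\ket{k}^Z$ and $\widetilde W:=\sum_kB_k\otimes\ket{k}^Z$, both viewed as operators $A\to B\otimes Z$. A one-line computation gives $\Tr_Z[\widetilde V M^A\widetilde W^{\dagger}]=\sum_{k,l}(A_kM^AB_l^{\dagger})\langle l|k\rangle=\mathcal{T}(M^A)$. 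Finally extend $\widetilde V,\widetilde W$ to operators $V_{\mathcal{T}},W_{\mathcal{T}}:A\otimes C\to B\otimes Z$ that restrict to $\widetilde V,\widetilde W$ on the subspace $A\otimes\ket{0}^C$; since $M^A\otimes(\ket{0}\bra{0})^C$ is supported on that subspace, we obtain $\mathcal{T}(M^A)=\Tr_Z[V_{\mathcal{T}}(M^A\otimes(\ket{0}\bra{0})^C)W_{\mathcal{T}}^{\dagger}]$, which is the claimed form.

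The three enumerated equivalences then follow by translating the hypotheses on $\mathcal{T}$ into conditions on $A_k,B_k$, using $\widetilde W^{\dagger}\widetilde V=\sum_kB_k^{\dagger}A_k$ and $\widetilde V^{\dagger}\widetilde V=\sum_kA_k^{\dagger}A_k$. For item~(1): $\mathcal{T}$ is completely positive iff one may take $B_k=A_k$, i.e.\ $W_{\mathcal{T}}=V_{\mathcal{T}}$, and the converse is immediate since $M\mapsto\Tr_Z[V_{\mathcal{T}}(M\otimes\ket{0}\bra{0})V_{\mathcal{T}}^{\dagger}]$ is a composition of completely positive maps. For item~(2): $\mathcal{T}$ is trace preserving iff $\Tr[(\sum_kB_k^{\dagger}A_k)M]=\Tr M$ for all $M$, i.e.\ $\widetilde W^{\dagger}\widetilde V=\mathbb{I}^A$; one extends this to $W_{\mathcal{T}}^{\dagger}V_{\mathcal{T}}=\mathbb{I}^{AC}$, and since the operators are now square this reads $V_{\mathcal{T}}^{-1}=W_{\mathcal{T}}^{\dagger}$, while conversely $V_{\mathcal{T}}^{-1}=W_{\mathcal{T}}^{\dagger}$ restricts back to $\widetilde W^{\dagger}\widetilde V=\mathbb{I}^A$ and hence to trace preservation; combining with~(1) gives that $\mathcal{T}$ is CPTP iff $V_{\mathcal{T}}=W_{\mathcal{T}}$ is unitary. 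For item~(3): a completely positive $\mathcal{T}$ is trace non-increasing iff $\sum_kA_k^{\dagger}A_k\leq\mathbb{I}^A$, i.e.\ $\widetilde V^{\dagger}\widetilde V\leq\mathbb{I}^A$; extending $\widetilde V$ to $V_{\mathcal{T}}$ (for instance by setting it to $0$ on the orthogonal complement of $A\otimes\ket{0}^C$) keeps $\norm{V_{\mathcal{T}}}_{\infty}\leq 1$, and conversely $\norm{V_{\mathcal{T}}}_{\infty}\leq 1$ forces $V_{\mathcal{T}}^{\dagger}V_{\mathcal{T}}\leq\mathbb{I}$, whence $\Tr\mathcal{T}(M)=\Tr[(M\otimes\ket{0}\bra{0})V_{\mathcal{T}}^{\dagger}V_{\mathcal{T}}]\leq\Tr M$ for $M\geq 0$.

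I expect the only genuine obstacle to be the extension step: promoting $\widetilde V,\widetilde W$, which a priori live only on the $|A|$-dimensional subspace $A\otimes\ket{0}^C$, to honest operators on all of $A\otimes C$ carrying the required global structure --- a unitary in the CPTP case, a mutually inverse-adjoint pair in the trace-preserving case. This is precisely where the dimension identity $|A||C|=|B||Z|$ is used, together with a Gram--Schmidt / orthogonal-complement (i.e.\ dual-basis) completion argument: in the trace-preserving case the relation $\widetilde W^{\dagger}\widetilde V=\mathbb{I}^A$ makes $\widetilde V\widetilde W^{\dagger}$ an idempotent of rank $|A|$, so that $B\otimes Z=\mathrm{range}(\widetilde V)\oplus\ker(\widetilde W^{\dagger})$ and the pairing of $\ker(\widetilde W^{\dagger})$ with $\mathrm{range}(\widetilde V)^{\perp}$ is nondegenerate, which is exactly what is needed to fill in the remaining columns. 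The accompanying bookkeeping --- choosing $|C|$ and $|Z|$ as economically as possible subject to $|A||C|=|B||Z|$ --- is mechanical, and everything else is a direct unwinding of the Choi--Kraus correspondence.
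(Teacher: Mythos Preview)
The paper does not prove this statement: it is recorded as a Fact with citations to \cite{Watrous_notes, Nema_Sen} and no argument is given in the paper itself. There is therefore nothing to compare against, and your task reduces to whether the proof you outline is sound.

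Your approach via the generalised Kraus representation $\mathcal{T}(M)=\sum_k A_k M B_k^\dagger$, stacked into block operators $\widetilde V,\widetilde W:A\to B\otimes Z$ and then extended to square operators on $A\otimes C$ using the ancilla $C$, is the standard route and is correct. You rightly flag the extension step in item~(2) as the only delicate point, and your idempotent/dual-basis sketch is the right mechanism: from $\widetilde W^\dagger\widetilde V=\mathbb{I}^A$ one gets the complementary splitting $B\otimes Z=\mathrm{range}(\widetilde V)\oplus\ker(\widetilde W^\dagger)$, and since $\ker(\widetilde W^\dagger)\cap\mathrm{range}(\widetilde V)=\{0\}$ the pairing between $\ker(\widetilde W^\dagger)$ and $\ker(\widetilde V^\dagger)=\mathrm{range}(\widetilde V)^\perp$ is nondegenerate, which is exactly what is needed to choose biorthogonal bases for the remaining columns of $V_{\mathcal T}$ and $W_{\mathcal T}$ so that $W_{\mathcal T}^\dagger V_{\mathcal T}=\mathbb{I}^{AC}$. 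One small remark: item~(3) as printed says ``trace non-decreasing'', but the condition $\lVert V_{\mathcal T}\rVert_\infty\le 1$ characterises trace \emph{non-increasing} CP maps, and you silently (and correctly) read it that way.
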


\begin{fact}\cite{Dupuis_thesis}
\label{swaptrick}
{\textbf{Swap Trick}} Given two linear operators $M$ and $N$ on the 
system $A$ and the swap operator $F^{AA'}$, where we denote by $A'$ 
an isomorphic copy of $A$, the following holds:
\begin{align*}
    \Tr[MN]=\Tr[\big(M\otimes N\big)F^{AA'}]
\end{align*}
\end{fact}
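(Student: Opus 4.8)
The claimed identity is a purely linear-algebraic fact, so the plan is to verify it by a direct computation in a fixed orthonormal basis. Let $\{\ket{i}\}_{i=1}^{|A|}$ be an orthonormal basis of $A$ and use it to identify $A'$ with $A$; the swap operator then has the expansion $F^{AA'} = \sum_{i,j} \ket{i}\bra{j}^{A} \otimes \ket{j}\bra{i}^{A'}$, which I would first confirm by checking that it sends $\ket{i}\otimes\ket{j}$ to $\ket{j}\otimes\ket{i}$ on basis vectors. Since both sides of the asserted equality are bilinear in $M$ and $N$, one could reduce to the case of rank-one $M$ and $N$, but it is just as quick to carry general operators through.

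Next I would expand $(M\otimes N)F^{AA'} = \sum_{i,j}\big(M\ket{i}\bra{j}\big)\otimes\big(N\ket{j}\bra{i}\big)$ and take its trace, using that the trace of a tensor-product operator factorises:
\[
\Tr\big[(M\otimes N)F^{AA'}\big] = \sum_{i,j}\Tr\!\big[M\ket{i}\bra{j}\big]\,\Tr\!\big[N\ket{j}\bra{i}\big] = \sum_{i,j}\bra{j} M \ket{i}\,\bra{i} N \ket{j}.
\]
Then I would collapse the sum over $i$ with the completeness relation $\sum_i \ket{i}\bra{i} = \mathbb{I}^A$ to obtain $\sum_j \bra{j} MN \ket{j}$, and the remaining sum over $j$ is exactly $\Tr[MN]$, which is the left-hand side. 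This completes the argument.

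There is no genuine obstacle here; the only points requiring a little care are fixing the convention for $F^{AA'}$ and confirming that the displayed basis expansion really does implement the swap, and keeping track of which tensor factor $M$ and $N$ act on in the intermediate expressions. If desired, one can add the remark that although a basis is used in the derivation, the final statement is basis-independent, since $F^{AA'}$, the trace, the tensor product, and the operators $M$, $N$ are all basis-free objects.
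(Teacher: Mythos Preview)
Your argument is correct: the basis expansion of $F^{AA'}$, the factorisation of the trace over a tensor product, and the completeness relation together give the identity with no gaps. The paper itself does not prove this statement; it is recorded as a cited Fact from \cite{Dupuis_thesis}, so there is nothing further to compare.
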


\begin{fact}\mbox{\cite{Watrous_notes}}
\label{fact:Uhlmann}
{\textbf{Uhlmann's Theorem}} For quantum states $M$ and $N$ with 
purifications $\ket{\phi}^{XY}$ and $\ket{\psi}^{XZ}$ respectively 
(referring to the systems $Y, \; Z$ as purifying systems and systems 
$Y,Z$ need not be isomorphic). Then,
\[
F(M,N)=\max_{V^{Y \to Z}} |\bra{\phi}V^\dag \ket{\psi}|
\]
where the maximization is over all partial isometries 
$V (\equiv V^\dag V=I^Y)$ from $Y$ to $Z$ with $dim(Z) \geq dim(Y)$.
\end{fact}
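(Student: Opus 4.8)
The plan is to reduce the right-hand maximisation to a maximisation over contractions on a fixed ancilla and then invoke the variational formula for the trace norm. The key device is to replace the given purifications by \emph{canonical} ones. Let $X'$ be a copy of $X$, let $\ket{\Gamma}^{XX'} := \sum_{i}\ket{i}^{X}\ket{i}^{X'}$ be the unnormalised maximally entangled vector, and set $\ket{M}^{XX'} := (\sqrt{M}\otimes I)\ket{\Gamma}$ and $\ket{N}^{XX'} := (\sqrt{N}\otimes I)\ket{\Gamma}$. These are purifications of $M$ and $N$, and by the standard fact that any two purifications of a state differ by a partial isometry on the purifying register, we may write $\ket{\phi}^{XY} = (I^{X}\otimes W)\ket{M}$ and $\ket{\psi}^{XZ} = (I^{X}\otimes W')\ket{N}$ for partial isometries $W : X'\to Y$ and $W' : X'\to Z$.

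Substituting and using the elementary identity $\bra{\Gamma}(A^{X}\otimes B^{X'})\ket{\Gamma} = \Tr[AB^{T}]$, for every partial isometry $V : Y\to Z$ with $V^{\dagger}V = I^{Y}$ we get
\[
\bra{\phi}(I^{X}\otimes V^{\dagger})\ket{\psi} = \bra{M}(I^{X}\otimes C)\ket{N} = \Tr[\sqrt{M}\sqrt{N}\,C^{T}], \qquad C := W^{\dagger}V^{\dagger}W'.
\]
Since $W$, $W'$, $V$ are partial isometries, $C$ is a contraction, and Hölder's inequality immediately yields the upper bound $|\bra{\phi}(I^{X}\otimes V^{\dagger})\ket{\psi}| \le \norm{\sqrt{M}\sqrt{N}}_{1} = \Tr\sqrt{\sqrt{M}N\sqrt{M}} = F(M,N)$, valid for every admissible $V$.

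For the matching lower bound I would use the variational characterisation $\norm{A}_{1} = \max\{\,|\Tr[AU]| : \norm{U}_{\infty}\le 1\,\}$, with the maximum attained at the (partial) isometry coming from the polar decomposition of $A$, and then build a $V$ for which $C^{T}$ agrees with this optimal isometry on the range and co-range of $\sqrt{M}\sqrt{N}$. Concretely one prescribes $V$ on a subspace of $Y$ of dimension at most $\mathrm{rank}(\sqrt{M}\sqrt{N}) \le \min\{\mathrm{rank}(M),\mathrm{rank}(N)\}$ and then extends it to a genuine isometry of all of $Y$ into $Z$; this extension is possible precisely because $\dim Z \ge \dim Y$, which is exactly the hypothesis of the statement. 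The part of $C$ left unspecified by this prescription is irrelevant, since $\sqrt{M}\sqrt{N}$ already kills $\ker N$ and maps into $\mathrm{supp}(M)$. Combining the two bounds gives the claimed equality.

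The step I expect to be the genuine obstacle is the bookkeeping in the achievability direction: making precise that a prescribed partial isometry on a subspace of $Y$ can always be completed to an isometry $Y\hookrightarrow Z$ under the hypothesis $\dim Z\ge\dim Y$ (and, dually, that ``every purification is the canonical one followed by a partial isometry'' is applied correctly even when $Y$ or $Z$ strictly exceeds the rank of $M$ or $N$). The overlap computation and the trace-norm variational formula are standard and I would treat them as routine.
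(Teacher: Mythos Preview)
The paper does not give its own proof of this statement: it is recorded as a \emph{Fact} with a citation to \cite{Watrous_notes} and is used as a black box in the later arguments. So there is nothing in the paper to compare your proposal against.

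For what it is worth, your approach is the standard one: pass to canonical purifications $(\sqrt{M}\otimes I)\ket{\Gamma}$ and $(\sqrt{N}\otimes I)\ket{\Gamma}$, compute the overlap via $\bra{\Gamma}(A\otimes B)\ket{\Gamma}=\Tr[AB^{T}]$, and then use the variational formula $\norm{A}_1=\max_{\norm{U}_\infty\le 1}|\Tr[AU]|$ together with polar decomposition for achievability. The ``obstacle'' you flag---extending a partially prescribed isometry on a subspace of $Y$ to a full isometry $Y\hookrightarrow Z$---is indeed routine once $\dim Z\ge\dim Y$: any isometry from a subspace $S\subseteq Y$ into $Z$ extends by sending an orthonormal basis of $S^{\perp}$ to any orthonormal set in the orthogonal complement of its image, and the dimension count works because $\dim Z-\dim V(S)\ge \dim Y-\dim S$. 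So your sketch is correct.
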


\begin{fact}\label{tensor integ}
Given a linear operator $M$ on $A^{\otimes 2}$ it holds that 
\begin{align*}
\mathbb{E}(M)=\int\limits_{U\in \mathbb{U}(A)}
\big( U^{\otimes 2}\cdot M\big)dU = \alpha \mathbb{I}^{AA'}+\beta F^{AA'}
\end{align*}
where $\alpha$ and $\beta$ are the solutions of the equations 
$\Tr[M]=\alpha\abs{A}^2+\beta\abs{A}$ and 
$\Tr[FM]=\alpha \abs{A}+\beta\abs{A}^2$ and the integration is over the 
Haar measure over the Unitary group.
\end{fact}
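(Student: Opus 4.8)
The plan is to treat $\mathbb{E}(\cdot)$ as the \emph{twirl} superoperator $M \mapsto \int_{\mathbb{U}(A)} (U^{\otimes 2}\cdot M)\,dU$ on the two-fold tensor product, and to identify its range via the symmetric/antisymmetric decomposition of $A^{\otimes 2}$ (equivalently Schur--Weyl duality for $S_2$). First I would check that the twirl is a linear idempotent whose range is exactly the commutant $\{N \in \mathcal{L}(A^{\otimes 2}) : [N, U^{\otimes 2}] = 0 \text{ for all } U \in \mathbb{U}(A)\}$: left-invariance of the Haar measure lets one absorb an extra copy of $V^{\otimes 2}$ into the integration variable, which gives both $\mathbb{E}(\mathbb{E}(M)) = \mathbb{E}(M)$ and $V^{\otimes 2}\cdot\mathbb{E}(M) = \mathbb{E}(M)$ for every $V$, so every output commutes with every $V^{\otimes 2}$; conversely any such $N$ is already fixed by the twirl.

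Next I would invoke that this commutant is two-dimensional, spanned by $\mathbb{I}^{AA'}$ and the swap $F^{AA'}$. Concretely, $A^{\otimes 2}$ splits into the symmetric and antisymmetric subspaces (the $\pm 1$ eigenspaces $P_+$, $P_-$ of $F$), each carrying an irreducible representation of $\mathbb{U}(A)$; by Schur's lemma an operator commuting with all $U^{\otimes 2}$ acts as a scalar on each summand with no cross terms, and the two scalars are precisely the coefficients in the $\{\mathbb{I}^{AA'}, F^{AA'}\}$ basis since $\mathbb{I}^{AA'} = P_+ + P_-$ and $F^{AA'} = P_+ - P_-$. Hence $\mathbb{E}(M) = \alpha\,\mathbb{I}^{AA'} + \beta\, F^{AA'}$ for scalars $\alpha,\beta$ depending linearly on $M$.

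To pin down $\alpha,\beta$ I would exploit that $F^{AA'}$ itself commutes with every $U^{\otimes 2}$. For $K \in \{\mathbb{I}^{AA'}, F^{AA'}\}$, cyclicity of the trace and unitarity of $U^{\otimes 2}$ give $\Tr[K\,U^{\otimes 2}M(U^{\otimes 2})^\dagger] = \Tr[U^{\otimes 2}(KM)(U^{\otimes 2})^\dagger] = \Tr[KM]$, so averaging over $U$ yields $\Tr[K\,\mathbb{E}(M)] = \Tr[KM]$. Taking $K = \mathbb{I}^{AA'}$ and using $\Tr[F^{AA'}] = \abs{A}$ gives $\Tr[M] = \alpha\abs{A}^2 + \beta\abs{A}$; taking $K = F^{AA'}$ and using $F^2 = \mathbb{I}^{AA'}$ gives $\Tr[FM] = \alpha\abs{A} + \beta\abs{A}^2$. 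These are the two claimed equations, and they have a unique solution because the coefficient determinant equals $\abs{A}^2(\abs{A}^2-1) \neq 0$ for $\abs{A} \geq 2$; the case $\abs{A} = 1$ is trivial (then $\mathbb{I}^{AA'} = F^{AA'}$ and only $\alpha + \beta$ is determined, consistently).

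The only substantive step is the two-dimensionality of the commutant; everything else is trace bookkeeping. If one preferred not to quote Schur--Weyl, the alternative is a direct evaluation of $\int U_{i_1 j_1}U_{i_2 j_2}\overline{U_{k_1 l_1}}\,\overline{U_{k_2 l_2}}\,dU$ via the second-order Weingarten formula, which is more computation but reconstructs the same $\{\mathbb{I}^{AA'}, F^{AA'}\}$ form; I would therefore take the representation-theoretic route.
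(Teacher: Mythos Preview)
Your argument is correct: the twirl is an idempotent onto the commutant of the diagonal $U^{\otimes 2}$-action, Schur--Weyl for $S_2$ identifies that commutant as $\mathrm{span}\{\mathbb{I}^{AA'},F^{AA'}\}$, and tracing against $\mathbb{I}$ and $F$ recovers the two linear equations exactly as you wrote. There is nothing to compare against, because the paper does not prove this statement; it is listed as a preliminary Fact in Section~\ref{prelims} and invoked without proof (it is the standard single-system input to the paper's Lemma~\ref{schurweylgeneral}). Your proposal supplies precisely the standard justification the paper omits.
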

\begin{fact}\label{boundlem}
Given a positive semidefinite operator $\rho^{AB}$ on the system $AB$ 
\begin{align*}
\frac{1}{\abs{A}}\leq 
\frac{\Tr[{\rho^{AB}}^2]}{\Tr[{\rho^{B}}^2]}\leq \abs{A}
\end{align*}
\end{fact}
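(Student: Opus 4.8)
The plan is to prove the two inequalities separately; both are dimension counts, and the only real point is to produce the factor $|A|$ (rather than $|B|$ or $|A||B|$) in each. For the lower bound $\Tr[(\rho^B)^2]\le |A|\,\Tr[(\rho^{AB})^2]$ I would first rewrite $\Tr[(\rho^B)^2]$ as a Hilbert--Schmidt inner product on the joint system: using $\rho^B=\Tr_A[\rho^{AB}]$ together with the identity $\Tr\!\big[Y^B\,\Tr_A[X^{AB}]\big]=\Tr\!\big[(I^A\otimes Y^B)\,X^{AB}\big]$, one gets $\Tr[(\rho^B)^2]=\langle I^A\otimes\rho^B,\ \rho^{AB}\rangle$. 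Cauchy--Schwarz for this inner product then gives $\Tr[(\rho^B)^2]\le \norm{I^A\otimes\rho^B}_2\,\norm{\rho^{AB}}_2$, and $\norm{I^A\otimes\rho^B}_2^2=\Tr\!\big[I^A\otimes(\rho^B)^2\big]=|A|\,\Tr[(\rho^B)^2]$. Squaring and cancelling one factor of $\Tr[(\rho^B)^2]$ (the degenerate case $\rho^B=0$, which by positivity forces $\rho^{AB}=0$, is set aside) yields the claimed bound.

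For the upper bound $\Tr[(\rho^{AB})^2]\le |A|\,\Tr[(\rho^B)^2]$ I would use positivity of $\rho^{AB}$ through its spectral decomposition. Write $\rho^{AB}=\sum_i\lambda_i\,\ket{v_i}\bra{v_i}$ with $\lambda_i\ge 0$ and $\{\ket{v_i}\}$ orthonormal, and set $\sigma_i:=\Tr_A\ket{v_i}\bra{v_i}$. Each $\sigma_i$ is a density matrix on $B$ whose rank equals the Schmidt rank of $\ket{v_i}$ across the cut between $A$ and $B$, hence is at most $|A|$, so $\Tr[\sigma_i^2]\ge 1/\mathrm{rank}(\sigma_i)\ge 1/|A|$. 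Expanding $\Tr[(\rho^B)^2]=\sum_{i,j}\lambda_i\lambda_j\Tr[\sigma_i\sigma_j]$, every summand is nonnegative since $\Tr[\sigma_i\sigma_j]\ge 0$ for positive semidefinite $\sigma_i,\sigma_j$, so the sum is at least its diagonal part $\sum_i\lambda_i^2\Tr[\sigma_i^2]\ge \tfrac1{|A|}\sum_i\lambda_i^2=\tfrac1{|A|}\Tr[(\rho^{AB})^2]$, which rearranges to the claim. (Equivalently, one can first derive the operator inequality $\rho^{AB}\le |A|\,(I^A\otimes\rho^B)$ — conjugating $\ket{v_i}\bra{v_i}$ by $I^A\otimes\sigma_i^{-1/2}$ produces a rank-one operator of trace $\mathrm{rank}(\sigma_i)\le|A|$, so $\ket{v_i}\bra{v_i}\le |A|\,(I^A\otimes\sigma_i)$, and summing over $i$ gives the inequality — and then use $\Tr[(\rho^{AB})^2]=\Tr[\rho^{AB}\cdot\rho^{AB}]\le |A|\,\Tr[\rho^{AB}(I^A\otimes\rho^B)]=|A|\,\Tr[(\rho^B)^2]$, since the trace of a product of two positive semidefinite operators is nonnegative.)

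I do not anticipate a genuine obstacle; this is a routine bound. The one step that needs care is the upper bound, where positivity of $\rho^{AB}$ is essential — for a generic Hermitian $\rho^{AB}$ one would only obtain the weaker factor $|A||B|$ — and where the correct rank estimate to invoke is $\mathrm{rank}(\Tr_A\ket{v_i}\bra{v_i})\le|A|$ rather than the trivial bound $|B|$. It is also worth noting that neither argument requires $\rho^{AB}$ to be normalised, which matches the stated generality of the fact.
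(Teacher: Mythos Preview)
Your proof is correct. The paper does not actually supply a proof of this statement: it is recorded as a ``Fact'' among the preliminaries, presumably imported from the decoupling literature (it appears, for instance, in Dupuis' thesis). Your two arguments---Cauchy--Schwarz in the Hilbert--Schmidt inner product for the lower bound, and the spectral decomposition together with the Schmidt-rank estimate $\mathrm{rank}(\Tr_A\ket{v_i}\bra{v_i})\le|A|$ for the upper bound---are the standard elementary proofs, and your remark that positivity of $\rho^{AB}$ is genuinely needed for the upper inequality (a general Hermitian operator would only give the factor $|A||B|$) is the right observation.
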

\begin{fact}\label{cauchyschwartz}
Let $M$ be any linear operator and $\sigma$ be a positive semidefinite 
operator on system $A$. Then
\begin{align*}
\norm{M}_1\leq \sqrt{\Tr[\sigma]
\Tr[\sigma^{-1/4}M\sigma^{-1/2}M^{\dagger}\sigma^{-1/4}]}
\end{align*}
and in particular, when $M$ is Hermitian
\begin{align*}
\norm{M}_1\leq \sqrt{\Tr[\sigma]
\Tr[\big(\sigma^{-1/4}M\sigma^{-1/4}\big)^2]}
\end{align*}
\end{fact}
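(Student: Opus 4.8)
The lemma is elementary: the plan is to derive both inequalities from two successive applications of the Cauchy--Schwarz inequality for the Hilbert--Schmidt inner product $\langle A,B\rangle := \Tr[A^{\dagger}B]$, after rewriting $\norm{M}_1$ in its variational form (as in Dupuis \cite{Dupuis_thesis}). First I would take a polar decomposition $M = W\abs{M}$ with $W$ a partial isometry and set $X := W^{\dagger}$, so that $\norm{X}_\infty \le 1$ and $\norm{M}_1 = \Tr[\abs{M}] = \Tr[XM]$. Using cyclicity of the trace I would then insert the factors $\sigma^{1/4}\sigma^{-1/4}$ symmetrically to write
\[
\norm{M}_1 = \Tr\big[(\sigma^{1/4}X\sigma^{1/4})(\sigma^{-1/4}M\sigma^{-1/4})\big] = \big\langle \sigma^{1/4}X^{\dagger}\sigma^{1/4},\ \sigma^{-1/4}M\sigma^{-1/4}\big\rangle ,
\]
and apply Cauchy--Schwarz to get $\norm{M}_1 \le \norm{\sigma^{1/4}X^{\dagger}\sigma^{1/4}}_2 \cdot \norm{\sigma^{-1/4}M\sigma^{-1/4}}_2$. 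The second factor already equals $\sqrt{\Tr[\sigma^{-1/4}M^{\dagger}\sigma^{-1/2}M\sigma^{-1/4}]} = \sqrt{\Tr[\sigma^{-1/4}M\sigma^{-1/2}M^{\dagger}\sigma^{-1/4}]}$ (the two being equal by cyclicity), so everything reduces to bounding the first factor by $\sqrt{\Tr[\sigma]}$.

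For that bound I would write $\norm{\sigma^{1/4}X^{\dagger}\sigma^{1/4}}_2^2 = \Tr[\sigma^{1/2}X\sigma^{1/2}X^{\dagger}] = \langle \sigma^{1/2}X,\ X\sigma^{1/2}\rangle$ and apply Cauchy--Schwarz a second time to obtain the bound $\norm{\sigma^{1/2}X}_2\,\norm{X\sigma^{1/2}}_2$; since $\norm{\sigma^{1/2}X}_2^2 = \Tr[\sigma XX^{\dagger}] \le \norm{X}_\infty^2\Tr[\sigma] \le \Tr[\sigma]$ and likewise $\norm{X\sigma^{1/2}}_2^2 = \Tr[\sigma X^{\dagger}X] \le \Tr[\sigma]$, this gives $\norm{\sigma^{1/4}X^{\dagger}\sigma^{1/4}}_2 \le \sqrt{\Tr[\sigma]}$ and hence the first inequality of the Fact. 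The Hermitian case is then immediate: if $M = M^{\dagger}$ then $\sigma^{-1/4}M\sigma^{-1/2}M^{\dagger}\sigma^{-1/4} = (\sigma^{-1/4}M\sigma^{-1/4})^{2}$, so the right-hand side collapses to $\sqrt{\Tr[\sigma]\,\Tr[(\sigma^{-1/4}M\sigma^{-1/4})^{2}]}$. An equivalent route that avoids the dual-norm step is to write $M = \sigma^{1/4}(\sigma^{-1/4}M\sigma^{-1/4})\sigma^{1/4}$ and apply H\"older for Schatten norms with exponents $(4,\tfrac{4}{3})$ and then $(2,4)$, using $\norm{\sigma^{1/4}}_4 = \Tr[\sigma]^{1/4}$; this yields exactly the same bound.

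The one point that genuinely needs care — and the only obstacle worth flagging — is the bookkeeping around the Moore--Penrose pseudoinverse when $\sigma$ is not of full rank. Inserting $\sigma^{\pm 1/4}$ as above silently replaces $M$ by $P_\sigma M P_\sigma$, where $P_\sigma$ projects onto $\mathrm{supp}(\sigma)$, because $\sigma^{-1/4}\sigma^{1/4} = P_\sigma$; the identity $\Tr[XM] = \Tr[(\sigma^{1/4}X\sigma^{1/4})(\sigma^{-1/4}M\sigma^{-1/4})]$ therefore requires $M = P_\sigma M P_\sigma$. I would handle this by stating the Fact under the standing convention that $\sigma$ is positive definite on the space carrying $M$ (equivalently $\mathrm{supp}(M),\mathrm{supp}(M^{\dagger}) \subseteq \mathrm{supp}(\sigma)$) — which is precisely the regime in which the Fact is applied in the decoupling proofs — noting that in general the right-hand side depends only on the compression of $M$ to $\mathrm{supp}(\sigma)$, so the inequality should be read with that understanding. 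Modulo this, the argument is a two-line computation.
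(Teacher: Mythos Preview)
The paper does not supply its own proof of this Fact: it is listed in the ``Useful Facts'' subsection as a known lemma (it is essentially Lemma~3.6 in Dupuis' thesis \cite{Dupuis_thesis}), with no argument given. Your proposal is therefore not competing with anything in the paper, and the argument you sketch --- polar decomposition to get $\norm{M}_1=\Tr[XM]$ with a contraction $X$, insertion of $\sigma^{\pm 1/4}$, and two Cauchy--Schwarz steps --- is exactly the standard route and is correct as written. Your remark about the Moore--Penrose pseudoinverse is well taken and matches how the Fact is actually used downstream (where $\sigma$ is chosen as a truncated marginal and one implicitly works on its support); the paper itself glosses over this support issue.
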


\section{Inner Bounds for the 
QMAC using the Multi Sender Decoupling Theorem}

We will first consider the task of entanglement transmission. As before, 
we first consider the seemingly more general problem: We are given a 
QMAC $\mathcal{N}^{A'B'\rightarrow C}$ and two states pure states 
$\psi^{AC_1R_1}$ and $\varphi^{BC_2R_2}$, with Alice holding the system 
$A$, Bob the system $B$ and Charlie the systems $C_1C_2$. $R_1R_2$ are 
the reference registers. Alice and Bob wish to send the registers $A$ 
and $B$ to Charlie through one use of the channel $\mathcal{N}$, such 
that at the end of the protocol, the state that Charlie holds is close 
to $\psi^{AC_1R_1}\varphi^{BC_2R_2}$. To do this, we must show the 
existence of encoders $\mathcal{E}_1$ and $\mathcal{E}_2$ and a 
decoder $\mathcal{D}$ such that

\begin{align*}
\norm{\mathcal{D}\circ \mathcal{N}\circ \big(\mathcal{E}\otimes 
\mathcal{E}_2\big)\big(\psi\otimes \varphi\big)-
\psi\otimes \varphi}_1\leq \epsilon
\end{align*}

We consider the complementary channel 
$\bar{\mathcal{N}}^{A'B'\rightarrow E}$ and the randomized encoders 
$\mathcal{E}_1$ and $\mathcal{E}_2$ such that

\begin{align}\label{needcultidecoup}
\bar{N}^{A'B'\rightarrow E}\circ 
\big(\mathcal{E}^{A\rightarrow A'}_1\otimes 
\mathcal{E}^{B\rightarrow B'}_2\big) 
\big(\psi^{AR_1}\otimes \varphi^{BR_2}\big) 
\approx 
\bar{N}^{A'B'\rightarrow E}\circ 
\big(\mathcal{E}^{A\rightarrow A'}_1\otimes 
\mathcal{E}^{B\rightarrow B'}_2\big) \big(\psi^{A}\otimes 
\varphi^{B}\big)\otimes \big(\psi^{R_1}\otimes \varphi^{R_2}\big)
\end{align}

We will first fix a pure control state 
$\ket{\omega}^{A"B"CE}\coloneqq 
\mathcal{U}^{A'B'\rightarrow CE}_{\mathcal{N}}
\ket{\Omega}^{A"A'}\ket{\Delta}^{B"B'}$. We consider randomized encoders 
$\mathcal{E}_1$ and $\mathcal{E}_2$, where the randomness is derived 
from independently picked unitaries $U_1$ and $U_2$, each of which 
is identically distributed with respect to the Haar measure. The single 
user decoupling theorem will clearly not work here, and hence we use 
our multisender decoupling theorem instead. Using that theorem, we show 
that there exist decoders which obey \cref{needcultidecoup}, as long 
as  the following entropic inequalities are satisfied:

\begin{enumerate}
\item 
$-\tilde{H}_{2,\delta}(A"B"|E)-
\tilde{H}_{2,\delta}(A|R_1)_{\psi}-
\tilde{H}_{2,\delta}(B|R_2)_{\varphi}\leq \log\epsilon$
\item 
$-\tilde{H}_{2,\delta}(B"|E)_{\omega}-
\tilde{H}_{2,\delta}(B|R_2)_{\varphi}\leq \log\epsilon$
\item 
$-\tilde{H}_{2,\delta}(A"|E)_{\omega}-
\tilde{H}_{2,\delta}(A|R_1)_{\varphi}\leq \log \epsilon$
\end{enumerate}
 
One should note that to finish the argument, one still has to show the 
existence of two \emph{fixed} isometric encoders $V_1$ and $V_2$ which 
perform almost as well as the randomized decoders. This argument however 
does not require the power of the multi sender decoupling theorem, a 
two separate applications of the single sender decoupling theorem provide 
a proof, with error estimates in terms of 
$H_{\max}(A)_{\psi}-\Tilde{H}_{2,\delta}(A")_{\omega}$ and 
$H_{\max}(B)_{\varphi}-\Tilde{H}_{2,\delta}(B")_{\omega}$. The reader is 
referred to \cref{dupuismac} for a detailed proof of the claims made above.

It is now easy to show inner bounds for the entanglement transmission for 
the QMAC from these bounds. We simply set $\psi^{AC_1R_1}$ to be 
$\Phi^{R_1M_1}\otimes \Phi^{\Tilde{A}C_1}$ and 
$\varphi^{BC_2R_2}$ to be $\Phi^{R_2M_2}\otimes \Phi^{\tilde{B}C_2}$. 
Recall that $\Phi^{\Tilde{A}C_1}$ and $\Phi^{\tilde{B}C_2}$ represent 
pre-shared entanglement. So the inner bounds we derive are for 
\emph{partial} entanglement assistance. For the unassisted inner bounds 
we simply set the systems $\Tilde{A}$ and $\Tilde{B}$ to be trivial. 
Please refer to \cref{thm:QMAC} for details.

We note that it is possible to design an entanglement generation protocol 
for the QMAC as well using the multi sender decoupling theorem. The idea 
is as follows: we consider the control state 
$\ket{\omega}^{ABCE}\coloneqq 
\mathcal{U}^{A'B'\rightarrow CE}_{\mathcal{N}}
\ket{\Omega}^{AA'}\ket{\Delta}^{BB'}$. Consider the projectors 
$\Pi^{A\rightarrow R_1}_1$ and $\Pi^{B\rightarrow R_2}_2$, of rank 
$\abs{R_1}$ and $\abs{R_2}$, where $R_1$ and $R_2$ are subspaces of $A$ 
and $B$ respectively. The idea is we hit the systems $A$ and $B$ with 
the operators $\frac{\abs{A}}{\abs{R_1}}\Pi^{A\rightarrow R_1}_1 U^{A}$ 
and $\frac{\abs{B}}{\abs{R_2}}\Pi^{A\rightarrow R_2}_2 U^{B}$ 
respectively, where $U^A$ and $U^B$ are Haar random unitaries. We want to 
show that, on average, the systems $R_1$ and $R_2$ are decoupled from 
the system $E$. 

We note that, in contrast to the case of entanglement transmission, 
where the random unitary is a part of the encoder and acts on the system 
to be transmitted, in the case of entanglement generation, the averaging 
is actually done of the action of the random unitaries on the 
\emph{purifying system}. Because of this, the case for entanglement 
generation seems potentially more challenging, as, instead on acting 
on two tensored subsystems ($\psi^{AR_1}\otimes \varphi^{BR_2}$), the 
random unitaries act on a state which are entangled via the system 
$E$ viz. $\omega^{ABE}$. 

Nonetheless, our multisender decoupling theorem is general enough to 
handle this case as well (see \cref{twodecoup}) and we are indeed able 
to show that on average the systems $R_1$ and $R_2$ are decoupled with $E$.

The proofs of the claims made above are included in the Appendix.

\section{The Multi-Sender Decoupling Theorem}\label{multidecoupthm}
Before we move on to the general multi sender decoupling theorem it will 
be instructive to see the proof for the case of only $2$ senders. 
The proof for more than $2$ senders requires heavy notation. 
Hence we defer its exposition to a later section.

\subsection{A Warm-Up: The 2-Sender Decoupling Theorem}
\textbf{Notation :}  We will sometimes abbreviate the symbol for the 
unitary $U^{A_i}$ as $U_i$ to ease the notation. As before we will 
use the $'$ accent in conjunction with the name of a register to 
denote isomorphic copies of the original system, for e.g. $A$ and $A'$.\\
\vspace{2mm}\\
\noindent We will require the following lemma:
\begin{lemma}\label{schurweylgeneral}
Given a linear operator $M$ on the space $A_1A_1'A_2A_2'$ the following 
holds
\begin{align*} 
&=\int\big(U_1^{\otimes 2}\otimes U_2^{\otimes 2}\big)\cdot M~dU_1dU_2 \\
&= \alpha_{00}\mathbb{I}^{A_1A_1'}\otimes \mathbb{I}^{A_2A_2'}
	+\alpha_{01}F^{A_1A_1'}\otimes \mathbb{I}^{A_2A_2'}\\ 
&+\alpha_{10}\mathbb{I}^{A_1A_1'}\otimes F^{A_2A_2'}+
	\alpha_{11}F^{A_1A_1'}\otimes F^{A_2A_2'}
\end{align*}
where the coefficients $\alpha_{ij}$ are given  by 
\begin{align*}
    \abs{A_1A_2}~
    \begin{bmatrix}
    \abs{A_2} & 1 \\
    1 & \abs{A_2}
    \end{bmatrix} \bigotimes
    \begin{bmatrix}
    \abs{A_1} & 1 \\
    1 & \abs{A_1}
    \end{bmatrix}
    \begin{bmatrix}
    \alpha_{00} \\ \alpha_{01} \\ \alpha_{10} \\ \alpha_{11}
    \end{bmatrix}
    =  
    \begin{bmatrix}
\Tr[M] \\ 
\Tr[F^{A_1A_1'}\otimes \mathbb{I}^{A_2A_2'}M] \\ 
\Tr[\mathbb{I}^{A_1A_1'}\otimes F^{A_2A_2'}M] \\ 
	   \Tr[F^{A_1A_1'}\otimes F^{A_2A_2'}M]
    \end{bmatrix}
\end{align*}
\end{lemma}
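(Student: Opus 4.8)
The plan is to reduce the two-sender average to two independent applications of the single-sender second-moment formula, namely Fact~\ref{tensor integ}. First I would observe that the integral factorizes: since $U_1$ acts only on $A_1A_1'$ and $U_2$ only on $A_2A_2'$, and the two Haar integrals are independent, we can write
\begin{align*}
\int\big(U_1^{\otimes 2}\otimes U_2^{\otimes 2}\big)\cdot M\, dU_1 dU_2
= \int_{U_2} \big(\mathbb{I}\otimes U_2^{\otimes 2}\big)\cdot
\Big( \int_{U_1} \big(U_1^{\otimes 2}\otimes \mathbb{I}\big)\cdot M\, dU_1\Big)\, dU_2.
\end{align*}
Applying Fact~\ref{tensor integ} to the inner integral over $U_1$ (treating $M$ as an operator on $A_1A_1'$ with ``coefficients'' that are operators on $A_2A_2'$), the result is of the form $\mathbb{I}^{A_1A_1'}\otimes P + F^{A_1A_1'}\otimes Q$ where $P,Q$ are operators on $A_2A_2'$ determined by the two linear equations in Fact~\ref{tensor integ}; concretely $P$ and $Q$ solve $\Tr_{A_1A_1'}[M] = |A_1|^2 P + |A_1| Q$ and $\Tr_{A_1A_1'}[(F^{A_1A_1'}\otimes \mathbb{I})M] = |A_1| P + |A_1|^2 Q$, where the partial traces are over the $A_1A_1'$ factor only.

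Next I would apply Fact~\ref{tensor integ} again to the outer integral over $U_2$, once to $P$ and once to $Q$. This yields $\int_{U_2} (\mathbb{I}\otimes U_2^{\otimes 2})\cdot(\mathbb{I}^{A_1A_1'}\otimes P) = \mathbb{I}^{A_1A_1'}\otimes(\alpha_{00}\mathbb{I}^{A_2A_2'} + \alpha_{01}F^{A_2A_2'})$ and similarly $\int_{U_2}(\mathbb{I}\otimes U_2^{\otimes 2})\cdot(F^{A_1A_1'}\otimes Q) = F^{A_1A_1'}\otimes(\alpha_{10}\mathbb{I}^{A_2A_2'} + \alpha_{11}F^{A_2A_2'})$, which establishes that the final answer has the claimed four-term form. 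It remains only to identify the coefficients. Composing the two systems of linear equations, the map $(\alpha_{00},\alpha_{01},\alpha_{10},\alpha_{11})^T \mapsto$ (the four traces) is exactly the tensor (Kronecker) product of the two $2\times 2$ matrices $\begin{bmatrix}|A_1| & 1\\ 1 & |A_1|\end{bmatrix}$ and $\begin{bmatrix}|A_2| & 1\\ 1 & |A_2|\end{bmatrix}$ up to the scalar $|A_1A_2|$ coming from the overall normalization, because taking traces of the four-term expansion against $\mathbb{I}\otimes\mathbb{I}$, $F\otimes\mathbb{I}$, $\mathbb{I}\otimes F$, $F\otimes F$ uses the identities $\Tr[\mathbb{I}^{A_iA_i'}] = |A_i|^2$, $\Tr[F^{A_iA_i'}] = |A_i|$, and the factorization of traces over tensor products. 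Matching this against the statement (and being slightly careful about which tensor factor is listed first in the Kronecker product) completes the proof.

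The only genuinely delicate point I foresee is bookkeeping: making sure that the partial traces in the intermediate step are taken over the correct subsystem, that the ``coefficient operators'' $P, Q$ are correctly propagated through the second integral, and that the ordering convention in the Kronecker product $\begin{bmatrix}|A_2| & 1\\ 1 & |A_2|\end{bmatrix}\otimes\begin{bmatrix}|A_1| & 1\\ 1 & |A_1|\end{bmatrix}$ in the statement matches the order in which $U_1$ and $U_2$ were integrated. None of this is conceptually hard — it is the routine ``iterate the $2\times 2$ Schur--Weyl formula'' argument — but the indices need care. An alternative, more symmetric route is to directly verify the four-term ansatz by taking Hilbert--Schmidt inner products of both sides with each of $\mathbb{I}^{A_1A_1'}\otimes\mathbb{I}^{A_2A_2'}$, $F^{A_1A_1'}\otimes\mathbb{I}^{A_2A_2'}$, $\mathbb{I}^{A_1A_1'}\otimes F^{A_2A_2'}$, $F^{A_1A_1'}\otimes F^{A_2A_2'}$, using the swap trick (Fact~\ref{swaptrick}) and the unitary invariance of the Haar measure to move the unitaries off $M$; this produces the $4\times 4$ linear system directly and is perhaps cleaner to write, so I would likely present that version.
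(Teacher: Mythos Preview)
Your proposal is correct and follows essentially the same route as the paper: both reduce the double Haar average to two applications of Fact~\ref{tensor integ} via linearity, and then pin down the four coefficients by tracing against $\mathbb{I}\otimes\mathbb{I}$, $F\otimes\mathbb{I}$, $\mathbb{I}\otimes F$, $F\otimes F$. The only cosmetic difference is that the paper first writes a Schmidt decomposition $M=\sum_i X_i^{A_1A_1'}\otimes (X_i')^{A_2A_2'}$ and applies Fact~\ref{tensor integ} term by term to each factor, whereas you integrate iteratively with operator-valued intermediate results $P,Q$; these are equivalent bookkeeping choices, and your closing ``alternative'' (trace both sides against the four basis operators) is precisely how the paper extracts the $4\times 4$ linear system.
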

\begin{proof}
The proof is an easy extension of \cref{tensor integ} and the linearity 
of integration. We give it for completeness.\\
We expand $M^{A_1A_1'A_2A_2'}$ in a Schmidt decomposition as :
$M=\Sigma_i \left(X_i^{A_1A_1'} \otimes (X'_i)^{A_2A_2'} \right)$. Then:
\begin{align*}
\int \left( 
\big[ U_1^{\otimes 2} \otimes U_2^{\otimes 2} \big] \cdot M  \right) 
dU_1 dU_2 
&= \int \left( \Sigma_i \big[ U_1^{\otimes 2} \otimes U_2^{\otimes 2} \big]
\cdot \{ X_i^{A_1A_1'} \otimes (X'_i)^{A_2A_2'} \}  \right) dU_1 dU_2  \\
&= \Sigma_i \left[ \int \left( \big[ U_1^{\otimes 2} \cdot 
   X_i^{A_1A_1'}\big] \otimes \big[U_2^{\otimes 2}  \cdot 
	\otimes (X'_i)^{A_2A_2'} \big]  \right) dU_1 dU_2 \right]\\
&= \Sigma_i \left[ \int \left(  U_1^{\otimes 2} \cdot X_i^{A_1A_1'} \right)
	dU_1 \otimes 
\int \left( U_2^{\otimes 2}  \cdot \otimes (X'_i)^{A_2A_2'}  \right)dU_2 
	\right] \\
& \overset{a}{=} \Sigma_i \left[ \left( \alpha_i I^{A_1 A_1'} + 
\beta_i F^{A_1 A_1'} \right) \otimes \left( \alpha'_i I^{A_2 A_2'} + 
\beta'_i F^{A_2 A_2'} \right) \right]\\
& \overset{b}{=} \alpha_{00} I^{A_1A_1'} \otimes I^{A_2A_2'} + 
\alpha_{01} F^{A_1A_1'} \otimes I^{A_2A_2'} + 
\alpha_{10} I^{A_1A_1'} \otimes F^{A_2A_2'} + 
\alpha_{11} F^{A_1A_1'} \otimes F^{A_2A_2'}
\end{align*}
where,
\begin{itemize}
\item{(a)} follows by \cref{tensor integ}.
\item{(b)} holds by the identification that 
$\alpha_{00}:= \Sigma_i \alpha_i, \; \alpha_{01}:= \Sigma_i \beta_i, \; 
 \alpha_{10}:= \Sigma_i \alpha'_i \text{ and } 
\alpha_{11}:= \Sigma_i \beta'_i$. 
\end{itemize}
In order to obtain the values of the coefficients 
$\{ \alpha_{kl} \}_{k,l = 0}^{1}$, we use the identities 
$\Tr \left[ \int U_1^{\otimes 2} \cdot X_i dU_1  \right] = 
\Tr[X_i] = \alpha_i |A_1|^2 + \beta_i |A_1|$ and 
$\Tr \left[F \int U_1^{\otimes 2} \cdot X_i dU_1 \right]=
 \Tr[FX_i]=\alpha_i |A_1|+\beta_i |A_1|^2$ 
in equality (b) and solving the system of equations simultaneously, 
completes the proof. 
\end{proof}

We now state and prove the main technical tool of this work, the 
$2$ sender decoupling theorem:

\begin{theorem}\label{twodecoup}
Let $\rho^{A_1A_2R}$ be a density operator, 
$\mathcal{T}^{A_1A_2\rightarrow E}$ be a CP map, and define 
$\omega^{A'_1A'_2E}\coloneqq 
(\mathbb{I}^{A_1A_2}\otimes\mathcal{T})(\Phi^{A_1A'_1}\otimes 
\Phi^{A_2A'_2})$. For a given $\delta>0$, we define 
$\sigma^E\coloneqq \omega^E_{\delta}$ as the positive semidefinite matrix 
obtained by zeroing out the smallest eigenvalues of $\omega^E$ that 
sum to at most $\delta$. 
We define $\zeta\coloneqq \rho^{R}_{\delta}$ similarly. 

Define 
\begin{enumerate}
\item $\Tilde{\mathcal{T}}\coloneqq {\sigma^E}^{-1/4}\cdot \mathcal{T}$
\item $\Tilde{\rho}^{A_1A_2R}\coloneqq 
	{\zeta^R}^{-1/4}\cdot \rho^{A_1A_2AR}$
\item $\Tilde{\omega}^{A'_1A'_2E}\coloneqq 
	\Tilde{\mathcal{T}}(\Phi^{A_1A'_1}\otimes \Phi^{A_2A'_2})$  
\end{enumerate}

Then,
\begin{align*}
&\int\norm{\mathcal{T}\{ (U^{A_1}\otimes U^{A_2})\cdot \rho^{A_1A_2R}) \}
-\omega^E\otimes \rho^{R}}_1 dU^{A_1}dU^{A_2} \\
& \leq\Big(\delta+\frac{|A_2|^2}{|A_2|^2-1} 
\Big\{\norm{\trho^{A_1R}}_2^2\norm{\tomega^{A'_1E}}_2^2+
	\norm{\trho^{A_2R}}_2^2\norm{\tomega^{A'_2E}}_2^2+
	\norm{\trho^{A_1A_2R}}_2^2
	\norm{\tomega^{A'_1A'_2E}}_2^2\Big\}\Big)^{1/2}
\end{align*}
where $|A_2|>|A_1|$.
\end{theorem}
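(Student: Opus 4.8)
The plan is to adapt Dupuis' proof of the single-sender decoupling theorem, with \cref{schurweylgeneral} playing the role that \cref{tensor integ} plays there. Write $\E$ for the average over $U^{A_1}$ and $U^{A_2}$, and set $M_{U_1,U_2}:=\mathcal{T}\{(U^{A_1}\otimes U^{A_2})\cdot\rho^{A_1A_2R}\}-\omega^E\otimes\rho^R$; this operator is Hermitian, and since averaging the conjugation $U^{A_1}\otimes U^{A_2}$ sends $\rho^{A_1A_2R}$ to $\pi^{A_1}\otimes\pi^{A_2}\otimes\rho^R$ we have $\E\,M_{U_1,U_2}=0$ and $\E\,\mathcal{T}\{(U^{A_1}\otimes U^{A_2})\cdot\rho\}=\omega^E\otimes\rho^R$. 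The first step is the weighted Cauchy--Schwarz inequality (\cref{cauchyschwartz}) with weight $\sigma:=\sigma^E\otimes\zeta^R$: for each fixed pair of unitaries it gives $\norm{M}_1\leq\sqrt{\Tr[\sigma^E]\Tr[\zeta^R]}\,\norm{\sigma^{-1/4}M\sigma^{-1/4}}_2\leq\norm{\sigma^{-1/4}M\sigma^{-1/4}}_2$, using $\Tr[\sigma^E]\leq\Tr[\omega^E]\leq1$ and $\Tr[\zeta^R]\leq1$. Squaring, averaging, and applying Jensen's inequality to the concave square root, the theorem reduces to an upper bound on the averaged second moment $\E\,\Tr[(\sigma^{-1/4}M\sigma^{-1/4})^2]$.

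For the second moment I would expand $M=\Theta_U-\Xi$ with $\Theta_U:=\mathcal{T}\{(U^{A_1}\otimes U^{A_2})\cdot\rho\}$ and $\Xi:=\omega^E\otimes\rho^R$. Since $\E\,\Theta_U=\Xi$, the cross terms collapse and the averaged second moment equals $\E\,\Tr[\widetilde{\Theta}_U^2]-\Tr[\widetilde{\Xi}^2]$, where a tilde denotes conjugation by $\sigma^{-1/4}=(\sigma^E)^{-1/4}\otimes(\zeta^R)^{-1/4}$; note $\widetilde{\Xi}=\sqrt{\sigma^E}\otimes\sqrt{\zeta^R}$ because $\omega^E$ and $\rho^R$ agree with $\sigma^E$ and $\zeta^R$ on the relevant supports, so $\Tr[\widetilde{\Xi}^2]=\Tr[\sigma^E]\Tr[\zeta^R]$. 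To evaluate $\E\,\Tr[\widetilde{\Theta}_U^2]$ I would apply the swap trick (\cref{swaptrick}) to rewrite it as $\Tr[(\widetilde{\Theta}_U\otimes\widetilde{\Theta}_U)\,F^{ER:E'R'}]$, use a Stinespring-type representation of $\mathcal{T}$ (\cref{fact:sStinespring}) to pull the channel and the weight operators outside the trace so that the dependence on the unitaries appears only through $(U_1^{\otimes2}\otimes U_2^{\otimes2})\cdot\rho^{\otimes2}$ on the doubled $A$-systems, and then invoke \cref{schurweylgeneral} (treating $RR'$ as a passive coefficient space) to replace this average by $\sum_{S\subseteq\{1,2\}}\alpha_S\,(F^{A_SA_S'}\otimes\mathbb{I}^{A_{\bar S}A_{\bar S}'})$. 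Re-applying the swap trick and unwinding the definitions of $\trho$ and $\tomega$, the $S=\{1\},\{2\},\{1,2\}$ terms become (weighted) copies of $\norm{\trho^{A_1R}}_2^2\norm{\tomega^{A'_1E}}_2^2$, $\norm{\trho^{A_2R}}_2^2\norm{\tomega^{A'_2E}}_2^2$ and $\norm{\trho^{A_1A_2R}}_2^2\norm{\tomega^{A'_1A'_2E}}_2^2$, while the $S=\emptyset$ term nearly cancels $\Tr[\widetilde{\Xi}^2]$. The $\alpha_S$ arise from inverting the Kronecker product of the two $2\times2$ matrices in \cref{schurweylgeneral}; the leftover from the imperfect $S=\emptyset$ cancellation, together with the off-diagonal parts of every $\alpha_S$, is a sum of \emph{negative} cross terms, and \cref{boundlem} — which yields $\norm{\trho^{A_iR}}_2^2\geq\abs{A_i}^{-1}\Tr[\zeta^R]$ and $\norm{\tomega^{A'_iE}}_2^2\geq\abs{A_i}^{-1}\Tr[\sigma^E]$ (the relevant marginals are $\sqrt{\zeta^R}$ and $\sqrt{\sigma^E}$), along with analogous comparisons among the three subsystem $2$-norms — lets one show these contributions are nonpositive and collapses all three surviving coefficients to at most $\frac{\abs{A_2}^2}{\abs{A_2}^2-1}$; this is the step that uses the hypothesis $\abs{A_2}>\abs{A_1}$.

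The main obstacle — and the source of the additive $\delta$ — is the eigenvalue truncation. The Cauchy--Schwarz step with the pseudoinverse $\sigma^{-1/4}$ is only literally valid when $M$ is supported on $\mathrm{supp}(\sigma^E)\otimes\mathrm{supp}(\zeta^R)$, and $\Theta_U$ need not be: its $R$-marginal is $\rho^R$ and its $E$-marginal is only bounded by $\mathcal{T}(\mathbb{I}^{A_1A_2})=\abs{A_1}\abs{A_2}\,\omega^E$. I would handle this by first compressing $M$ to $PMP$, where $P$ projects onto $\mathrm{supp}(\sigma^E)\otimes\mathrm{supp}(\zeta^R)$, running the argument above on $PMP$ (which produces the same $\widetilde{\Theta}_U,\widetilde{\Xi}$ and hence the bound just described), and then controlling the averaged discarded mass $\E\,\norm{M-PMP}_1$. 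The crucial point is that, although for a fixed pair of unitaries the discarded $E$-mass could be as large as $\abs{A_1}\abs{A_2}\,\delta$, the elementary estimate $\norm{(\mathbb{I}-Q)\,\Theta}_1\leq\sqrt{\Tr[(\mathbb{I}-Q)\Theta]\,\Tr[\Theta]}$ for positive semidefinite $\Theta$ and a projector $Q$, followed by Jensen's inequality and the exact identities $\E\,\Theta_U^E=\omega^E$ and $\Theta_U^R=\rho^R$, bounds the averaged discarded mass by $O(\sqrt{\delta})$, since the zeroed eigenvalues of $\omega^E$ and of $\rho^R$ each sum to at most $\delta$; compressing $\Xi$ itself costs a further $\leq\delta$ in trace norm. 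Folding this $O(\sqrt{\delta})$ trace-norm error into the square-root bound produces the single additive $\delta$ under the root (a slightly sharper accounting, or a minor adjustment of the truncation rule defining $\sigma^E,\zeta^R$, recovers exactly the stated constant). Apart from this truncation bookkeeping and the verification that the combinatorial coefficients collapse as claimed, the remainder is a mechanical, if notation-heavy, extension of the single-sender argument, and I expect those two points to absorb most of the genuine effort.
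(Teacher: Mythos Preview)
Your overall skeleton matches the paper exactly: weighted Cauchy--Schwarz (\cref{cauchyschwartz}) with weight $\sigma^E\otimes\zeta^R$, Jensen to push the integral inside the root, expansion of the square with the cross term collapsing because $\E[(U_1\otimes U_2)\cdot\rho]=\pi^{A_1}\otimes\pi^{A_2}\otimes\rho^R$, swap trick, \cref{schurweylgeneral}, and then \cref{boundlem} to control the four coefficients. One cosmetic difference: the paper does not go through a Stinespring representation; it pushes the two copies of $\tilde{\mathcal T}$ to the other side as adjoints and applies \cref{schurweylgeneral} directly to $M=(\tilde{\mathcal T}^\dagger)^{\otimes 2}(F^{EE'})$, which makes the four ``right-hand side'' traces come out immediately as $|A_1|^2|A_2|^2\norm{\tomega^{\hat A^{\mathbf b}E}}_2^2$.

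Where you diverge from the paper is the origin of the additive $\delta$. In the paper there is \emph{no} support-projection or truncation bookkeeping at all: \cref{cauchyschwartz} is applied directly, and the $\delta$ arises only at the very end from the $S=\emptyset$ residual. Concretely, the paper bounds each coefficient separately via \cref{boundlem} to obtain $\alpha_{ij}\le \frac{|A_2|^2}{|A_2|^2-1}\norm{\tomega^{A'^{ij}E}}_2^2$ for all four $(i,j)$ (this is where $|A_2|>|A_1|$ is used), so that
\[
\alpha_{00}\norm{\trho^R}_2^2-\norm{\tomega^E}_2^2\norm{\trho^R}_2^2
\;\le\;\frac{1}{|A_2|^2-1}\,\norm{\tomega^E}_2^2\norm{\trho^R}_2^2
\;\le\;\frac{1}{|A_2|^2-1},
\]
the last step using $\norm{\tomega^E}_2^2=\Tr\sigma^E\le 1$ and $\norm{\trho^R}_2^2=\Tr\zeta^R\le 1$; the paper then simply declares $\frac{1}{|A_2|^2-1}\le\delta$ (so the bound as stated tacitly needs $|A_2|$ large relative to $\delta^{-1}$). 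Your assertion that the $S=\emptyset$ residual together with ``off-diagonal parts'' of the other $\alpha_S$ is a nonpositive combination is \emph{not} what the paper proves, and it is not obvious that it holds: the paper's separate bounds on the $\alpha_{ij}$ leave a strictly positive $\frac{1}{|A_2|^2-1}$ term, and one can cook up parameters consistent with \cref{boundlem} for which $\alpha_{00}>\norm{\tomega^E}_2^2$. If you want to pursue your route you would need a genuinely joint estimate on the four coefficients, not the term-by-term bounds the paper uses; otherwise, drop the negativity claim and absorb the residual as the paper does. Your truncation/support concern is legitimate in principle, but the paper's proof simply does not engage with it.
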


\begin{proof}
We will start by using \cref{cauchyschwartz} with the weighting matrix 
$(\sigma^E\otimes \zeta^R)$. First observe that by definition 
$\Tr[\sigma^E\otimes \zeta^R]\leq 1$. Then,
\begin{align*}
\int\norm{\mathcal{T} \{ (U^{A_1}\otimes U^{A_2})\cdot \rho^{A_1A_2R} \}
-\omega^E\otimes \rho^{R}}_1 
&\leq 
\sqrt{\int \Tr\bigg[\big(\Tilde{\mathcal{T}} 
\{(U^{A_1}\otimes U^{A_2})\cdot \trho^{A_1A_2R}\}-
  \tomega^E\otimes \trho^{R}\big)^2\bigg]}
\end{align*}
Next, opening up the square and using the fact that the integral and 
$\Tilde{\mathcal{T}}$ commute we see that,
\begin{align*}
&\int \Tr\bigg[\big(\Tilde{\mathcal{T}}\{(U^{A_1}\otimes U^{A_2})\cdot 
\trho^{A_1A_2R}\}-\tomega^E\otimes \trho^{R}\big)^2\bigg] \\
&= \int \Tr\bigg[\big(\Tilde{\mathcal{T}}\{(U^{A_1}\otimes U^{A_2})\cdot 
\trho^{A_1A_2R}\}\big)^2\bigg]-
2 \Tr\Bigg[\Tilde{\mathcal{T}}\Big\{\int (U^{A_1}\otimes U^{A_2})\cdot 
\trho^{A_1A_2R}dU_1dU_2\Big\}(\tomega^E\otimes \trho^{R})\bigg] \\
& +\Tr\bigg[(\tomega^E\otimes \trho^{R})^2\bigg] \\
&= \int \Tr\bigg[\big \{\Tilde{\mathcal{T}}(U^{A_1}\otimes U^{A_2})
\cdot \trho^{A_1A_2R})\big \}^2\bigg]-
\Tr\bigg[(\tomega^E\otimes \trho^{R})^2\bigg]
\end{align*}
By standard manipulations, using \cref{swaptrick} and the definition of 
adjoint of an operator, it follows that
\begin{align}
\int \Tr\bigg[\big\{\Tilde{\mathcal{T}}(U^{A_1}\otimes U^{A_2})\cdot 
\trho^{A_1A_2R})\big\}^2&\bigg]dU_1dU_2 \nonumber \\ 
& =\int\Tr\bigg[\big(\trho^{A_1A_2R}\big)^{\otimes 2}
\Big\{\Big(\big({U_1^{\dagger}}^{\otimes 2}\otimes 
{U_2^{\dagger}}^{\otimes 2}\big)\cdot 
{\Tilde{\mathcal{T}}^{\dagger}}^{\otimes 2}
\big(F^{EE'}\big)\Big)\bigotimes F^{RR'}\Big\}\bigg]dU_1dU_2 \nonumber \\
& =\Tr\bigg[\big(\trho^{A_1A_2R}\big)^{\otimes 2}
\Big\{\int\Big(\big({U_1^{\dagger}}^{\otimes 2}\otimes 
{U_2^{\dagger}}^{\otimes 2}\big)\cdot 
{\Tilde{\mathcal{T}}^{\dagger}}^{\otimes 2}\big(F^{EE'}\big)\Big)dU_1dU_2
\bigotimes F^{RR'}\Big\}\bigg] 
\label{eq:adjoint}
\end{align}
We will now use \cref{schurweylgeneral} by plugging in the matrix 
${\Tilde{\mathcal{T}}^{\dagger}}^{\otimes 2}\big(F^{EE'}\big)$ into $M$. 
The first step is to compute the entries of the vector on the R.H.S. 
in the matrix equation in \cref{schurweylgeneral}. We will demonstrate 
one such computation, the rest follow along similar lines.\\
\vspace{2mm}\\
\noindent 
\textbf{Computing $\Tr\bigg[ (F^{A_1A_1'}\otimes F^{A_2A'_2})~
{\Tilde{\mathcal{T}}^{\dagger}}^{\otimes 2}\big(F^{EE'}\big)\bigg]$ :}
\begin{align*}
\lefteqn{
\Tr\bigg[ (F^{A_1A_1'}\otimes F^{A_2A'_2})~
{\Tilde{\mathcal{T}}^{\dagger}}^{\otimes 2}\big(F^{EE'}\big)\bigg]} \\ 
&= 
\Tr\bigg[~\Tilde{\mathcal{T}}^{\otimes 2}\big( F^{A_1A_1'}\otimes 
F^{A_2A'_2}\big)~F^{EE'}\bigg] \\
&= 
\Tr\bigg[~\Tilde{\mathcal{T}}^{\otimes 2}\big( F^{A_1A_1'} \big(I^{A_1} 
\otimes I^{A_1'}\big)\otimes F^{A_2A'_2}\big(I^{A_2} \otimes I^{A_2'}
\big) \big)~F^{EE'}\bigg] \\
& \overset{a}{=} 
|A_1|^2 |A_2|^2\Tr\bigg[~\Tilde{\mathcal{T}}^{\otimes 2}
\big( F^{A_1A_1'} \big(\Tr_{\hat{A}_1}(\Phi^{A_1 \hat{A}_1}) 
\otimes \Tr_{\hat{A}_1'}(\Phi^{A_1' \hat{A}_1'})\big)\\
&\otimes F^{A_2A'_2}\big(\Tr_{\hat{A}_2}(\Phi^{A_2 \hat{A}_2}) 
\otimes \Tr_{\hat{A}_2'}(\Phi^{A_2' \hat{A}_2'})\big) \big)~F^{EE'}\bigg] 
\\
&= |A_1|^2 |A_2|^2\Tr\bigg[~\Tilde{\mathcal{T}}^{\otimes 2}
\big( \Tr_{\hat{A}_1,\hat{A}_1'} \big( (F^{A_1A_1'} \otimes 
I^{\hat{A}_1 \hat{A}_1'})(\Phi^{A_1 \hat{A}_1} \otimes 
\Phi^{A_1' \hat{A}_1'})\big)\\
&\otimes \Tr_{\hat{A}_2,\hat{A}_2'} \big( (F^{A_2A_2'} 
\otimes I^{\hat{A}_2 \hat{A}_2'})(\Phi^{A_2 \hat{A}_2} \otimes 
\Phi^{A_2' \hat{A}_2'})\big)~F^{EE'}\bigg] \\
&= 
|A_1|^2 |A_2|^2\Tr\bigg[~ \Tr_{\hat{A}_1,\hat{A}_1',\hat{A}_2,\hat{A}_2'} 
\big\{ \big( \Tilde{\mathcal{T}}^{\otimes 2}  \otimes 
\mathbb{I}^{\hat{A}_1,\hat{A}_1',\hat{A}_2,\hat{A}_2'} \big)
\big(  \big( (F^{A_1A_1'} \otimes I^{\hat{A}_1 \hat{A}_1'})(\Phi^{A_1 
\hat{A}_1 A_1' \hat{A}_1'})\big)\\
&\otimes \big( (F^{A_2A_2'} \otimes I^{\hat{A}_2 \hat{A}_2'})
(\Phi^{A_2 \hat{A}_2 A_2' \hat{A}_2'})\big) \big\}~F^{EE'}\bigg] \\
&= 
|A_1|^2 |A_2|^2\Tr\bigg[~ \big\{ \big( \Tilde{\mathcal{T}}^{\otimes 2}  
\otimes \mathbb{I}^{\hat{A}_1,\hat{A}_1',\hat{A}_2,\hat{A}_2'} \big)
\big(  \big( (F^{A_1A_1'} \otimes I^{\hat{A}_1 \hat{A}_1'})
(\Phi^{A_1 \hat{A}_1 A_1' \hat{A}_1'})\big)\\
&\otimes \big( (F^{A_2A_2'} \otimes I^{\hat{A}_2 \hat{A}_2'})
(\Phi^{A_2 \hat{A}_2 A_2' \hat{A}_2'})\big) \big\}~
\big( F^{EE'} \otimes I^{\hat{A}_1\hat{A}_1'} \otimes 
	I^{\hat{A}_2,\hat{A}_2'}\big)\bigg] \\
&\overset{b}{=} 
|A_1|^2 |A_2|^2\Tr\bigg[~ \big\{ \big( \Tilde{\mathcal{T}}^{\otimes 2}  
\otimes \mathbb{I}^{\hat{A}_1,\hat{A}_1',\hat{A}_2,\hat{A}_2'} \big)
\big(  \big( (I^{A_1A_1'} \otimes (F^T)^{~\hat{A}_1 \hat{A}_1'})
(\Phi^{A_1 \hat{A}_1 A_1' \hat{A}_1'})\big)\\
&\otimes \big( (I^{A_2A_2'} \otimes (F^T)^{~\hat{A}_2 \hat{A}_2'})
(\Phi^{A_2 \hat{A}_2 A_2' \hat{A}_2'})\big) \big\}~\big( F^{EE'} 
\otimes I^{\hat{A}_1\hat{A}_1'} \otimes 
I^{\hat{A}_2,\hat{A}_2'}\big)\bigg] \\
&= 
|A_1|^2 |A_2|^2\Tr\bigg[~ \big\{ \big( \Tilde{\mathcal{T}}^{\otimes 2}  
\otimes \mathbb{I}^{\hat{A}_1,\hat{A}_1',\hat{A}_2,\hat{A}_2'} \big)
\big( \Phi^{A_1 A_2 \hat{A}_1 \hat{A}_2} \otimes \Phi^{A_1' A_2' 
\hat{A}_1' \hat{A}_2'}\big) \times \\
&~\big( F^{EE'} \otimes (F^T)^{~\hat{A}_1 \hat{A}_1'} \otimes (F^T)^{
	~\hat{A}_1 \hat{A}_2'} \big)\bigg] \\
& = 
|A_1|^2 |A_2|^2 \Tr\bigg[~\Big(\big(\tomega^{\hat{A}_1\hat{A}'_1E} 
\otimes \tomega^{\hat{A}'_1\hat{A}'_2E'} \big) \big( F^{EE'} 
\otimes (F^T)^{~\hat{A}_1 \hat{A}_1'} 
\otimes (F^T)^{~\hat{A}_1 \hat{A}_2'} \big)\bigg] \\
&\overset{c}{=} 
|A_1|^2|A_2|^2 \norm{\tomega^{\hat{A}_1\hat{A}_2E}}_2^2
\end{align*}
where,
\begin{itemize}
\item{(a)} 
follows by defining systems $A_1 \cong \hat{A}_1'$, 
$A_2 \cong \hat{A}_2'$, $\Phi^{A_1 \hat{A}_1},\;
\Phi^{A_1' \hat{A}_1'},\;\Phi^{A_2 \hat{A}_2},\;
\Phi^{A_2' \hat{A}_2'}$ are the maximally entangled states and the fact 
that $I^{A_m}=|A_1|\Tr_{\hat{A}_m}(\Phi^{A_m \hat{A}_m})$ for $m=\{1,2\}$; 
\item{(b)} 
follows from the fact that for maximally entangled states, say  
$\Phi^{AA'}$ and any operator $M^{A}$ it holds that 
$(M^A \otimes I^{A'}) \Phi^{AA'} = (I^A \otimes (M^T)^{A'} )\Phi^{AA'}$ 
with the identification of the systems as 
$A=A_1 A_1', A'=\hat{A}_1 \hat{A}_1'$ and the operator $M$ as the 
swap operator $F$; and 
\item{(c)} 
follows from the \cref{swaptrick} and the observation that $F^T$ is 
equivalent to $F$.\\ 
\end{itemize}
Using similar arguments it can be shown that 
\begin{enumerate}
\item 
$\Tr\bigg[{\Tilde{\mathcal{T}}^{\dagger}}^{\otimes 2}\big(F^{EE'}\big) 
\bigg]=|A_1|^2|A_2|^2 \norm{\tomega^E}_2^2$
\item 
$\Tr\bigg[ F^{A_1A_1'}\otimes \mathbb{I}^{A_2A'_2}~{
\Tilde{\mathcal{T}}^{\dagger}}^{\otimes 2}\big(F^{EE'}\big)\bigg]
=|A_1|^2|A_2|^2\norm{\tomega^{\hat{A}_1E}}^2_2$
\item $\Tr\bigg[ \mathbb{I}^{A_1A_1'}\otimes F^{A_2A'_2}~{
\Tilde{\mathcal{T}}^{\dagger}}^{\otimes 2}\big(F^{EE'}\big)\bigg]
=|A_1|^2|A_2|^2\norm{\tomega^{\hat{A}_2E}}_2^2$
\end{enumerate}
Finally, to get meaningful bounds we need to bound the values of 
$\alpha_{00}, \alpha_{01}, \alpha_{10} \text{ and } \alpha_{11}$. To do 
this we first invert the matrix in \cref{schurweylgeneral} and observe 
the following:
\begin{align}
\label{eq:linearsim}
    \begin{bmatrix}
    \alpha_{00} \\ \alpha_{01} \\ \alpha_{10} \\ \alpha_{11}
    \end{bmatrix} 
&=\frac{\abs{A_1A_2}}{\big(\abs{A_1}^2-1\big)\big(\abs{A_2}^2-1\big)}
    \begin{bmatrix}
\abs{A_2} & -1 \\
-1 & \abs{A_2}
    \end{bmatrix} \bigotimes 
    \begin{bmatrix}
    \abs{A_1} & -1 \\
    -1 & \abs{A_1}
    \end{bmatrix}
    \begin{bmatrix}
    \norm{\tomega^E}_2^2 \\ 
    \norm{\tomega^{\hat{A}_1E}}^2_2 \\ 
    \norm{\tomega^{\hat{A}_2E}}_2^2 \\ 
    \norm{\tomega^{\hat{A}_1\hat{A}_2E}}_2^2
    \end{bmatrix} \\
    &=\frac{\abs{A_1A_2}}{\big(\abs{A_1}^2-1\big)\big(\abs{A_2}^2-1\big)}
    \begin{bmatrix}
    |A_1||A_2| & -\abs{A_2} & -\abs{A_1} & 1 \\
    -\abs{A_2} & |A_1||A_2| & 1 & -\abs{A_1} \\
    -\abs{A_1} & 1 & |A_1||A_2| & -\abs{A_2} \\
    1 & -\abs{A_1} & -\abs{A_2} & |A_1||A_2|
    \end{bmatrix}
    \begin{bmatrix}
    \norm{\tomega^E}_2^2 \\ 
    \norm{\tomega^{\hat{A}_1E}}^2_2 \\ 
    \norm{\tomega^{\hat{A}_2E}}_2^2 \\ 
    \norm{\tomega^{\hat{A}_1\hat{A}_2E}}_2^2
    \end{bmatrix}
\end{align}\\
\vspace{2mm}\\
\textbf{Bounding $\alpha_{00}$}

\noindent Consider the quantity 
\begin{align*}
\mathbf{I}:=
|A_1||A_2|\norm{\tomega^E}_2^2-
\abs{A_2}\norm{\tomega^{\hat{A}_1E}}^2_2-
\abs{A_1}\norm{\tomega^{\hat{A}_2E}}_2^2+
\norm{\tomega^{\hat{A}_1\hat{A}_2E}}_2^2
\end{align*}
By \cref{boundlem} we have that 
\begin{enumerate}
\item 
$\norm{\tomega^{\hat{A}_1\hat{A}_2E}}_2^2\leq 
\abs{A_1}\norm{\tomega^{\hat{A}_2E}}_2^2$
\item 
$\norm{\tomega^{\hat{A}_1E}}_2^2\geq 
\frac{1}{\abs{A_1}}\norm{\tomega^E}_2^2$
\end{enumerate}
The above inequalities imply that
\begin{align*}
\mathbf{I}\leq \frac{\abs{A_2}\cdot 
\big(\abs{A_1}^2-1\big)}{\abs{A_1}}\norm{\tomega^E}_2^2
\end{align*}
Thus on solving the system of Equations~\ref{eq:linearsim} and the 
bound on $\mathbf{I}$ further implies that
\begin{align*}
\alpha_{00} 
&= 
\frac{|A_1||A_2|}{\big(\abs{A_1}^2-1\big)\big(\abs{A_2}^2-1\big)} 
\mathbf{I} \\
&\leq 
\frac{\abs{A_2}^2}{\abs{A_2}^2-1} \norm{\tomega^E}_2^2
\end{align*}
\\
\vspace{2mm}\\
\textbf{Bounding $\alpha_{01}$}\\
\vspace{1mm}\\
Define 
\[
\mathbf{II}\coloneqq \begin{bmatrix} -\abs{A_2} & |A_1||A_2| & 1 & -\abs{A_1}\end{bmatrix} ~  \begin{bmatrix}
    \norm{\tomega^E}_2^2 \\ \norm{\tomega^{\hat{A}_1E}}^2_2 \\ \norm{\tomega^{\hat{A}_2E}}_2^2 \\ \norm{\tomega^{\hat{A}_1\hat{A}_2E}}_2^2
    \end{bmatrix}
\]
By \cref{boundlem} we have 
$\norm{\tomega^{\hat{A}_1E}}^2_2\leq \abs{A_1}\norm{\tomega^E}_2^2$ 
and 
$\norm{\tomega^{\hat{A}_1\hat{A}_2E}}_2^2\geq 
\abs{A_1}^{-1}\norm{\tomega^{\hat{A}_2E}}_2^2$, which implies that
\begin{align*}
\alpha_{01}
&=
\frac{|A_1||A_2|}{\big(\abs{A_1}^2-1\big)\big(\abs{A_2}^2-1\big)} 
\mathbf{II} \\
&\leq 
\frac{\abs{A_2}^2}{\abs{A_2}^2-1} \norm{\tomega^{\hat{A}_1E}}_2^2
\end{align*}
\textbf{Bounding $\alpha_{10}$}\\
\vspace{1mm}\\
Define 
\[
\mathbf{III}\coloneqq 
\begin{bmatrix} 
-\abs{A_1} & 1 & |A_1||A_2| & -\abs{A_2}
\end{bmatrix} ~  
\begin{bmatrix}
\norm{\tomega^E}_2^2 \\ 
\norm{\tomega^{\hat{A}_1E}}^2_2 \\ 
\norm{\tomega^{\hat{A}_2E}}_2^2 \\ 
\norm{\tomega^{\hat{A}_1\hat{A}_2E}}_2^2
\end{bmatrix}
\]
By \cref{boundlem} we have 
$\norm{\tomega^{\hat{A}_2E}}^2_2\leq 
\abs{A_1}\norm{\tomega^{\hat{A}_1\hat{A}_2E}}_2^2$ and 
$\norm{\tomega^{E}}_2^2\geq 
\abs{A_1}^{-1}\norm{\tomega^{\hat{A}_1E}}_2^2$, which implies that
\begin{align*}
\alpha_{10}
&=
\frac{|A_1||A_2|}{\big(\abs{A_1}^2-1\big)\big(\abs{A_2}^2-1\big)} 
\mathbf{III} \\
&\leq 
\frac{\abs{A_2}^2}{\abs{A_2}^2-1} \norm{\tomega^{\hat{A}_2E}}_2^2
\end{align*}
\textbf{Bounding $\alpha_{11}$}\\
\vspace{1mm}\\
Define 
\[
\mathbf{IV}=
\begin{bmatrix}
1 & -\abs{A_1} & -\abs{A_2} & |A_1||A_2|
\end{bmatrix} ~  
\begin{bmatrix}
\norm{\tomega^E}_2^2 \\ 
\norm{\tomega^{\hat{A}_1E}}^2_2 \\ 
\norm{\tomega^{\hat{A}_2E}}_2^2 \\ 
\norm{\tomega^{\hat{A}_1\hat{A}_2E}}_2^2
\end{bmatrix}
\]
By \cref{boundlem} we have 
$\norm{\tomega^{\hat{A}_2E}}^2_2\geq 
\abs{A_1}^{-1}\norm{\tomega^{\hat{A}_1\hat{A}_2E}}_2^2$ and 
$\norm{\tomega^{E}}_2^2\leq 
 \abs{A_1}\norm{\tomega^{\hat{A}_1E}}_2^2$, which implies that
\begin{align*}
\alpha_{11}
&=
\frac{|A_1||A_2|}{\big(\abs{A_1}^2-1\big)\big(\abs{A_2}^2-1\big)} 
\mathbf{IV} \\
&\leq 
\frac{\abs{A_2}^2}{\abs{A_2}^2-1} \norm{\tomega^{\hat{A}_1\hat{A}_2E}}_2^2
\end{align*}
Collating all these bounds and the fact that systems $\hat{A_i}$ can be 
relabelled as $A_i'$ for $i=\{1,~2\}$ we see that
\begin{align*}
\int \Tr\bigg[\big(&\Tilde{\mathcal{T}}
\{(U^{A_1}\otimes U^{A_2})\cdot \trho^{A_1A_2R}\}\big)^2\bigg]dU_1dU_2 \\
&\leq 
\frac{\abs{A_2}^2}{\abs{A_2}^2-1}\bigg[\norm{\trho^R}_2^2
\norm{\tomega^E}_2^2+\norm{\trho^{A_2R}}_2^2\norm{\tomega^{A'_2E}}_2^2+
\norm{\trho^{A_1R}}_2^2\norm{\tomega^{A'_1E}}_2^2+
\norm{\trho^{A_1A_2R}}_2^2\norm{\tomega^{A'_1A'_2E}}_2^2\bigg]
\end{align*}
Plugging this bound into the main expression of the theorem we find that
\begin{align*}
&\int\norm{\mathcal{T}(U^{A_1}\otimes U^{A_2}\cdot \rho^{A_1A_2R})-
\omega^E\otimes \rho^{R} }_1\\
&\leq 
\frac{1}{\abs{A_2}^2-1}\norm{\trho^R}_2^2\norm{\tomega^E}_2^2+
\frac{\abs{A_2}^2}{\abs{A_2}^2-1}
\bigg[\norm{\trho^{A_2R}}_2^2\norm{\tomega^{A'_2E}}_2^2+
\norm{\trho^{A_1R}}_2^2\norm{\tomega^{A'_1E}}_2^2+
\norm{\trho^{A_1A_2R}}_2^2\norm{\tomega^{A'_1A'_2E}}_2^2\bigg]
\end{align*}
By the choice of the weighting matrices $\sigma^E$ and $\zeta^R$ in the 
statement of the theorem, $\norm{\trho^R}_2^2\norm{\tomega^E}_2^2\leq 1$, 
for instance $\sigma^E$ is the matrix obtained by zeroing out the smallest
eigen values that sum up to $\delta$ and hence 
$\norm{\tomega^E}_2^2\leq 1$ and similarly choosing $\zeta^R$ by 
curtailing the marginal density operator $\rho^R$ and finally choosing 
$\abs{A_2}$ large such that the first term is less that $\delta$. This 
concludes the proof.
\end{proof}
We now state a corollary that will be useful in stating our coding 
theorems:
\begin{corollary}\label{codingcor}
Given the same conditions as in \cref{twodecoup} the following holds 
\begin{align*}
&\int\norm{\mathcal{T}\{(U^{A_1}\otimes U^{A_2})\cdot 
\rho^{A_1A_2R}\}-\omega^E\otimes \rho^{R}}_1dU^{A_1}dU^{A_2} \\
& \leq\Big(\delta+2 \cdot 2^{-\Tilde{H}_{2,\delta}(A_1|R)_{\rho}-
\Tilde{H}_{2,\delta}(A_1|E)_{\omega}}+
2^{-\Tilde{H}_{2,\delta}(A_2|R)_{\rho}-
\Tilde{H}_{2,\delta}(A_2|E)_{\omega}}+
2^{-\Tilde{H}_{2,\delta}(A_1A_2|R)_{\rho}-
   \Tilde{H}_{2,\delta}(A_1A_2|E)_{\omega}}\Big)
\end{align*}
\end{corollary}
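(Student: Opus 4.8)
The plan is to derive the corollary directly from \cref{twodecoup}: its bound already has the right shape, so all that is needed is to recognise that every squared Hilbert--Schmidt norm occurring there equals $2^{-\tilde{H}_{2,\delta}}$ for the appropriate pair of systems. Explicitly, I would show
\[
\norm{\trho^{SR}}_2^2 = 2^{-\tilde{H}_{2,\delta}(S|R)_{\rho}}
\quad\text{and}\quad
\norm{\tomega^{S'E}}_2^2 = 2^{-\tilde{H}_{2,\delta}(S|E)_{\omega}}
\qquad\text{for } S \in \{A_1,\,A_2,\,A_1A_2\},
\]
identifying each primed copy $A_i'$ with $A_i$. Plugging these six identities into \cref{twodecoup} and bounding the prefactor $\frac{|A_2|^2}{|A_2|^2-1} \leq 2$ — which is harmless since we may assume $|A_2| \geq 2$, the case $|A_2|=1$ being just the single-sender theorem — then yields the stated estimate (up to taking the square root, which perturbs only the additive $O(\log\epsilon)$ terms in applications).

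For the first family: from the definitions in \cref{twodecoup}, $\trho^{A_1A_2R} = (\I^{A_1A_2}\otimes\zeta^R)^{-1/4}\,\rho^{A_1A_2R}\,(\I^{A_1A_2}\otimes\zeta^R)^{-1/4}$ with $\zeta = \rho^R_\delta$. The weighting operator acts as the identity on $A_1A_2$, hence it commutes with any partial trace over a subsystem of $A_1A_2$ (partial trace is a module map over the operators on the remaining systems), so $\trho^{SR} = (\I^{S}\otimes\zeta^R)^{-1/4}\,\rho^{SR}\,(\I^{S}\otimes\zeta^R)^{-1/4}$ for each $S$. Since $\trho^{SR}$ is Hermitian, cyclicity of the trace gives $\norm{\trho^{SR}}_2^2 = \Tr\big[(\trho^{SR})^2\big] = \Tr\big[\big((\I^{S}\otimes\zeta^R)^{-1/2}\rho^{SR}\big)^2\big]$, which is exactly $2^{-\tilde{H}_{2,\delta}(S|R)_{\rho}}$ provided the curtailed state used inside $\tilde{H}_{2,\delta}(S|R)_{\rho}$ is $\zeta$. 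It is: that curtailment is applied to the $R$-marginal of $\rho^{SR}$, which equals $\rho^R$ for every $S$, so it always produces $\rho^R_\delta = \zeta$. The identities for $\tomega$ are proved the same way, using $\tomega^{A_1'A_2'E} = (\I\otimes\sigma^E)^{-1/4}\,\omega^{A_1'A_2'E}\,(\I\otimes\sigma^E)^{-1/4}$ with $\sigma = \omega^E_\delta$, the triviality of $(\I\otimes\sigma^E)^{-1/4}$ on the primed systems, and the fact that the $E$-marginal of $\omega^{S'E}$ is $\omega^E$ for every $S$.

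The main (mild) obstacle is exactly this consistency check: one needs that the single pair of weighting matrices $\zeta = \rho^R_\delta$ and $\sigma = \omega^E_\delta$ fixed once and for all in \cref{twodecoup} is simultaneously the correct one for all six entropic terms, i.e.\ that $\tilde{H}_{2,\delta}(A_1|R)$, $\tilde{H}_{2,\delta}(A_2|R)$ and $\tilde{H}_{2,\delta}(A_1A_2|R)$ all curtail the same operator (and likewise on the $E$ side). This holds because $\delta$-curtailment of $\rho^R$ (resp.\ of $\omega^E$) depends only on that marginal and not on which $A_i$-systems are retained; once this is noted, the remaining steps — commuting partial traces past the weightings, cyclicity of the trace, and absorbing the constant prefactor — are routine.
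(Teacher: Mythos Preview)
Your proposal is correct and follows the same route as the paper: the paper's own proof is the two-sentence remark that $\frac{|A_2|^2}{|A_2|^2-1}\le 2$ and that the rest is ``trivial by the definition of $\tilde{H}_{2,\delta}(\cdot|\cdot)$'', which is precisely what you carry out, only with the details (commuting partial traces past the weighting, cyclicity, the common-marginal consistency check, and the missing square root) made explicit.
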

\begin{proof}
The proof is easy, since the term 
$\frac{\abs{A}_2^2}{\abs{A}_2^2-1}\leq 2$ for all large $\abs{A_2}$. 
The rest follows trivially by the definition of 
$\Tilde{H}_{2,\delta}(\cdot|\cdot)$.
\end{proof}

\subsection{The Theorem for Multiple Senders}
We now generalise the tensor product decoupling theorem for $k>2$ senders.

\begin{theorem}[Generalised $k$ sender Tensor Product Decoupling Theorem]
\label{gen_decoup} 
Let $\rho^{A_0\ldots A_{k-1}R}$ be a density operator 
that can be thought of as an entangled state between $k$ senders with 
each sender denoted by $\{ A_i\}_{i=0}^ {k-1}$ and a reference $R$. Let  
$\mathcal{T}^{A_0\ldots A_{k-1}\rightarrow E}$ be a CP map, and define 
$\omega^{\hat{A}_0\ldots \hat{A}_{k-1}E}\coloneqq 
 (\mathbf{I}\otimes\mathcal{T})\cdot
 (\bigotimes\limits_{i}\Phi^{A_i\hat{A}_i})$, the Choi state for 
the superoperator $\mathcal{T}$. Then,
\begin{align*}
&\int\norm{\mathcal{T}\big
\{ \big(\bigotimes \limits_{i}U^{A_i}_{\textsc{rand}} \big) \cdot
\rho^{A_0\ldots A_{k-1}R}\big\}-\omega^{E}\otimes \rho^{R}}_1dU_0
\ldots dU_{k-1} \\
&\leq \Big(\frac{\abs{A_1\ldots A_{k-1}}^2}{(\abs{A_{k-1}}^2-1\ldots )
(\abs{A_1}^2-1)}-1\Big)\cdot\norm{\tomega^{E}}^2_2\cdot
\norm{\trho^R}_2^2  \\
&+\frac{\abs{A_0\ldots A_{k-1}}^2}{(\abs{A_{k-1}}^2-1\ldots )
(\abs{A_0}^2-1)}\sum\limits_{\mathbf{b}\neq 0^k} 
\norm{\tomega^{\mathbf{A^{\mathbf{b}}}E}}_2^2
\big(\norm{\trho^{\mathbf{A}^{\mathbf{b}}R}}_2^2\cdot 2^k+
\norm{\trho^R}_2^2\big)
\end{align*}
where we assume that $|A_0|$ is the smallest among the dimensions of the 
registers and the indices $\mathbf{b} \in \{0,1, \ldots, k-1 \}$ are 
represented as bit strings of length $k$.
\end{theorem}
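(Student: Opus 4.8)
The plan is to mimic the proof of \cref{twodecoup} but carry the Schur--Weyl / tensor-integration step for a $k$-fold product of independent Haar unitaries. First I would apply \cref{cauchyschwartz} with the weighting matrix $\sigma^E\otimes\zeta^R$, exactly as in the two-sender case, to reduce bounding the expected trace norm to bounding the expected Hilbert--Schmidt norm squared of the ``tilded'' decoupled difference $\Tilde{\mathcal{T}}\{(\bigotimes_i U^{A_i})\cdot\trho\}-\tomega^E\otimes\trho^R$. Expanding the square, the cross term is killed by the averaging (since $\int \bigotimes_i U^{A_i}\cdot\trho^{A_0\ldots A_{k-1}R}\,dU_i = \pi^{A_0}\otimes\cdots\otimes\pi^{A_{k-1}}\otimes\trho^R$ up to normalisation, which pushes through $\Tilde{\mathcal{T}}$ to give exactly $\tomega^E\otimes\trho^R$), so we are left with $\int\Tr[(\Tilde{\mathcal{T}}\{(\bigotimes_i U^{A_i})\cdot\trho\})^2]\,dU - \Tr[(\tomega^E\otimes\trho^R)^2]$.

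Next I would use \cref{swaptrick} and the definition of the adjoint to rewrite the surviving integral as $\Tr[(\trho^{A_0\ldots A_{k-1}R})^{\otimes 2}\,(\,M\otimes F^{RR'})]$ where $M := \int (\bigotimes_i {U_i^\dagger}^{\otimes 2})\cdot {\Tilde{\mathcal{T}}^\dagger}^{\otimes 2}(F^{EE'})\,dU_0\ldots dU_{k-1}$. The key combinatorial step is the $k$-fold generalisation of \cref{schurweylgeneral}: by the same Schmidt-decomposition-plus-linearity argument used to prove \cref{schurweylgeneral}, $M$ expands as $\sum_{\mathbf{b}\in\{0,1\}^k}\alpha_{\mathbf{b}}\bigotimes_{i}(F^{A_iA_i'})^{b_i}$, where $(F)^0 = \mathbb{I}$, $(F)^1 = F$, and the coefficient vector $(\alpha_{\mathbf{b}})_{\mathbf{b}}$ is obtained by inverting the $2^k\times 2^k$ matrix $\abs{A_0\ldots A_{k-1}}\bigotimes_{i}\begin{bmatrix}\abs{A_i} & 1\\ 1 & \abs{A_i}\end{bmatrix}$ applied to the vector of traces $(\Tr[(\bigotimes_i (F^{A_iA_i'})^{b_i})\,{\Tilde{\mathcal{T}}^\dagger}^{\otimes 2}(F^{EE'})])_{\mathbf{b}}$. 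Each such trace is computed by the same maximally-entangled-state manipulation as in the two-sender proof (insert $I^{A_i} = \abs{A_i}\Tr_{\hat A_i}\Phi^{A_i\hat A_i}$, transpose the swap onto the $\hat A$ side, regroup) and evaluates to $\abs{A_0\ldots A_{k-1}}^2\norm{\tomega^{\mathbf{A^b}E}}_2^2$, where $\mathbf{A^b}$ denotes the subsystems indexed by the support of $\mathbf{b}$. The $2^k\times 2^k$ matrix inverts as a tensor product, $\frac{\abs{A_0\ldots A_{k-1}}}{\prod_i(\abs{A_i}^2-1)}\bigotimes_i\begin{bmatrix}\abs{A_i} & -1\\ -1 & \abs{A_i}\end{bmatrix}$, which gives closed forms for all the $\alpha_{\mathbf{b}}$.

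The final step is to bound each $\alpha_{\mathbf{b}}$ in terms of $\norm{\tomega^{\mathbf{A^b}E}}_2^2$ using \cref{boundlem} repeatedly, generalising the four explicit bounds $\alpha_{00},\alpha_{01},\alpha_{10},\alpha_{11}$ in the two-sender proof. The pattern there is that the ``diagonal'' contribution $\abs{A_0\ldots A_{k-1}}\norm{\tomega^{\mathbf{A^b}E}}_2^2$ dominates and the off-diagonal terms, after applying $\frac{1}{\abs{A_i}}\norm{\tomega^{XE}}_2^2\le\norm{\tomega^{X A_i E}}_2^2\le\abs{A_i}\norm{\tomega^{XE}}_2^2$ with the signs chosen to match, get absorbed, leaving $\alpha_{\mathbf{b}}\le\big(\frac{\abs{A_1\ldots A_{k-1}}^2}{\prod_{i\ge 1}(\abs{A_i}^2-1)}\big)\norm{\tomega^{\mathbf{A^b}E}}_2^2$ for $\mathbf{b}\neq 0^k$ and an analogous bound for $\alpha_{0^k}$ carrying the $\frac{\abs{\ldots}^2}{\prod(\abs{A_i}^2-1)}-1$ factor (the $-1$ coming from subtracting $\Tr[(\tomega^E\otimes\trho^R)^2]$, which exactly cancels the leading $\norm{\tomega^E}_2^2\norm{\trho^R}_2^2$ term). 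Plugging back, contracting $(\trho)^{\otimes 2}$ with $\bigotimes_i (F^{A_iA_i'})^{b_i}\otimes F^{RR'}$ via \cref{swaptrick} yields $\norm{\trho^{\mathbf{A^b}R}}_2^2$ for each $\mathbf{b}$, and collecting everything produces the claimed sum over $\mathbf{b}\neq 0^k$ of $\norm{\tomega^{\mathbf{A^b}E}}_2^2(\norm{\trho^{\mathbf{A^b}R}}_2^2\cdot 2^k+\norm{\trho^R}_2^2)$; the $2^k$ factor is a convenient loose bound absorbing all the cross terms in the matrix product, which is why one does not need to track each entry of the inverse matrix precisely.

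I expect the main obstacle to be purely bookkeeping: keeping the bit-string indexing of the $2^k$ coefficients consistent across the tensor-product matrix inversion and the repeated \cref{boundlem} applications, and choosing the signs of the $\abs{A_i}^{\pm1}$ comparisons so that every off-diagonal contribution is dominated by the single diagonal term $\norm{\tomega^{\mathbf{A^b}E}}_2^2$ rather than by some other $\norm{\tomega^{\mathbf{A^c}E}}_2^2$. The analytic content is entirely contained in \cref{tensor integ}, \cref{swaptrick}, \cref{boundlem} and \cref{cauchyschwartz}, all already available; there is no new idea beyond the two-sender case, only the combinatorics of going from $2^2$ to $2^k$ terms, so I would state the $k$-sender analogue of \cref{schurweylgeneral} as an intermediate lemma (proved by the same Schmidt-decomposition induction on $k$) and then run the estimate once.
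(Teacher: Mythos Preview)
Your approach is the paper's: Cauchy--Schwarz reduction, swap trick, $k$-fold tensor-integration expansion $M=\sum_{\mathbf{b}}\alpha_{\mathbf{b}}\bigotimes_i(F^{A_iA_i'})^{b_i}$, tensor-product matrix inversion, then bound each $\alpha_{\mathbf{b}}$ via \cref{boundlem}. There is one genuine wrinkle you are glossing over, and it is exactly the place where the $k$-sender case differs from the two-sender case. You write that $\alpha_{0^k}$ is bounded by a constant times $\norm{\tomega^E}_2^2$, mirroring the $\alpha_{00}$ bound; but for $k\ge 3$ this is false. In the two-sender proof the off-diagonal contributions to $\alpha_{00}$ all came with favourable sign and could be discarded. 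For general $k$, the sum for $\alpha_{0^k}$ contains \emph{positive} off-diagonal contributions (those with even parity in the index $\mathbf{c}$) that cannot be absorbed into $\norm{\tomega^E}_2^2$. The paper pairs $c_0=0$ with $c_0=1$ using that $|A_0|$ is smallest, but the surviving even-parity pieces leave residual terms $\norm{\tomega^{\mathbf{A}^{\mathbf{c}}E}}_2^2$ for $\mathbf{c}\neq 0^k$; these, contracted against $\mathbb{I}^{A_{[k]}A'_{[k]}}\otimes F^{RR'}$, give $\norm{\tomega^{\mathbf{A}^{\mathbf{b}}E}}_2^2\norm{\trho^R}_2^2$, which is the $\norm{\trho^R}_2^2$ contribution in the theorem's second line that your sketch never explains. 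Relatedly, the $2^k$ is not a last-step fudge: it enters in the bound on each $\alpha_{\mathbf{a}}$ with $\mathbf{a}\neq 0^k$, because all $2^k$ summands are controlled by the single diagonal term via the inequality $\bigl(\prod_i|A_i|^{\overline{c_i}}\bigr)\norm{\tomega^{\mathbf{A}^{\mathbf{a}\oplus\mathbf{c}}E}}_2^2\le\bigl(\prod_i|A_i|\bigr)\norm{\tomega^{\mathbf{A}^{\mathbf{a}}E}}_2^2$; your stated intermediate bound on $\alpha_{\mathbf{b}}$ (missing both the $|A_0|$ factor and the $2^k$) is therefore inconsistent with the final statement you are trying to reach.
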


\begin{proof}
For brevity of notation, let $A_{[k]}:=A_{k-1,k-2, \ldots, 1,0}$ denote 
the system representing the joint state of all the $k$ senders with 
dimension $|A_{[k]}|:=\Pi_{i=0}^{k-1} |A_i|$.
We will use the same definitions of $\tilde{\mathcal{T}},\tomega$ and 
$\trho$ as in \cref{twodecoup}, that is, to represent that 
$\tilde{\cdot}$ denotes conjugation of the underlying operator by 
appropriate weighting matrices arising due to Fact~\ref{cauchyschwartz}. 
We begin with the application of Fact~\ref{cauchyschwartz} as follows:
\begin{align*}
&\int\norm{\mathcal{T}\big\{ \big(\bigotimes 
\limits_{i}U^{A_i}_{\textsc{rand}}\big)\cdot\rho^{A_0\ldots A_{k-1}R}\big\}
-\omega^{E}\otimes \rho^{R}}_1dU_0\ldots dU_{k-1}\\
\leq 
& \sqrt{\int \Tr\Big[\Big(\tilde{\mathcal{T}}\big\{ \big(\bigotimes 
\limits_{i}U^{A_i}_{\textsc{rand}} \big)\cdot\trho^{
	A_0\ldots A_{k-1}R}\big\}-
\tomega^{E}\otimes \trho^{R}\Big)^2\Big]dU_0\ldots dU_{k-1}}
\end{align*}
Recall that
\begin{align*}
&\int \Tr\Big[\Big(\tilde{\mathcal{T}}\big
\{ \big(\bigotimes \limits_{i}U^{A_i}_{\textsc{rand}} \big)\cdot
\trho^{A_0\ldots A_{k-1}R}\big\}-\tomega^{E}\otimes \trho^{R}\Big)^2\Big]
dU_0\ldots dU_{k-1} \\
&=
\int\Tr\Big[\Big(\tilde{\mathcal{T}}\big\{\big(\bigotimes 
\limits_{i}U^{A_i}_{\textsc{rand}}\big)\cdot\trho^{A_0\ldots A_{k-1}R}\big
\}\Big)^2\Big]~d\big(\bigotimes\limits_{i}U_i\big)-\Tr[(\tomega^E)^2]
\cdot\Tr[(\trho^R)^2]
\end{align*}
Again, using standard arguments from Fact~\ref{swaptrick} and manipulation
similar to Equation~\ref{eq:adjoint} in \cref{twodecoup} we see that:
\begin{align}
\label{eq:mainexp}
&\int\Tr\Big[\Big(\tilde{\mathcal{T}}\big\{\big(\bigotimes 
\limits_{i}U^{A_i}_{\textsc{rand}}\big)\cdot\trho^{A_0\ldots A_{k-1}R}
\big\}\Big)^2\Big]d\big(\bigotimes\limits_{i}U_i\big) \nonumber \\
&= 
\Tr\Big[({\trho^{A_0\ldots A_{k-1}R}})^{\otimes 2}
\Big(\E\limits_{U_0^{\dagger A_0}, \ldots,U_{k-1}^{\dagger A_{k-1}} }
\big[\bigotimes\limits _{i}\big(U_i^{\dagger A_i}\otimes 
U_i^{\dagger A'_i}\big)\cdot M^{A_{[k]}~A'_{[k]}}\big]\otimes F^{RR'} 
\Big)\Big]
\end{align}
where $M^{A_{[k]}A'_{[k]}}\coloneqq \big(\tilde{\mathcal{T}}^{
\dagger}\big)^{ \otimes 2}\big(F^{EE'}\big)$ and the expectation is 
taken over independent choice of Haar random unitaries 
$\{ U_i \}_{i=0}^{k-1}$. \\
\vspace{2mm}\\
\noindent From \cref{schurweylgeneral} we have: 
\begin{align}
\label{eq:haarexp}
&\E\limits_{U_0^{\dagger A_0}, \ldots,U_{k-1}^{\dagger A_{k-1}} }
[\bigotimes\limits _{i}\big(U_i^{\dagger A_i}\otimes U_i^{\dagger A'_i}
\big)\cdot M^{A_{[k]}~A'_{[k]}}] \nonumber \\
=&
\E\limits_{U_0^{\dagger A_0}, \ldots,U_{k-1}^{\dagger A_{k-1}} }
[\bigotimes\limits _{i}\big(U_i^{A_i}\otimes U_i^{A'_i}\big)
\cdot M^{A_{[k]}~A'_{[k]}}] \\
&=
\sum\limits_{\mathbf{a}\coloneqq a_{k-1}\ldots a_0=0^k}^{1^k}
\alpha_{\mathbf{a}}\bigotimes\limits_{i}\Big(F^{A_iA'_i}\Big)^{a_i}
\end{align}
where in the last equality we represent the indices 
$\{\mathbf{a}\}_{\mathbf{a}=0}^{k-1}$ in binary as $a_{k-1},\ldots ,a_0$ 
for each $a_i \in {0,1}$ .
To evaluate the coefficients $\alpha_{\mathbf{a}}$ we again apply 
Lemma~\ref{schurweylgeneral} with the following equalities: 
\begin{align}
\label{eq:Tr(M)}
&\Tr (M^{A_{[k]} A_{[k]}^\prime})= 
\Tr  \left( \mathbb{E}_{\otimes_{i=0}^{k-1} U_i^{A_i}}
\left[ (\otimes_{i=0}^{k-1} (U_i^{ \dagger\; A_i} \otimes U_i^{\dagger \; 
A_i^\prime}))\circ (M^{A_{[k]} A_{[k]}^\prime}) \right] \right)    \\ 
\intertext{and} 
\label{eq:Tr(FM)}
& \Tr \left( (\otimes_{i=0}^{k-1} {(F^{A_i A_i^\prime})}^{a_i})(M^{A_{[k]}
A_{[k]}^\prime})  \right) \\
=&\Tr \left( (\otimes_{i=0}^{k-1} {(F^{A_i A_i^\prime})}^{a_i}) 
\mathbb{E}_{\otimes_{i=0}^{k-1} U_i^{A_i}}
\left[ (\otimes_{i=0}^{k-1} (U_i^{ \dagger\; A_i} \otimes 
	U_i^{\dagger \; A_i^\prime}))\circ (M^{A_{[k]} A_{[k]}^\prime}) 
\right] \right) \nonumber \\
.
\end{align}
This gives the matrix equation 
\begin{align}\label{mateq}
    K\cdot 
\begin{bmatrix}\alpha_{0^k} \\
    \vdots \\ \alpha_{\mathbf{a}} \\
    \vdots \\ \alpha_{1^k}
\end{bmatrix}=
|A_{[k]}|^2 
\begin{bmatrix} \vdots \\ 
\norm{\tomega^{\hat{\mathbf{A}}^{\mathbf{b}}E}}_2^2 \\ \vdots
\end{bmatrix}
\end{align}
where, for the bit string $\mathbf{b}\coloneqq b_{k-1}\ldots b_0$ 
we define:
\[\tomega^{\hat{\mathbf{A}}^{\mathbf{b}}E}\coloneqq 
\tomega^{\hat{A}_{k-1}^{b_{k-1}}\ldots \hat{A}_{0}^{b_{0}}E}
\]
The matrix $K$ is a $2^k\times 2^k$ matrix with rows indexed by bit 
vector $\mathbf{a} \in \brak{0,1}^k$ and columns indexed by the bit 
vector $\mathbf{b} \in\brak{0,1}^k$ and is obtained from \cref{eq:Tr(M)} 
and \cref{eq:Tr(FM)} with entries-1 
\begin{equation}
\label{mateq2}
(K)_{\mathbf{b},\mathbf{a}}=|A_{[k]}|\prod\limits_{i=0}^{k-1}
A_i^{\overline{(b_i \oplus a_i)}},
\end{equation}
where $\oplus$ denotes the bit-wise XOR. 
This is not hard to see as the $i$-th term in the product is the 
$(b_i,a_i)$ term of the $i$-th $2\times 2$ matrix in the tensor product 
in \cref{mateq}, which is exactly $\abs{A_i}^{\overline{b_i\oplus a_i}}$. 
\\ \vspace{2mm} \\
Also note that 
The RHS of \cref{mateq} comes from the fact that
\[
\Tr[\bigotimes\limits_{i}\big(F^{A_iA_i'}\big)^{b_i}M]=
\big(\prod\limits_{i\in \brak{0,1,\ldots k-1}}\abs{A_{i}}\Big)^2\norm{
\tomega^{\hat{\mathbf{A}}^{\mathbf{b}}E}}_2^2
\]
This leads to the following representation of $K$:
\begin{align}
\label{eq:matrixK}
K &= 
\abs{A_{k-1}}\abs{A_{k-2}}\ldots \abs{A_0} 
\begin{pmatrix} 
\abs{A_{k-1}} & 1 \\ 1 & \abs{A_{k-1}}
\end{pmatrix} \otimes 
\begin{pmatrix} \abs{A_{k-2}} & 1 \\ 1 & \abs{A_{k-2}}
\end{pmatrix}\otimes \ldots \otimes 
\begin{pmatrix} 
\abs{A_{0}} & 1 \\ 1 & \abs{A_{0}}
\end{pmatrix} \nonumber\\
 &=
\abs{A_{[k]}}\bigotimes\limits_{i\in \brak{k-1,k-2,\ldots, 0}}
\begin{pmatrix} 
\abs{A_i} & 1 \\ 1 & \abs{A_i} 
\end{pmatrix}
\end{align}
From \cref{eq:matrixK} we note that:
\begin{align}\label{mateq3}
K^{-1}=\frac{\abs{A_{[k]}}}{\big(\abs{A_{k-1}}^2-1\big)\ldots 
\big(\abs{A_0}^2-1\big)} \bigotimes\limits_{i\in 
\brak{k-1, k-2,\ldots,0}}
\begin{pmatrix} 
\abs{A_i} & -1 \\ -1 & \abs{A_i} 
\end{pmatrix}
\end{align}
Then coupled with \cref{mateq2} and \cref{mateq}, \cref{mateq3} implies 
that
\begin{align}
(K^{-1})_{\mathbf{a},\mathbf{b}}=
\frac{\abs{A_{[k]}}}{\big(\abs{A_{k-1}}^2-1\big)\ldots
\big(\abs{A_0}^2-1\big)}\prod\limits_{i\in \brak{k-1, k-2 \ldots 0}}
\abs{A_i}^{\overline{b_i\oplus a_i}}(-1)^{a_i\oplus b_i}
\end{align}
This directly implies that
\begin{align*}
\alpha_{\mathbf{a}}=
\frac{\abs{A_{[k]}}}{\big(\abs{A_{k-1}}^2-1\big)\ldots
\big(\abs{A_0}^2-1\big)}\sum\limits_{\mathbf{b}=0^k}^{1^k}
\Big(\prod\limits_{i\in \brak{k-1, k-2 \ldots 0}}
\abs{A_i}^{\overline{b_i\oplus a_i}}(-1)^{a_i\oplus b_i}\Big)
\norm{\tomega^{A_{[k]}^{\mathbf{b}}E}}_2^2
\end{align*}
We will differentiate between the following two cases. 
Define $\mathbf{c}\coloneqq \mathbf{a}\oplus \mathbf{b}$.
\subsection*{Case 1: $\mathbf{a}=0^k$}
and let without loss of generality $A_0$ be the register with the 
smallest dimension. Then, 
\begin{alignat}{2}
\label{eq:17}
\alpha_{\mathbf{a}} &= 
\frac{\abs{A_{k-1}\ldots A_0}^2}{\big(\abs{A_{k-1}}^2-1\big)\ldots 
\big(\abs{A_0}^2-1\big)}\Big( \norm{\tomega^E}_2^2-
\frac{\norm{\tomega^{A_0E}}_2^2}{\abs{A_0}}\Big) \nonumber \\
&+ \frac{\abs{A_{k-1}\ldots A_0}}{\big(\abs{A_{k-1}}^2-1\big)\ldots 
\big(\abs{A_0}^2-1\big)}\sum\limits_{\mathbf{c}'\coloneqq 
c_{k-1}\ldots c_1\neq 0^{k-1}}&\Big[\Big(\prod\limits_{i\neq 0}
\abs{A^{\bar{c}_i}_i}(-1)^{c_i}\Big)\abs{A_0}
\norm{\tomega^{\mathbf{A}^{\mathbf{c}'}E}}_2^2 \nonumber \\
&&-\Big(\prod\limits_{i\neq 0}\abs{A^{\bar{c}_i}_i}(-1)^{c_i}\Big)
\norm{\tomega^{\mathbf{A}^{\mathbf{c}'}A_0E}}_2^2\Big]
\end{alignat}
For $c'$ with odd parity and \cref{boundlem} the term inside the 
summation is:
\begin{align}
\label{eq:odd_c}
\Big(\prod_{i\neq 0}\abs{A^{\bar{c}_i}_i}\Big)
\Big[\norm{\tomega^{\mathbf{A}^{\mathbf{c}'}A_0E}}_2^2-
\abs{A_0}\norm{\tomega^{\mathbf{A}^{\mathbf{c}'}E}}_2^2\Big]\leq 0
\end{align}
For $c'$ with even parity and again \cref{boundlem} we have that:
\begin{align}
\label{eq:even_c}
&\Big(\prod\limits_{i\neq 0}\abs{A^{\bar{c}_i}_i}\Big)
\Big[\abs{A_0}\norm{\tomega^{\mathbf{A}^{\mathbf{c}'}E}}_2^2-
\norm{\tomega^{\mathbf{A}^{\mathbf{c}'}A_0E}}_2^2\Big] \nonumber \\
&\leq 
\Big(\prod\limits_{i\neq 0}\abs{A^{\bar{c}_i}_i}\Big) \cdot 
\Big(\frac{\abs{A_0}^2-1}{\abs{A_0}}\Big) \cdot 
\norm{\tomega^{\mathbf{A}^{\mathbf{c}'}E}}_2^2
\end{align}
Substituting \cref{eq:odd_c} and \cref{eq:even_c} in equation \cref{eq:17}
and using the bound \cref{boundlem} for 
$\norm{\tomega^{A_0E}}_2^2 \geq \norm{\tomega^{E}}_2^2/|A_0|$, we get:
\begin{align}
\label{eq:alpha_0_simplified}
\alpha_{0^k} \leq 
&\norm{\tomega^{E}}^2_2\cdot \frac{\abs{A_{[k]}}^2}
{(\abs{A_{k-1}}^2-1\ldots )(\abs{A_0}^2-1)}
\Big[1-\frac{1}{\abs{A_0}^2}\Big] \nonumber\\
&+\sum\limits_{\mathbf{c}'\neq
0^{k-1}}\norm{\tomega^{\mathbf{A}^{\mathbf{c}'}E}}^2_2\cdot
\frac{\abs{A_0}\cdot \prod\limits_{i\neq 0}\abs{A_i^{\bar{c}_i+1}}}
{(\abs{A_{k-1}}^2-1\ldots )(\abs{A_0}^2-1)}\Big[\abs{A_0}-
\frac{1}{\abs{A_0}}\Big] \nonumber\\
=&\norm{\tomega^{E}}^2_2\cdot \frac{\abs{A_1\ldots
A_{k-1}}^2}{(\abs{A_{k-1}}^2-1\ldots )(\abs{A_1}^2-1)}
\nonumber\\               
&+\sum\limits_{\mathbf{c}\neq
0^{k}, c_0=
0}\norm{\tomega^{\mathbf{A}^{\mathbf{c}}E}}^2_2\cdot
\frac{ \prod\limits_{i\neq 0}\abs{A_i^{\bar{c}_i+1}}}
	{(\abs{A_{k-1}}^2-1\ldots )(\abs{A_1}^2-1)}
\end{align} 

\subsection*{Case 2: $\mathbf{a}\neq 0^k$}
Firstly observe that, given a fixed $\mathbf{a}\in \brak{0,1}^k$, and 
$\mathbf{c}=\mathbf{a}\oplus \mathbf{b}$ for some 
$\mathbf{b}\in \brak{0,1}^k$, 
\begin{equation}
\label{eq:ubonXOR}    
\Big(\prod\limits_{i}\abs{A^{\bar{c}_i}_i}\Big)\cdot
\norm{\tomega^{\mathbf{A}^{\mathbf{a}\oplus\mathbf{c}}E}}_2^2 \leq
\Big(\prod\limits_{i}\abs{A_i}\Big)\cdot \norm{\tomega^{
	\mathbf{A}^{\mathbf{a}}E}}_2^2
\end{equation}

\noindent 
This is easy to verify on a case by case basis, by considering any fixed 
index $i\in [k]$ and iterating through all possible values of the tuple 
$(a_i,c_i)$. The above identity holds in each of these four possible 
cases, which is seen either directly or by invoking \cref{boundlem}, as 
the case demands.

\noindent 
Then we simply bound the value of $\alpha_{\mathbf{a}}$ as follows:
\begin{align}
\label{eq:alphaneq0}
\alpha_{\mathbf{a}} \leq 
&\frac{\abs{A_{[k]}}^2}{(\abs{A_{k-1}}^2-1\ldots )
(\abs{A_0}^2-1)}\norm{\tomega^{\mathbf{A}^{\mathbf{a}}E}}^2_2 \nonumber \\
&+\frac{\abs{A_{[k]}}}{(\abs{A_{k-1}}^2-1\ldots
)(\abs{A_0}^2-1)}\sum\limits_{\mathbf{b}\neq 0}\Big(\prod_{i}
\abs{A_i^{\bar{c}_i}}\Big)\cdot \norm{\tomega^{\mathbf{A}^{
\mathbf{a}\oplus\mathbf{c}}E}}_2^2 \nonumber \\
&\leq \frac{\abs{A_{[k]}}^2}{(\abs{A_{k-1}}^2-1\ldots )(\abs{A_0}^2-1)} 
	\norm{\tomega^{\mathbf{A}^{\mathbf{a}}E}}_2^2\cdot 2^k
\end{align}
where in the first inequality we upper bound every term by its absolute 
values and use \cref{eq:ubonXOR}. Finally we collate the estimates for 
$\alpha_{\mathbf{a}}$ for the two different cases from 
\cref{eq:alpha_0_simplified} and \cref{eq:alphaneq0} and substitute these 
values in \cref{eq:haarexp} to get:
\begin{align}
\label{eq:Haar_final}
&\mathbb{E}_{\otimes_{i=0}^{k-1} U_i^{A_i}}\left[ (\otimes_{i=0}^{k-1} 
(U_i^{ \dagger\; A_i} \otimes U_i^{\dagger \; A_i^\prime}))
\cdot M^{A_[k]A'_{k}} \right] \nonumber\\
&\leq \left( \frac{\abs{{(A_{k-1} \ldots A_1)}}^2}{(A_{k-1}^2-1) 
\ldots (A_1^2-1)} \lVert \tomega^E \rVert_2^2 \right) 
(I^{A_{[k]} A'_{[k]}}) \nonumber\\
&+ 
\left( \sum\limits_{\mathbf{c}\neq
0^{k}, c_0=
0}\norm{\tomega^{\mathbf{A}^{\mathbf{c}}E}}^2_2\cdot
\frac{ \prod\limits_{i\neq 0}\abs{A_i^{\bar{c}_i+1}}}
{(\abs{A_{k-1}}^2-1\ldots )(\abs{A_1}^2-1)} \right) (I^{A_{[k]} A'_{[k]}})
\nonumber\\
&+ \left(\sum\limits_{\mathbf{b}\neq 0^k}^{1^k} 
\frac{\abs{A_{[k]}}^2}{(\abs{A_{k-1}}^2-1\ldots )(\abs{A_0}^2-1)} 
\norm{\tomega^{\mathbf{A}^{\mathbf{a}}E}}_2^2\cdot 2^k \right) 
\bigotimes\limits_{i}\Big(F^{A_iA'_i}\Big)^{a_i}  
\end{align}
By substituting \cref{eq:Haar_final} in \cref{eq:mainexp} we get:
\begin{align*}
&\mathbb{E}_{U_0,\ldots ,U_{k-1}} \Tr\Big[\Big(\tilde{\mathcal{T}}
\big\{ \big(\bigotimes \limits_{i}U^{A_i}_{\textsc{rand}} \big)
\cdot\trho^{A_0\ldots A_{k-1}R}\big\}-\tomega^{E}\otimes \trho^{R}\Big)^2
\Big] \\ 
&\leq 
\Big(\frac{\abs{A_1\ldots A_{k-1}}^2}{(\abs{A_{k-1}}^2-1\ldots )
(\abs{A_1}^2-1)}-1\Big)\cdot\norm{\tomega^{E}}^2_2\cdot
\norm{\trho^R}_2^2  \\
&+\frac{\abs{A_0\ldots A_{k-1}}^2}{(\abs{A_{k-1}}^2-1\ldots )
(\abs{A_0}^2-1)}\sum\limits_{\mathbf{b}\neq 0^k} 
\norm{\tomega^{\mathbf{A^{\mathbf{b}}}E}}_2^2
\big(\norm{\trho^{\mathbf{A}^{\mathbf{b}}R}}_2^2\cdot 2^k+
\norm{\trho^R}_2^2\big)
\end{align*}
This concludes the proof.
\end{proof}
\begin{remark}
Just as in \cref{twodecoup} we can justify that 
$\lVert \tomega^E \rVert_2^2 \lVert \tilde{\rho}^R\rVert_2^2$ is much 
smaller than the second term in \cref{gen_decoup} above. For this, recall 
$\tomega^E := (\omega''_{\delta})^{-\frac{1}{4} \; E} \omega^E 
(\omega''_{\epsilon})^{-\frac{1}{4} \; E} $, with 
$(\omega''_{\epsilon})^{E} $ as the operator defined by zeroing out the 
smallest eigen values of $\omega^E$ that sum up to $\delta$. Thus,
$\lVert \tomega^E \rVert_2^2 \leq 1$.\\
Similarly, define $\zeta^R$ as the operator obtained by zeroing out those 
eigen values of $\rho^R$ that sum to $\delta$. Thus, 
$\lVert \trho^R \rVert_2^2 \leq 1$. Hence,
$
\Rightarrow \lVert \tomega^E \rVert_2^2\lVert \trho^R \rVert_2^2 \leq  1.
$
Thus the term with $\lVert \tomega^E \rVert_2^2\lVert \trho^R \rVert_2^2$ 
can be neglected in the multi-user tensor product decoupling theorem above.
The second term involving\\ 
$\norm{\tomega^{\mathbf{A}^{\mathbf{b}}E}}_2 \times 
\norm{\tilde{\rho}^{\mathbf{A}^{\mathbf{b}}E}}_2$ serves as the entropic 
quantity that gives the rate region for reliable communication for QMAC.  
\end{remark}

\section{The Multiple Access Channel}\label{QMAC}
In this section we will use the results of the previous section to derive 
coding theorems, in the one shot regime for the Quantum Multiple Access 
Channel. The task we will consider is the following : Given a quantum 
multiple access channel $\mathcal{N}^{A'B'\rightarrow C}$ with Alice and 
Bob as senders and Charlie as receiver, consider two pure states 
$\psi^{AC_1R_1}$ and $\varphi^{BC_2R_2}$ where $A$ and $B$ are registers 
that belong to Alice and Bob respectively, $C_1$ and $C_2$ belong to 
Charlie and $R_1$ and $R_2$ are purifying systems. The task is for Alice 
and Bob to send their shares of these states to Charlie by a single use 
of the channel. To do this we will show the existence of encoding 
isometries $\mathcal{U}_{\textsc{Alice}}$ and $\mathcal{V}_{\textsc{bob}}$
and a decodimg CPTP map $\mathcal{D}$ such that 
\[
\norm{\mathcal{D}\circ\mathcal{N}(\mathcal{U}_{\textsc{alice}}\otimes 
\mathcal{V}_{\textsc{bob}}\cdot \psi^{AC_1R_1}\otimes \varphi^{BC_2R_2})-
\psi^{AC_1R_1}\otimes \varphi^{BC_2R_2}}_1\leq \epsilon
\]
for some small $\epsilon$. The strategy we follow closely resembles the 
one for the point to point channel. 
\begin{theorem}\label{dupuismac}
Let $\psi^{AC_1R_1}$ and $\varphi^{BC_2R_2}$ be pure states and the 
registers $C_1$ and $C_2$ are held by Charlie. 
$\mathcal{N}^{A'B'\rightarrow C}$ is a CPTP map. 
$\omega^{A"B"CE}\coloneqq \mathcal{U}_{\mathcal{N}}(\Omega^{A'A"}\otimes 
\Delta^{B'B"})$ where $\Omega^{A'A"}$ and $\Delta^{B'B"}$ are pure states 
and $\abs{A"}=\abs{A'}$ and $\abs{B"}=\abs{B'}$. Then there exist encoding
isometries $\mathcal{U}_{\textsc{alice}}$ and $\mathcal{V}_{\textsc{bob}}$
and a decoding CPTP $\mathcal{D}^{CC_1C_2\rightarrow \hat{A}\hat{B}C_1C_2}$
such that
\begin{align*}
\norm{\mathcal{D}\circ\mathcal{N}((\mathcal{U}_{\textsc{alice}}\otimes 
\mathcal{V}_{\textsc{bob}})\cdot \psi^{AC_1R_1}\otimes \varphi^{BC_2R_2})-
\psi^{AC_1R_1}\otimes \varphi^{BC_2R_2}}_1\leq 2\sqrt{\delta_4}
\end{align*}
where, for some $\delta, \delta_1, \delta_2, \delta_3, \delta_4>0$, we 
have:
\begin{align*}
\delta +2^{-\Tilde{H}_{2,\delta}(A^"B^"|E)_{\omega}-
\Tilde{H}_{2,\delta}(A|R_1)_{\psi}-\Tilde{H}_{2,\delta}(B|R_2)_{\varphi}}+
2^{-\Tilde{H}_{2,\delta}(B^"|E)_{\omega}-
\Tilde{H}_{2,\delta}(B|R_2)_{\varphi}}+
&2^{-\Tilde{H}_{2,\delta}(A^"|E)_{\omega}-
\Tilde{H}_{2,\delta}(A|R_1)_{\psi}}
\coloneqq \delta^2_1 \\
2^{\frac{1}{2}H_{\max}(A)_{\psi}-
\frac{1}{2}\Tilde{H}_{2,\delta}(A")_{\omega}}&\coloneqq \delta_2 \\
2^{\frac{1}{2}H_{\max}(B)_{\varphi}-
\frac{1}{2}\Tilde{H}_{2,\delta}(B")_{\omega}}
&\coloneqq \delta_3\\
\delta_1+4\sqrt{\delta_2}+4\sqrt{\delta_3}+
12\sqrt{\delta_2\delta_3}&\coloneqq \delta_4
\end{align*}
\end{theorem}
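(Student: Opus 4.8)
The plan is to mimic Dupuis' proof of the point-to-point entanglement-transmission theorem (\cref{fact:dupuis_pointtopoint}), with the single-sender decoupling step replaced by the two-sender bound of \cref{codingcor} and with the ``reshaping'' of the input distributions handled by two independent single-sender arguments. Write $\mathcal{U}_{\mathcal{N}}^{A'B'\to CE}$ for the Stinespring dilation of $\mathcal{N}$ and $\bar{\mathcal{N}}^{A'B'\to E}$ for the complementary channel, so that $\omega^{E}$ is the $E$-marginal of the control state $\omega^{A"B"CE}$. After Alice and Bob encode and the channel acts, the global state is pure on $CEC_1R_1C_2R_2$ and Charlie holds $CC_1C_2$; by Uhlmann's theorem the whole problem reduces to exhibiting fixed isometric encoders $\mathcal{U}_{\textsc{alice}}^{A\to A'}$, $\mathcal{V}_{\textsc{bob}}^{B\to B'}$ for which $E$ is decoupled from the references $R_1R_2$, with quantitative error $\delta_4$.

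First I would set up the randomized encoders. Following Dupuis, for each sender one builds, from its Haar-random unitary and the control state, a randomized encoder ($\mathcal{E}_1^{U_1}:A\to A'$ from $U_1$, $\mathcal{E}_2^{U_2}:B\to B'$ from $U_2$) whose role is to reshape the message register towards the target channel-input weight ($\Omega^{A'}$ for Alice, $\Delta^{B'}$ for Bob) while simultaneously spreading it with the random unitary. The structural point is that $\bar{\mathcal{N}}\circ(\mathcal{E}_1^{U_1}\otimes\mathcal{E}_2^{U_2})=\mathcal{T}\circ\big((U_1\otimes U_2)\cdot\big)$ for a fixed CP trace-non-increasing map $\mathcal{T}:AB\to E$ whose Choi state is exactly the $A"B"E$-part of $\omega$ — the $\Omega,\Delta$ weights used in the reshaping pass from the input side of the entropies to the reference (``$A",B"$'') side, which is precisely what makes the Choi state of $\mathcal{T}$ equal $\omega^{A"B"E}$. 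Applying \cref{codingcor} to $\mathcal{T}$ with $\rho:=\psi^{AR_1}\otimes\varphi^{BR_2}$, $R:=R_1R_2$, and using additivity of $\Tilde{H}_{2,\delta}$ on tensor products (so $\Tilde{H}_{2,\delta}(A|R_1R_2)=\Tilde{H}_{2,\delta}(A|R_1)_\psi$, $\Tilde{H}_{2,\delta}(B|R_1R_2)=\Tilde{H}_{2,\delta}(B|R_2)_\varphi$, $\Tilde{H}_{2,\delta}(AB|R_1R_2)=\Tilde{H}_{2,\delta}(A|R_1)_\psi+\Tilde{H}_{2,\delta}(B|R_2)_\varphi$) together with the relabelling of the Choi references as $A",B"$, yields
\[
\int\norm{\bar{\mathcal{N}}\big((\mathcal{E}_1^{U_1}\otimes\mathcal{E}_2^{U_2})(\psi^{AR_1}\otimes\varphi^{BR_2})\big)-\omega^E\otimes\psi^{R_1}\otimes\varphi^{R_2}}_1\,dU_1\,dU_2\leq\delta_1 ,
\]
with $\delta_1^2$ the quantity displayed in the statement (up to trivial constant factors).

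Next comes the reshaping and the derandomization. By two separate single-sender arguments of exactly the kind used in \cref{fact:dupuis_pointtopoint}, there are genuine isometric encoders $V_1:A\to A'$, $V_2:B\to B'$ whose action on $\psi^{AR_1}$ (resp. $\varphi^{BR_2}$) is, on average over $U_1$ (resp. $U_2$), within $O(\delta_2)$ (resp. $O(\delta_3)$) of that of $\mathcal{E}_1^{U_1}$ (resp. $\mathcal{E}_2^{U_2}$), the entropic cost being $H_{\max}(A)_\psi-\Tilde{H}_{2,\delta}(A")_\omega$ (resp. $H_{\max}(B)_\varphi-\Tilde{H}_{2,\delta}(B")_\omega$), i.e. $\delta_2^2$ (resp. $\delta_3^2$) before taking the square root. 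A Markov/union argument over the product Haar measure on $(U_1,U_2)$ then selects a single pair $(u_1,u_2)$ for which the decoupling estimate above, the Alice-reshaping estimate and the Bob-reshaping estimate all hold simultaneously; substituting $\mathcal{U}_{\textsc{alice}}:=V_1$ and $\mathcal{V}_{\textsc{bob}}:=V_2$ for the randomized encoders in the decoupling statement and invoking the triangle inequality twice (once per substitution, whence the cross term) gives
\[
\norm{\bar{\mathcal{N}}\big((\mathcal{U}_{\textsc{alice}}\otimes\mathcal{V}_{\textsc{bob}})(\psi^{AR_1}\otimes\varphi^{BR_2})\big)-\omega^E\otimes\psi^{R_1}\otimes\varphi^{R_2}}_1\leq\delta_1+4\sqrt{\delta_2}+4\sqrt{\delta_3}+12\sqrt{\delta_2\delta_3}=\delta_4 .
\]

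Finally, Uhlmann's theorem closes the argument. Both sides of the last display are reductions of pure states — the left-hand one of the post-encoding global pure state on $CEC_1R_1C_2R_2$, the right-hand one of $\ket{\omega}^{A"B"CE}\otimes\ket{\psi}^{\hat AC_1R_1}\otimes\ket{\varphi}^{\hat BC_2R_2}$ (with $\hat A\cong A$, $\hat B\cong B$) — so \cref{fact:Uhlmann} produces a partial isometry from Charlie's registers $CC_1C_2$ into $A"B"C\hat AC_1\hat BC_2$ achieving fidelity at least $1-\delta_4/2$ with the target; defining $\mathcal{D}^{CC_1C_2\to\hat A\hat BC_1C_2}$ by conjugating with this isometry and then discarding $A"B"C$, and converting fidelity to trace distance by the standard Fuchs--van de Graaf relations (the registers $C_1,C_2$ riding along untouched), gives
\[
\norm{\mathcal{D}\circ\mathcal{N}\big((\mathcal{U}_{\textsc{alice}}\otimes\mathcal{V}_{\textsc{bob}})\cdot\psi^{AC_1R_1}\otimes\varphi^{BC_2R_2}\big)-\psi^{AC_1R_1}\otimes\varphi^{BC_2R_2}}_1\leq 2\sqrt{\delta_4} .
\]
The step I expect to be the main obstacle is the third one: the reshaping replaces the randomized encoders by fixed isometries and so perturbs the very CP map to which \cref{codingcor} was applied, so one must check that the decoupling guarantee survives this replacement, and then track carefully how the three independent error sources recombine through the triangle inequality into exactly $\delta_1+4\sqrt{\delta_2}+4\sqrt{\delta_3}+12\sqrt{\delta_2\delta_3}$. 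A secondary point requiring care is that the identifications of the various $\Tilde{H}_{2,\delta}$ quantities — tensor additivity in the presence of the eigenvalue truncation, the passage of the $\Omega,\Delta$ weights onto the ``$A",B"$'' side, and the Choi-reference relabelling $\hat A,\hat B\to A",B"$ — really do hold (up to the usual $O(\delta)$ adjustments of the truncation parameter) for the truncated definition of the entropy used here.
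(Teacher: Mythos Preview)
Your plan matches the paper's proof: apply the two-sender decoupling bound (\cref{codingcor}) to a CP map $\mathcal{T}$ built from $\bar{\mathcal{N}}$ and the control weights $\Omega,\Delta$ (so that its Choi state is $\omega^{A"B"E}$) to obtain the $\delta_1$ estimate; apply two single-sender decouplings for the reshaping to obtain $\delta_2,\delta_3$; derandomize all three by Markov plus union bound; produce isometric encoders via Uhlmann; combine by the triangle inequality into $\delta_4$; and finish with a second Uhlmann step for the decoder.

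One point in your reshaping paragraph should be sharpened, because as written it inverts the order of operations. The single-sender step does \emph{not} exhibit a fixed isometry $V_1$ that is close to the randomized encoder $\mathcal{E}_1^{U_1}$ on average over $U_1$. What it shows is that the \emph{non-isometric} randomized encoder approximately preserves the $C_1R_1$ marginal on average, i.e.\ $\int\bigl\lVert(\mathbb{I}^{C_1R_1}\otimes\mathcal{E})(U_1W_1\cdot\psi^{AC_1R_1})-\psi^{C_1R_1}\bigr\rVert_1\,dU_1\le\delta_2$ for the appropriate ``trace-out'' map $\mathcal{E}$. Only \emph{after} Markov/union fixes a specific pair $(u_1,u_2)$ (each bound now holding with a factor $4$) does Uhlmann produce the isometry $\mathcal{U}_{\textsc{alice}}$ --- which therefore depends on $u_1$ --- with $\bigl\lVert\,\lvert A"\rvert\,\mathrm{op}(\Omega)u_1W_1\cdot\psi^{AC_1R_1}-\mathcal{U}_{\textsc{alice}}\cdot\psi^{AC_1R_1}\bigr\rVert_1\le4\sqrt{\delta_2}$. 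This is exactly where the $\sqrt{\delta_2}$ (rather than $\delta_2$) enters, and it also resolves your worry about ``perturbing the CP map'': once $(u_1,u_2)$ is fixed, the triangle inequality and monotonicity of the trace norm under $\bar{\mathcal{N}}$ and partial trace transfer the $4\sqrt{\delta_2},4\sqrt{\delta_3}$ encoder bounds directly into the decoupling estimate, yielding the cross term $12\sqrt{\delta_2\delta_3}$ from the tensor-product replacement just as in the paper.
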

\begin{proof}
Let the states $\ket{\Omega}^{\gol{A}A'}$ and $\ket{\Delta}^{\gol{B}B'}$ 
to be copies of the original $\ket{\Omega}^{A"A'}$ and 
$\ket{\Delta}^{A"A'}$ states, where $\abs{\gol{A}}=\abs{A"}$ and 
$\abs{\gol{B}}=\abs{B"}$.
Define 
\[
\mathcal{T}^{\gol{A}\gol{B}\rightarrow E}(\rho)\coloneqq
\abs{A"B"}~\bar{\mathcal{N}}(\mathrm{op}_{\gol{A}\rightarrow
A'}(\ket{\Omega})\otimes \mathrm{op}_{\gol{B}\rightarrow B'}
(\ket{\Delta})\cdot \rho)
\] 
Firstly, observe that :
\begin{align}
\omega^{A"B"E} &=  \Tr_C[\omega^{A"B"CE}]\\
&= \bar{\mathcal{N}}^{A'B'\rightarrow E}(\Omega^{A"A'}\otimes 
	\Delta^{B"B'}) \\
&= \mathcal{T}\otimes \mathbb{I}^{A"B"}(\Phi^{\gol{A}A"}\otimes 
	\Phi^{\gol{B}B"})
\end{align}

Let $W_1^{A\rightarrow \gol{A}}$ and $W_2^{B\rightarrow \gol{B}}$ be two 
isometries. Then the tensor product decoupling theorem then implies:
\begin{align*}
&\int\norm{\mathcal{T}((U_{\textsc{rand}}^{\gol{A}}W_1\otimes 
V_{\textsc{rand}}^{\gol{B}}W_2)\cdot (\psi^{AR_1}\otimes \varphi^{BR_2}))
- \omega^{E}\otimes \psi^{R_1}\otimes \varphi^{R_2}}_1 dU dV \\
&\leq \sqrt{\delta +2^{-\Tilde{H}_{2,\delta}(A^"B^"|E)_{\omega}-
\Tilde{H}_{2,\delta}(A|R_1)_{\psi}-\Tilde{H}_{2,\delta}(B|R_2)_{\varphi}}+
2^{-\Tilde{H}_{2,\delta}(B^"|E)_{\omega}-
\Tilde{H}_{2,\delta}(B|R_2)_{\varphi}}+
2^{-\Tilde{H}_{2,\delta}(A^"|E)_{\omega}-
\Tilde{H}_{2,\delta}(A|R_1)_{\psi}}} \\
    &= \delta_1
\end{align*}
Now we show the existence of two isometries $U_{\textsc{alice}}$ and
$V_{\textsc{bob}}$ which approximately emulate the action of the
operators $\sqrt{\abs{A"}}~\mathrm{op}_{\gol{A}\rightarrow
A'}(\Omega)U^{\gol{A}}_{\textsc{rand}}W_1$ and
$\sqrt{\abs{B"}}~\mathrm{op}_{\gol{B}\rightarrow B'}(\Delta)
V^{\gol{B}}_{\textsc{rand}}W_2$. To that end define the maps :
\begin{enumerate}
\item $\mathcal{E}^{\gol{A}\rightarrow G}(\rho)\coloneqq
\abs{A"}\Tr[\mathrm{op}_{\gol{A}\rightarrow A'}(\Omega)\cdot\rho]$
\item $\mathcal{F}^{\gol{B}\rightarrow G'}(\rho)\coloneqq
\abs{B"}\Tr[\mathrm{op}_{\gol{B}\rightarrow B'}(\Delta)\cdot\rho]$
\end{enumerate} 
where $G$ and $G'$ are one dimensional systems. Then, using the vanilla 
(non smooth) decoupling theorem twice we get
\begin{enumerate}
\item 
\[ 
\int \norm{\mathbb{I}^{C_1R_1}\otimes\mathcal{E}
(U_{\textsc{rand}}W_1\cdot \psi^{AC_1R_1})-\psi^{C_1R_1}}_1 dU
\leq 2^{\frac{1}{2}H_{\max}(A)_{\psi}-
\frac{1}{2}\Tilde{H}_{2,\delta}(A")_{\omega}}= \delta_2
\]
\item 
\[ 
\int \norm{\mathbb{I}^{C_2R_2}\otimes\mathcal{F}
(V_{\textsc{rand}}W_2\cdot \varphi^{BC_2R_2})-\varphi^{C_2R_2}}_1 dV
\leq 2^{\frac{1}{2}H_{\max}(B)_{\varphi}-
\frac{1}{2}\Tilde{H}_{2,\delta}(B")_{\omega}}= \delta_3
\]
\end{enumerate}
where we have used the facts that 
$\mathcal{E}(\Phi^{\gol{A}A"})=\omega^{A"}$ and 
$\mathcal{F}(\Phi^{\gol{B}B"})=\omega^{B"}$.  \\

Consider the random variables defined as follows:
\begin{enumerate}
\item 
$X\coloneqq \norm{\mathcal{T}(
(U_{\textsc{rand}}^{\gol{A}}W_1\otimes V_{\textsc{rand}}^{\gol{B}}W_2)
\cdot (\psi^{AR_1}\otimes \varphi^{BR_2})) - 
\omega^{E}\otimes \psi^{R_1}\otimes \varphi^{R_2}}_1$
\item 
$Y\coloneqq \norm{\mathbb{I}^{R_1}\otimes\mathcal{E}(
	U_{\textsc{rand}}W_1\cdot \psi^{AC_1R_1})-\psi^{C_1R_1}}_1$
\item 
$Z\coloneqq\norm{\mathbb{I}^{R_2}\otimes\mathcal{F}(V_{\textsc{rand}}
		W_2\cdot \varphi^{BC_2R_2})-\varphi^{C_2R_2}}_1
$
\end{enumerate}
and the following events:
\begin{enumerate}
    \item $E_1\coloneqq \{X\geq 4\delta_1\}$
    \item $E_2\coloneqq \{Y\geq 4\delta_2\}$
    \item $E_3\coloneqq \{Z\geq 4\delta_3\}$
\end{enumerate}
Now by Markov's inequality (for instance\\
$\Pr[E_2] \leq \big[ \frac{ \int  \norm{\mathbb{I}^{C_1R_1}\otimes
\mathcal{E}(U_{\textsc{rand}}W_1\cdot \psi^{AC_1R_1})-\psi^{C_1R_1}}_1 dU}
{4 \delta_2} 
\big]  \leq \frac{1}{4}$ ) and union bound for events $E_1,~E_2,E_3$ we get
\[
\Pr[\overline{E_1}\cap \overline{E_2}\cap \overline{E_3}]>0
\]
which implies that there exists fixed pair of unitaries 
$U_{\textsc{rand}}^{\gol{A}}$ and $V_{\textsc{rand}}^{\gol{B}}$ which 
satisfy the event $\overline{E_1}\cap \overline{E_2}\cap \overline{E_2}$. 
Fix such a pair of unitaries. Then, from Uhlmann's theorem we see that 
Fact~\ref{fact:Uhlmann}, there exist isometries 
$\mathcal{U}_{\textsc{alice}}^{A\rightarrow A'}$ and 
$\mathcal{V}_{\textsc{bob}}^{B\rightarrow B'}$ such that:
\begin{align}
& \norm{\abs{A"}\mathrm{op}_{\gol{A}\rightarrow A'}
(\Omega)U^{\gol{A}}_{\textsc{fixed}}W_1\cdot \psi^{AC_1R_1}-
\mathcal{U}_{\textsc{alice}}^{A\rightarrow A'}\cdot \psi^{AC_1R_1}}_1
\leq 4\sqrt{\delta_2} \\
& \norm{\abs{B"}\mathrm{op}_{\gol{B}\rightarrow B'}(\Delta)
V^{\gol{B}}_{\textsc{fixed}}W_2\cdot \varphi^{BC_2R_2}-
\mathcal{V}_{\textsc{bob}}^{B\rightarrow B'}\cdot \varphi^{BC_2R_2}}_1
\leq 4\sqrt{\delta_3}
\end{align}
Define $\Tr[\abs{B"}\mathrm{op}_{\gol{B}\rightarrow B'}
(\Delta)V^{\gol{B}}_{\textsc{fixed}}W_2\cdot \psi^{BC_2R_2}]\coloneqq c_0$.
Since trace is a quantum operation, from the equations above we see that 
\[
\abs{c_0-1}\leq 4\sqrt{\delta_3}
\]
This gives:
\begin{align}
&\norm{\abs{A"B"}(\mathrm{op}_{\gol{A}\rightarrow
A'}(\Omega)\mathcal{U}^{\gol{A}}_{\textsc{fixed}}W_1\otimes 
\mathrm{op}_{\gol{B}\rightarrow B'}(\Delta)
V^{\gol{B}}_{\textsc{fixed}}W_2)\cdot (\psi^{AC_1R_1}\otimes 
\varphi^{BC_2R_2})-\mathcal{U}_{\textsc{alice}}\otimes 
\mathcal{V}_{\textsc{bob}}\cdot \psi^{AC_1R_1}\otimes 
\varphi^{BC_2R_2}}_1 \\
\leq & \norm{\abs{B"}\mathrm{op}_{\gol{B}\rightarrow
B'}(\Delta)V^{\gol{B}}_{\textsc{fixed}}W_2\cdot \psi^{BC_2R_2}}_1\times
\norm{\abs{A"}\mathrm{op}_{\gol{A}\rightarrow A'}(\Omega)
U^{\gol{A}}_{\textsc{fixed}}W_1\cdot \psi^{AC_1R_1}-
\mathcal{U}_{\textsc{alice}}^{A\rightarrow A'}\cdot \psi^{AC_1R_1}}_1 \\
&+\norm{\mathcal{U}_{\textsc{alice}}^{A\rightarrow A'}\cdot
\psi^{AC_1R_1}}_1\times \norm{\abs{B"}
\mathrm{op}_{\gol{B}\rightarrow B'}(\Delta)
V^{\gol{B}}_{\textsc{fixed}}W_2\cdot \varphi^{BC_2R_2}-
\mathcal{V}_{\textsc{bob}}^{B\rightarrow B'}\cdot \varphi^{BC_2R_2}}_1 \\
\leq &(1+4\sqrt{\delta_3})\times (4\sqrt{\delta_2})+4\sqrt{\delta_3}
\end{align}
where we bound $c_0$ by $(1+4\sqrt{\delta_2})$. Finally, we use the 
triangle inequality and the monotonicity of $1$-norm under a quantum 
operation (which is partial trace over $C_1C_2$ followed by 
$\bar{\mathcal{N}}$) to obtain:
\begin{align}
\norm{\bar{\mathcal{N}}(\mathcal{U}_{\textsc{alice}}\otimes 
\mathcal{V}_{\textsc{bob}}\cdot \psi^{AR_1}\otimes \varphi^{BR_2})-
\omega^E\otimes \psi^{R_1}\otimes\varphi^{R_2}}_1
\leq &
\delta_1+4\sqrt{\delta_2}+4\sqrt{\delta_3}+12\sqrt{\delta_2\delta_3} \\
& = \delta_4
\end{align}
We conclude by invoking Uhlmann's theorem Fact~\ref{fact:Uhlmann} again 
for the last inequality to prove the exists a decoder 
$D^{CC_1C_2\rightarrow F\hat{A}\hat{B}C_1C_2}$ such that:
\[
\norm{D\mathcal{U}_{\bar{\mathcal{N}}}(\mathcal{U}_{\textsc{alice}}
\otimes \mathcal{V}_{\textsc{bob}}\cdot \psi^{AC_1R_1}\otimes 
\varphi^{BC_2R_2})-\lambda^{FE}\otimes \psi^{AC_1R_1}\otimes 
\varphi^{BC_2R_2}}_1\leq 2\sqrt{\delta_4}
\]
where $\lambda^{FE}$ is some purification of $\omega^E$.
\end{proof}
We are now ready to state our coding theorem. The task is to use the QMAC 
to send arbitrary states, tensored across the registers belonging to Alice
and Bob, with high fidelity to Charlie. It was shown in \cite{Tema} that 
this task is equivalent to sending one half of two maximally entangled 
states (one belonging to Alice and one to Bob) across the channel. For a
more general setting where there may be entanglement assistance, we 
reformulate the problem in the language of \cref{dupuismac}: \\
\vspace{1mm}\\
We set the states $\psi^{AC_1R_1}$ and $\varphi^{BC_2R_2}$ as 
$\Phi^{R_1M_1}\otimes \Phi^{\Tilde{A}C_1}$ and 
$\Phi^{R_2M_2}\otimes \Phi^{\Tilde{B}C_2}$ respectively. Here the 
registers $M_1\Tilde{A}$ and $M_2\Tilde{B}$ play the roles of $A$ and $B$.
For a given $\epsilon>0$ we say the the rate quadruple $(Q_A,E_A,Q_B,E_B)$
is $\epsilon$-achievable if there exist encoding isometries 
$\mathcal{U}_{\textsc{Alice}},~\mathcal{V}_{\textsc{Bob}}$ and decoding 
CPTP $\mathcal{D}$, with $\abs{M_1}=2^{Q_A}, \abs{\Tilde{A}}=2^{E_A}, 
\abs{M_2}=2^{Q_B}\text{ and } \abs{\Tilde{B}}=2^{E_B}$, such that
\[
\norm{\mathcal{D}\circ\mathcal{N}(\mathcal{U}_{\textsc{alice}}\otimes 
\mathcal{V}_{\textsc{bob}}\cdot \psi^{AC_1R_1}\otimes \varphi^{BC_2R_2})-
\psi^{AC_1R_1}\otimes \varphi^{BC_2R_2}}_1\leq \epsilon
\]
The rate pair $(Q_A,Q_B)$ is achievable for entangled assisted 
transmission if there exist $E_A,E_B\geq 0$ such that $(Q_A,Q_B,E_A,E_B)$ 
is $\epsilon$-achievable. The pair $(Q_A,Q_B)$ is achievable for 
unassisted transmission of $(Q_A,Q_B,0,0)$ is $\epsilon$-achievable. The 
one shot capacity region is the union of all achievable points $(Q_A,Q_B)$
for a fixed $\epsilon$, over all controlling states $\omega$ as defined in
\cref{dupuismac}.
\begin{theorem}
\label{thm:QMAC}
Given a quantum multiple access channel $\mathcal{N}^{A'B'\rightarrow C}$ 
and fixed $\delta>0$, define $\epsilon\coloneqq \delta_4$ where $\delta_4$
is as defined in \cref{dupuismac}. Let $\Omega^{A'A"}$ and $\Delta^{B'B"}$
be pure states and $\omega^{A"B"CE}\coloneqq 
\mathcal{U}_{\mathcal{N}}(\Omega\otimes \Delta)$. Then the rate quadruple 
$(Q_A,E_A,Q_B,E_B)$ is $\epsilon$-achievable for quantum transmission 
with rate limited entanglement assistance through $\mathcal{N}$ if
\begin{align*}
    Q_A-E_A+Q_B-E_B &< \Tilde{H}_{2,\delta}(A"B"|E)_{\omega}+
	2\log(1-\delta) \\
    Q_A-E_A &< \Tilde{H}_{2,\delta}(A"|E)_{\omega}+\log(1-\delta) \\
    Q_B-E_B &< \tilde{H}_{2,\delta}(B"|E)_{\omega}+\log(1-\delta)\\
    Q_A+E_A &<\Tilde{H}_{2,\delta}(A")_{\omega} \\
    Q_B+E_B &<\Tilde{H}_{2,\delta}(B")_{\omega}
\end{align*}
\end{theorem}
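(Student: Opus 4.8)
The plan is to deduce \cref{thm:QMAC} directly from \cref{dupuismac}: that theorem already contains all the multi-sender decoupling machinery, so the remaining work is just to instantiate the two input states as maximally entangled pairs, evaluate on them the entropic quantities $H_{\max}(A)_\psi,\Tilde{H}_{2,\delta}(A|R_1)_\psi$ and their $B$-counterparts, and verify that the five displayed rate inequalities are exactly what makes $\delta_1,\delta_2,\delta_3$ (hence $\delta_4$) small.

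First I would take $\psi^{AC_1R_1}:=\Phi^{R_1M_1}\otimes\Phi^{\Tilde A C_1}$ with $A:=M_1\Tilde A$, $\abs{M_1}=\abs{R_1}=2^{Q_A}$ and $\abs{\Tilde A}=2^{E_A}$, and symmetrically $\varphi^{BC_2R_2}:=\Phi^{R_2M_2}\otimes\Phi^{\Tilde B C_2}$ with $B:=M_2\Tilde B$, $\abs{M_2}=\abs{R_2}=2^{Q_B}$ and $\abs{\Tilde B}=2^{E_B}$. These are products of pure states, so the hypotheses of \cref{dupuismac} hold; the registers $\Tilde A,\Tilde B$ (entangled with Charlie's $C_1,C_2$) are precisely the pre-shared entanglement, and setting them trivial gives the unassisted region. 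Taking partial traces, $\psi^{AR_1}=\Phi^{R_1M_1}\otimes\pi^{\Tilde A}$, $\psi^A=\pi^{M_1}\otimes\pi^{\Tilde A}=\pi^A$, and likewise for $\varphi$. Feeding these into \cref{dupuismac} yields encoding isometries $\mathcal{U}_{\textsc{alice}},\mathcal{V}_{\textsc{bob}}$ and a decoder $\mathcal{D}$ with an error bound monotone in $\delta_1,\delta_2,\delta_3$, so it suffices to translate the entropic hypotheses into rate conditions on $(Q_A,E_A,Q_B,E_B)$.

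Next I would carry out the entropy evaluations. Since $\psi^A=\pi^A$ we get $H_{\max}(A)_\psi=2\log\Tr[\sqrt{\pi^A}]=\log\abs{A}=Q_A+E_A$ and likewise $H_{\max}(B)_\varphi=Q_B+E_B$, so $\delta_2=2^{\frac12(Q_A+E_A)-\frac12\Tilde{H}_{2,\delta}(A")_\omega}$ is controlled exactly when $Q_A+E_A<\Tilde{H}_{2,\delta}(A")_\omega$, and analogously for $\delta_3$. For the conditional tilde $2$-entropy, the marginal $\psi^{R_1}=\pi^{R_1}$ is flat, so $\psi^{R_1}_\delta=\abs{R_1}^{-1}\Pi$ with $\Pi$ a projector of rank $r:=\abs{R_1}-\lfloor\delta\abs{R_1}\rfloor$; a short computation with the Moore--Penrose inverse and \cref{swaptrick} gives $\Tr[((\I^A\otimes\psi^{R_1}_\delta)^{-1/2}\psi^{AR_1})^2]=r^2/(\abs{R_1}\abs{\Tilde A})$, hence $\Tilde{H}_{2,\delta}(A|R_1)_\psi=Q_A+E_A-2\log r=E_A-Q_A-2\log(r/\abs{R_1})\ge E_A-Q_A$, the correction lying in $[0,-2\log(1-\delta)]$; symmetrically $\Tilde{H}_{2,\delta}(B|R_2)_\varphi\ge E_B-Q_B$. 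Plugging these lower bounds into the three exponentials that define $\delta_1^2$ in \cref{dupuismac} bounds their exponents above by $-\Tilde{H}_{2,\delta}(A"B"|E)_\omega+(Q_A-E_A)+(Q_B-E_B)$, by $-\Tilde{H}_{2,\delta}(B"|E)_\omega+(Q_B-E_B)$, and by $-\Tilde{H}_{2,\delta}(A"|E)_\omega+(Q_A-E_A)$ respectively; together with the $\delta_2,\delta_3$ conditions these are exactly the five displayed inequalities once the $\log(1-\delta)$ corrections (and the suppressed $O(\log\epsilon)$ slack) are absorbed into the additive $\log(1-\delta)$ terms, so all of $\delta_1,\delta_2,\delta_3$, and therefore $\delta_4$, are small. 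This is the claim.

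I expect the one genuinely delicate step to be precisely this last bookkeeping: checking that the $\delta$-curtailing of the flat reference spectra, together with the use of the non-smoothed $H_{\max}$ inside $\delta_2,\delta_3$, shifts the achievable region by no more than the stated additive $\log(1-\delta)$ (plus the suppressed $O(\log\epsilon)$) amounts, and not by anything larger. Everything else is routine: the identification of which registers of $\psi,\varphi$ play the roles of $A,B,C_1,C_2,R_1,R_2$ in \cref{dupuismac}, and the elementary evaluation of the entropies on EPR pairs. No new decoupling estimate is needed — the content sits entirely in \cref{dupuismac} and, through it, in the multi-sender decoupling theorem \cref{twodecoup}.
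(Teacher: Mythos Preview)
Your proposal is correct and follows essentially the same route as the paper's proof: instantiate $\psi$ and $\varphi$ as tensor products of EPR pairs with the stated dimensions, invoke \cref{dupuismac}, and evaluate $H_{\max}(A)_\psi$, $H_{\max}(B)_\varphi$, $\Tilde{H}_{2,\delta}(A|R_1)_\psi$, $\Tilde{H}_{2,\delta}(B|R_2)_\varphi$ on these states to convert the $\delta_1,\delta_2,\delta_3$ conditions into the five rate inequalities. Your explicit computation of $\Tilde{H}_{2,\delta}(A|R_1)_\psi$ via the $\delta$-curtailed flat marginal is in fact a bit more careful than the paper's, which simply records the bound $\Tilde{H}_{2,\delta}(\Tilde{A}M_1|R_1)_\psi\ge -Q_A+E_A+\log(1-\delta)$ without derivation; your lower bound $E_A-Q_A$ is actually slightly sharper, so the $\log(1-\delta)$ slack in the stated region is comfortably accommodated.
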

\begin{proof}
The proof is essentially an application of \cref{dupuismac}. First, set 
the states $\psi^{\Tilde{A}M_1C_1R_1}=\Phi^{R_1M_1}\otimes 
\Phi^{\Tilde{A}C_1}$ and 
$\varphi^{\Tilde{B}M_2C_2R_2}=\Phi^{R_2M_2}\otimes \Phi^{\Tilde{B}C_2}$ 
where the registers $\Tilde{A}M_1$ and $\Tilde{B}M_2$ are placeholders for
the registers $A$ and $B$ in \cref{dupuismac}. Then, invoking 
\cref{dupuismac} for the channel $\mathcal{N}$ with controlling state 
$\omega$, we see that there exist encoding isometries 
$\mathcal{U}_{\textsc{Alice}}, \mathcal{V}_{\textsc{bob}}$ and decoding 
CPTP $\mathcal{D}$ such that
\[
\norm{\mathcal{D}\circ\mathcal{N}(\mathcal{U}_{\textsc{alice}}\otimes 
\mathcal{V}_{\textsc{bob}}\cdot \psi^{\Tilde{A}M_1C_1R_1}\otimes 
\varphi^{\Tilde{B}M_2C_2R_2})-\psi^{\Tilde{A}M_1C_1R_1}\otimes 
\varphi^{\Tilde{B}M_2C_2R_2}}_1\leq \epsilon
\]
, where $\epsilon=\delta_4$ and 
\begin{align*}
\delta +2^{-\Tilde{H}_{2,\delta}(A^"B^"|E)_{\omega}-\Tilde{H}_{2,\delta}
(\Tilde{A}M_1|R_1)_{\psi}-\Tilde{H}_{2,\delta}(\Tilde{B}M_2|R_2)_{\varphi}}
+2^{-\Tilde{H}_{2,\delta}(B^"|E)_{\omega}-\Tilde{H}_{2,\delta}
(\Tilde{B}M_2|R_2)_{\varphi}}+&2^{-\Tilde{H}_{2,\delta}(A^"|E)_{\omega}-
\Tilde{H}_{2,\delta}(\Tilde{A}M_1|R_1)_{\psi}}
= \delta^2_1 \\
2^{\frac{1}{2}H_{\max}(\Tilde{A}M_1)_{\psi}-
\frac{1}{2}\Tilde{H}_{2,\delta}(A")_{\omega}}&= \delta_2 \\
2^{\frac{1}{2}H_{\max}(B)_{\varphi}-
\frac{1}{2}\Tilde{H}_{2,\delta}(B")_{\omega}}&= \delta_3\\
\delta_1+4\sqrt{\delta_2}+4\sqrt{\delta_3}+12\sqrt{\delta_2\delta_3}
&= \delta_4
\end{align*}
Observe that 
\begin{align*}
\Tilde{H}_{2,\delta}(\Tilde{A}M_1|R_1)_{\psi} 
\&\geq -Q_A+E_A+\log(1-\delta) \\
\Tilde{H}_{2,\delta}(\Tilde{B}M_2|R_2)_{\varphi} 
&\geq -Q_B+E_A+\log(1-\delta) \\
H_{\max}(\Tilde{A}M_1)_{\psi}& \leq Q_A+E_A \\
H_{\max}(\Tilde{B}M_2)_{\varphi}& \leq Q_B+E_B
\end{align*}
Plugging in these estimates in the expressions for $\delta_1, \delta_2, 
\delta_3$ such that $\delta_4=O(\sqrt{\delta})$ we conclude that the 
statement of the theorem is true.
\end{proof}

\section{Conclusion and Open Problems}
\label{Conclusion and open problem}
In this paper, we have proven a decoupling theorem which involves multiple
random unitaries, in tensor product with each other, chosen independently 
from the Haar measure as the decoupling unitary and the QMAC channel as the
superoperator that results in decoupling the reference system of the input
states and the channel environment, when expectation is taken over these 
unitaries in tensor product. The unitaries in tensor product can be thought
of as independent encoders and the decoupling is achieved as a decoding 
step. The analysis of the error rate in our decoupling theorem leads to the
characterization of an achievable rate region. We then proceed to evaluate
the asymptotic iid limit of our rate region.
However, we cannot recover the asymptotic iid rate region of Yard et al. 
in \cite{Yard_MAC}.
The reason being an immediate open problem, that is to find an optimising 
state that simultaneously smoothes the three different conditional 
R\'enyi $2$-entropies mentioned in Theorem~\ref{thm:QMAC}.

\bibliography{ref.bib}
\appendix 
\section{Appendix}
\newcommand{\h}[1]{#1'}
\newcommand{\hA}{\h{A}}
We present inner bounds for the $QMAC$ for the task of entanglement 
generation.

The following is an easy corollary of \cref{twodecoup} :
\begin{corollary}\label{Buscemi-style}
Given orthogonal projectors $\Pi_1^{A_1\rightarrow E_1}$ and 
$\Pi_2^{A_2\rightarrow E_2}$, define the map 
\[
\mathcal{T}^{A_1A_2\rightarrow E_1E_2}\coloneqq 
\frac{\abs{A_1}}{\abs{E_1}}\cdot \frac{\abs{A_2}}
{\abs{E_2}}\cdot \Pi_1^{A_1\rightarrow E_1}\otimes 
	\Pi_2^{A_2\rightarrow E_2}
\]
Then given the density operator $\rho^{A_1A_2R}$ the following holds:
\begin{align*}
&\int\norm{\mathcal{T}(U^{A_1}_{\textsc{rand}}\otimes 
U^{A_2}_{\textsc{rand}}\cdot \rho^{A_1A_2R})-\pi^{E_1}\otimes 
\pi^{E_2}\otimes \rho^{R}}_1dU^{A_1}dU^{A_2} \\
& \leq\Big(\delta+2 \cdot\abs{E_1}2^{
	-\Tilde{H}_{2,\delta}(A_1|R)_{\rho}}+
    \abs{E_2}2^{-\Tilde{H}_{2,\delta}(A_2|R)_{\rho}}+
    \abs{E_1E_2}2^{-\Tilde{H}_{2,\delta}(A_1A_2|R)_{\rho}}\Big)^{1/2}
\end{align*}
\end{corollary}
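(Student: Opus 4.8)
The plan is to read \cref{Buscemi-style} as the two-sender analogue of the Horodecki-style corollary \cref{horodecki_corola}: it should follow by specialising the two-sender decoupling theorem \cref{twodecoup}, or rather its repackaged form \cref{codingcor}, to the completely positive map
\[
\mathcal{T}^{A_1A_2\to E_1E_2}(X)\coloneqq
\frac{\abs{A_1}\abs{A_2}}{\abs{E_1}\abs{E_2}}\,
\big(\Pi_1\otimes\Pi_2\big)X\big(\Pi_1\otimes\Pi_2\big)^{\dagger}.
\]
This map is CP (it has a single Kraus operator), which is all that \cref{twodecoup} requires; it need not be trace non-increasing, and that is fine. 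Relabelling the two sender systems if necessary, we may assume $\abs{A_2}>\abs{A_1}$ and, padding if needed, that $\abs{A_2}$ is large enough for \cref{codingcor}. The density operator $\rho^{A_1A_2R}$ of the corollary is exactly the $\rho^{A_1A_2R}$ of \cref{twodecoup}.

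The first step is to identify the Choi state $\omega^{A_1'A_2'E_1E_2}=(\mathbb{I}^{A_1A_2}\otimes\mathcal{T})(\Phi^{A_1A_1'}\otimes\Phi^{A_2A_2'})$. Since $\mathcal{T}$ is a tensor product of a map on $A_1$ and a map on $A_2$, the Choi state factorises as $\omega^{A_1'A_2'E_1E_2}=\omega^{A_1'E_1}\otimes\omega^{A_2'E_2}$, and a direct computation shows that each factor $\omega^{A_i'E_i}=\tfrac{\abs{A_i}}{\abs{E_i}}(\Pi_i\otimes\mathbb{I}^{A_i'})\Phi^{A_iA_i'}(\Pi_i\otimes\mathbb{I}^{A_i'})^{\dagger}$ is in fact a \emph{pure} state, with reduced state $\omega^{E_i}=\pi^{E_i}$, the completely mixed state on the rank-$\abs{E_i}$ range of $\Pi_i$. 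Hence $\omega^E=\pi^{E_1}\otimes\pi^{E_2}$, so the decoupled target $\omega^E\otimes\rho^R$ of \cref{twodecoup} is precisely $\pi^{E_1}\otimes\pi^{E_2}\otimes\rho^R$, matching the statement.

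The core of the argument is then to evaluate the three modified R\'enyi $2$-entropies of $\omega$ appearing in \cref{codingcor}, namely $\Tilde{H}_{2,\delta}(A_1'|E)_\omega$, $\Tilde{H}_{2,\delta}(A_2'|E)_\omega$ and $\Tilde{H}_{2,\delta}(A_1'A_2'|E)_\omega$ (the register written $A_i$ inside the entropy being the Choi copy $A_i'$). Because $\omega^{E_1E_2}=\pi^{E_1}\otimes\pi^{E_2}$ is completely mixed, its $\delta$-truncation $\omega^E_\delta$ is $(\abs{E_1}\abs{E_2})^{-1}$ times a projector onto a subspace of dimension at least $(1-\delta)\abs{E_1}\abs{E_2}$; consequently conjugating any of the marginals $\omega^{A_1'E_1}$, $\omega^{A_2'E_2}$, $\omega^{A_1'A_2'E_1E_2}$ by $(\mathbb{I}\otimes\omega^E_\delta)^{-1/4}$ amounts to multiplying by at most $\abs{E_1}^{1/4}$, $\abs{E_2}^{1/4}$, $(\abs{E_1}\abs{E_2})^{1/4}$ respectively on each side, followed by an extra projection. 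Using that these three marginals are pure, hence of unit Schatten-$2$ norm, and the tensor factorisation of $\omega$, one obtains $\norm{\tomega^{A_1'E}}_2^2\le\abs{E_1}$, $\norm{\tomega^{A_2'E}}_2^2\le\abs{E_2}$ and $\norm{\tomega^{A_1'A_2'E}}_2^2\le\abs{E_1}\abs{E_2}$, i.e.\ $\Tilde{H}_{2,\delta}(A_1'|E)_\omega\ge-\log\abs{E_1}$, $\Tilde{H}_{2,\delta}(A_2'|E)_\omega\ge-\log\abs{E_2}$ and $\Tilde{H}_{2,\delta}(A_1'A_2'|E)_\omega\ge-\log(\abs{E_1}\abs{E_2})$. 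Plugging these three bounds into \cref{codingcor} turns its three exponential terms into $2\abs{E_1}2^{-\Tilde{H}_{2,\delta}(A_1|R)_\rho}$, $\abs{E_2}2^{-\Tilde{H}_{2,\delta}(A_2|R)_\rho}$ and $\abs{E_1}\abs{E_2}2^{-\Tilde{H}_{2,\delta}(A_1A_2|R)_\rho}$, which is exactly the claimed bound.

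I expect the only genuinely delicate point to be the bookkeeping around the $\delta$-truncation of the completely mixed $\omega^E$: one must verify that replacing $\omega^E$ by $\omega^E_\delta$ inside the weighting matrix can only decrease the Schatten-$2$ norms above (it inserts an additional projection, and pinching/compressing does not increase $\norm{\cdot}_2$ of a positive operator), so that the clean inequalities $\Tilde{H}_{2,\delta}(\cdot|E)_\omega\ge-\log\abs{\cdot}$ genuinely hold and no $(1-\delta)$ correction leaks into the final bound. Everything else is direct substitution into \cref{twodecoup}/\cref{codingcor}.
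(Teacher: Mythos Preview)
Your plan is correct and follows essentially the same route as the paper: specialise \cref{twodecoup} (equivalently \cref{codingcor}) to the tensored projector map, observe that the Choi state factorises with $\omega^{E}=\pi^{E_1}\otimes\pi^{E_2}$, compute the three $\lVert\tomega^{\cdot E}\rVert_2^2$ norms as $|E_1|$, $|E_2|$, $|E_1||E_2|$, and substitute. The one place where the paper is cleaner is exactly your ``delicate point'': since $\omega^E$ is already maximally mixed, the paper simply takes $\sigma^{E}=\pi^{E_1}\otimes\pi^{E_2}$ with \emph{no} truncation (zeroing out the empty set of eigenvalues is a valid choice of $\omega^E_\delta$), so that $(\sigma^E)^{-1/4}=(|E_1||E_2|)^{1/4}\mathbb{I}$ and the norms are obtained as exact equalities by a two-line computation, with no pinching argument needed; note also that the conjugating factor is always $(|E_1||E_2|)^{1/4}$, and the bound $\lVert\tomega^{A_1'E}\rVert_2^2\le|E_1|$ comes from $\lVert\omega^{A_1'E_1}\otimes\pi^{E_2}\rVert_2^2=1/|E_2|$, not from a smaller multiplier as your write-up suggests.
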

\begin{proof}
First, observe that 
\[
\Tr_{\hA_1\hA_2}\mathcal{T}(\Phi^{A_1\hA_1}\otimes \Phi^{A_2\hA_2})=
\pi^{E_1}\otimes \pi^{E_2}
\]
Then, define 
\[
\tomega^{\hA_1\hA_2E_1E_2}\coloneqq {\sigma^{E_1}}^{-1/4}\otimes 
{\sigma^{E_2}}^{-1/4}\cdot \mathcal{T}(\Phi^{A_1\hA_1}\otimes 
\Phi^{A_2\hA_2})
\]
where $\sigma^{E_1}\coloneqq \pi^{E_1}$ and $\sigma^{E_2}\coloneqq 
\pi^{E_2}$. Then note that:
\begin{enumerate}
\item $\tomega^{E_1}=\frac{\mathbb{I}^{E_1}}{\sqrt{\abs{E_1}}}$ and 
$\tomega^{E_2}=\frac{\mathbb{I}^{E_2}}{\sqrt{\abs{E_2}}}$
\item $\tomega^{\hA_1E_1E_2}=\tomega^{\hA_1E_1}\otimes \tomega^{E_2}$ 
and $\tomega^{\hA_2E_1E_2}=\tomega^{\hA_2E_2}\otimes \tomega^{E_1}$.
\end{enumerate}
It is easy to see that $\Tr[{\tomega^{\hA_1E_1}}^2]=\abs{E_1}$ since 
\begin{align}
\Tr[{\tomega^{\hA_1E_1}}^2] &= \Big(\sqrt{\abs{E_1}}\cdot 
\frac{\abs{A_1}}{\abs{E_1}}\Big)^2\cdot 
\Tr[\Pi_1\ket{\Phi}^{A_1\hA_1}\bra{\Phi}\Pi_1
	\ket{\Phi}^{A_1\hA_1}\bra{\Phi}\Pi_1] \\
&= \frac{\abs{A_1}^2}{\abs{E_1}}\Big(\bra{\Phi}\Pi_1\ket{\Phi}\Big)^2 \\
&= \frac{1}{\abs{E_1}} \Big(\Tr[\Pi_1]\Big)^2\\
&= \abs{E_1}
\end{align}
Similarly one can show that $\Tr[{\tomega^{\hA_2E_2}}^2]=\abs{E_2}$. We 
conclude by noting that 
$\norm{\tomega^{\hA_1E_1E_2}}_2^2=
\norm{\tomega^{\hA_1E_1}}_2^2\cdot \norm{\tomega^E_2}_2^2=\abs{E_1}$ and 
similarly $\norm{\tomega^{\hA_2E_1E_2}}^2_2=\abs{E_2}$. This completes 
the proof.
\end{proof}

To get a channel coding theorem we will use \cref{Buscemi-style}, but with
some overloading of notation. Consider the quantum multiple access channel
$\mathcal{N}^{A'B'\rightarrow C}$ with isometric extension 
$\mathcal{U}^{A'B'\rightarrow CE}$ where we use the register $E$ to mean 
the environment. Consider the controlling state 
$\ket{\omega}^{ABCE}\coloneqq 
\mathcal{U}^{A'B'\rightarrow CE}\ket{\Omega}^{AA'}\ket{\Delta}^{BB'}$, 
where $\ket{\Omega}^{AA'}$ and $\ket{\Delta}^{BB'}$ are arbitrary pure 
states. Then the following theorem holds:
\begin{theorem}
Given $\delta$ as in \cref{gen_decoup} and a positive 
$\epsilon$, the rates $(m_{\textsc{alice}},n_{\textsc{bob}})$ for 
entanglement generation over the channel $\mathcal{N}^{A'B'\rightarrow E}$
are achievable whenever
\begin{align*}
m_{\textsc{alice}} &< \Tilde{H}_{2,\delta}(A|E)_{\omega}+\log\epsilon \\
n_{\textsc{bob}} &< \Tilde{H}_{2,\delta}(B|E)_{\omega}+\log\epsilon \\
m_{\textsc{alice}}+n_{\textsc{bob}} 
&< \Tilde{H}_{2,\delta}(AB|E)_{\omega}+\log\epsilon
\end{align*}
with error $\sqrt{\delta+6\epsilon}$.
\end{theorem}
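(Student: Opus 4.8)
The plan is to run the decoupling-based entanglement-generation argument, substituting \cref{Buscemi-style} for its point-to-point analogue. First I would fix the protocol: Alice prepares $\ket{\Omega}^{AA'}$ and transmits $A'$, Bob prepares $\ket{\Delta}^{BB'}$ and transmits $B'$, so that after the Stinespring dilation the global pure state is $\ket{\omega}^{ABCE}=\mathcal{U}^{A'B'\to CE}\ket{\Omega}^{AA'}\ket{\Delta}^{BB'}$, with $C$ the receiver register and $E$ the environment. Choose rank-$2^{m_{\textsc{alice}}}$ and rank-$2^{n_{\textsc{bob}}}$ projectors $\Pi_1^{A\to R_1}$, $\Pi_2^{B\to R_2}$ onto subspaces $R_1\subseteq A$ and $R_2\subseteq B$. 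Alice applies $\tfrac{|A|}{|R_1|}\,\Pi_1^{A\to R_1}U^{A}$ to her register $A$ and Bob applies $\tfrac{|B|}{|R_2|}\,\Pi_2^{B\to R_2}U^{B}$ to his register $B$, where $U^{A},U^{B}$ are independent Haar-random unitaries; as stressed in the discussion of entanglement generation above, these unitaries act on the \emph{purifying} registers of the channel output, which are correlated through $E$.

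Next I would invoke \cref{Buscemi-style} with the substitution $\rho^{A_1A_2R}:=\omega^{ABE}$ --- so that $A_1=A$, $A_2=B$, and the reference $R$ is the environment $E$ --- and with $\mathcal{T}^{AB\to R_1R_2}:=\tfrac{|A|}{|R_1|}\cdot\tfrac{|B|}{|R_2|}\cdot\Pi_1\otimes\Pi_2$. This immediately gives
\begin{align*}
&\int\norm{(\mathcal{T}\otimes\mathbb{I}^{E})\big((U^{A}\otimes U^{B})\cdot\omega^{ABE}\big)-\pi^{R_1}\otimes\pi^{R_2}\otimes\omega^{E}}_1\,dU^{A}\,dU^{B} \\
&\qquad\leq\Big(\delta+2|R_1|\,2^{-\tilde{H}_{2,\delta}(A|E)_{\omega}}+|R_2|\,2^{-\tilde{H}_{2,\delta}(B|E)_{\omega}}+|R_1R_2|\,2^{-\tilde{H}_{2,\delta}(AB|E)_{\omega}}\Big)^{1/2}.
\end{align*}
Substituting $|R_1|=2^{m_{\textsc{alice}}}$ and $|R_2|=2^{n_{\textsc{bob}}}$, the three hypothesised inequalities force each exponential term below a small multiple of $\epsilon$, so a short count bounds the quantity under the square root by $\delta+6\epsilon$. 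An averaging/Markov argument then fixes a single pair $(U^{A},U^{B})$ attaining this bound, and the same estimate also controls $|\Tr[\mathcal{T}(\omega^{AB})]-1|$, so renormalising the (sub-normalised) projected state costs nothing to leading order.

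Finally I would close with Uhlmann's theorem (Fact~\ref{fact:Uhlmann}), exactly as in the point-to-point case. The renormalised global pure state $\ket{\theta}^{R_1R_2CE}$ purifies the reduced state on $R_1R_2E$, which is $\sqrt{\delta+6\epsilon}$-close in trace norm to $\pi^{R_1}\otimes\pi^{R_2}\otimes\omega^{E}$; the latter is purified by $\ket{\Phi}^{R_1\hat{R}_1}\ket{\Phi}^{R_2\hat{R}_2}\ket{\lambda}^{FE}$, where $\ket{\lambda}^{FE}$ is a purification of $\omega^{E}$. Uhlmann then produces a decoding isometry $V^{C\to\hat{R}_1\hat{R}_2F}$ for Charlie such that $(\mathbb{I}^{R_1R_2E}\otimes V)\ket{\theta}$ is close to $\ket{\Phi}^{R_1\hat{R}_1}\ket{\Phi}^{R_2\hat{R}_2}\ket{\lambda}^{FE}$; tracing out $EF$ leaves the state on $R_1\hat{R}_1R_2\hat{R}_2$ within $\sqrt{\delta+6\epsilon}$ in trace norm of $\Phi^{R_1\hat{R}_1}\otimes\Phi^{R_2\hat{R}_2}$, i.e.\ two maximally entangled states, one shared by Alice and the decoded register $\hat{R}_1$ and one by Bob and $\hat{R}_2$. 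This is precisely entanglement generation at rates $(m_{\textsc{alice}},n_{\textsc{bob}})$.

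The bulk of the work is already contained in \cref{Buscemi-style} (hence in \cref{twodecoup}), so the remaining steps are essentially bookkeeping; the only point deserving genuine care is conceptual rather than computational. In the entanglement-transmission applications the Haar unitaries act on input registers sitting in a tensor product, whereas here they act on the purifying registers $A,B$ of $\omega^{ABE}$, which are \emph{correlated through $E$}; one must therefore apply \cref{Buscemi-style} with $R$ literally identified with $E$ and with no tensor-product structure assumed on the ``input'' $\omega^{AB}$ --- the theorem is general enough that this causes no difficulty. The remaining fussy bookkeeping, which is where the concrete error $\sqrt{\delta+6\epsilon}$ gets pinned down, is the propagation of the additive $\log\epsilon$ terms and of the normalisation factor $\Tr[\mathcal{T}(\omega^{AB})]$ through the Uhlmann step.
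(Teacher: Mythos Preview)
Your proposal is correct and follows essentially the same route as the paper: relabel the inputs of \cref{Buscemi-style} so that $(A_1,A_2,R)\mapsto(A,B,E)$ and $(E_1,E_2)\mapsto(R_1,R_2)$ with $|R_1|=2^{m_{\textsc{alice}}}$, $|R_2|=2^{n_{\textsc{bob}}}$, read off the decoupling bound, fix a good pair of unitaries by averaging, and invoke Uhlmann on the purifying register $C$. If anything you are more explicit than the paper, which compresses the derandomisation and Uhlmann steps into a single sentence and does not spell out the normalisation issue you flag.
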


\begin{proof}
We simply relabel terms from \cref{Buscemi-style}:
\begin{enumerate}
    \item Registers: $A\gets A_1$, $B\gets B_1$ and $E\gets R$.
    \item Registers: $R_1\gets E_1$ and $R_2\gets E_2$.
    \item State: $\omega^{ABE}\gets \rho^{A_1A_2R}$.
    \item $\abs{R_1}=2^{m_{\textsc{alice}}}$ and 
	    $\abs{R_2}=2^{n_{\textsc{bob}}}$.
\end{enumerate}
Define 
\[
\mathcal{T}^{AB\rightarrow R_1R_2}\coloneqq 
\frac{\abs{A}}{\abs{R_1}}\cdot 
\frac{\abs{B}}{\abs{R_2}}\cdot \Pi_1^{A\rightarrow R_1}\otimes 
	\Pi_2^{B\rightarrow R_2}
\]
Then applying \cref{Buscemi-style} we see that:

\begin{align*}
&\int\norm{\mathcal{T}(U^{A}_{\textsc{rand}}\otimes 
U^{B}_{\textsc{rand}}\cdot \sigma^{ABE})-\pi^{R_1}\otimes 
\pi^{R_2}\otimes \rho^{E}}_1dU^{A}dU^{B} \\
& \leq\Big(\delta+2 \cdot \abs{R_1}2^{-\Tilde{H}_{2,\delta}(A|E)}+
\abs{R_2}2^{-\Tilde{H}_{2,\delta}(B|E)}+
\abs{R_1R_2}{2,\delta}^{-\Tilde{H}_2(AB|E)}\Big)^{1/2}
\end{align*}
Finally, the above equation implies that there exist fixed unitaries 
$\mathcal{U}^{A}_{\textsc{fixed}}$ and $\mathcal{U}^{B}_{\textsc{fixed}}$ 
such that the above inequality still holds. To conclude, by using the 
usual argument of applying Uhlmann's theorem to the purifying register 
$C$ and requiring that every term inside the curly braces be $<\epsilon$ 
we conclude the proof.
\end{proof}

\begin{remark}
We can emulate the $\mathcal{T}$ operation in the usual way by picking 
unitaries independently from using two $2$-designs instead of two Haar 
random unitaries. This reduces the required amount of shared randomness 
necessary to implement the $\mathcal{T}$ operation from infinite to the 
log of product of the cardinality of the designs. Classical communication 
is necessary however so that Alice and Bob can let Charlie know which 
code they are using.
\end{remark}

\end{document}